\newtheorem{theorem}{Theorem}
\newtheorem{corollary}[theorem]{Corollary}
\newtheorem{lemma}[theorem]{Lemma}
\newtheorem{definition}[theorem]{Definition}
\newtheorem{observation}[theorem]{Observation}
\newtheorem{remark}[theorem]{Remark}
\crefname{transf}{transformation}{transformations}
\Crefname{transf}{Transformation}{Transformations}
\newtheorem{transf}{Transformation}
\newcommand{\xintprop}{X-intersection Property\xspace}
\newcommand{\yintprop}{Y-intersection Property\xspace}
\newcommand{\extprop}{Extension Property\xspace}
\newcommand{\alex}[1]{\textcolor{red}{#1}} 
\newcommand{\additionalresult}[1]{}
\let\save@mathaccent\mathaccent
\newcommand*\if@single[3]{%
  \setbox0\hbox{${\mathaccent"0362{#1}}^H$}%
  \setbox2\hbox{${\mathaccent"0362{\kern0pt#1}}^H$}%
  \ifdim\ht0=\ht2 #3\else #2\fi
  }
\newcommand*\rel@kern[1]{\kern#1\dimexpr\macc@kerna}
\newcommand*\widebar[1]{\@ifnextchar^{{\wide@bar{#1}{0}}}{\wide@bar{#1}{1}}}
\newcommand*\wide@bar[2]{\if@single{#1}{\wide@bar@{#1}{#2}{1}}{\wide@bar@{#1}{#2}{2}}}
\newcommand*\wide@bar@[3]{%
  \begingroup
  \def\mathaccent##1##2{%
    \let\mathaccent\save@mathaccent
    \if#32 \let\macc@nucleus\first@char \fi
    \setbox\z@\hbox{$\macc@style{\macc@nucleus}_{}$}%
    \setbox\tw@\hbox{$\macc@style{\macc@nucleus}{}_{}$}%
    \dimen@\wd\tw@
    \advance\dimen@-\wd\z@
    \divide\dimen@ 3
    \@tempdima\wd\tw@
    \advance\@tempdima-\scriptspace
    \divide\@tempdima 10
    \advance\dimen@-\@tempdima
    \ifdim\dimen@>\z@ \dimen@0pt\fi
    \rel@kern{0.6}\kern-\dimen@
    \if#31
      \overline{\rel@kern{-0.6}\kern\dimen@\macc@nucleus\rel@kern{0.4}\kern\dimen@}%
      \advance\dimen@0.4\dimexpr\macc@kerna
      \let\final@kern#2%
      \ifdim\dimen@<\z@ \let\final@kern1\fi
      \if\final@kern1 \kern-\dimen@\fi
    \else
      \overline{\rel@kern{-0.6}\kern\dimen@#1}%
    \fi
  }%
  \macc@depth\@ne
  \let\math@bgroup\@empty \let\math@egroup\macc@set@skewchar
  \mathsurround\z@ \frozen@everymath{\mathgroup\macc@group\relax}%
  \macc@set@skewchar\relax
  \let\mathaccentV\macc@nested@a
  \if#31
    \macc@nested@a\relax111{#1}%
  \else
    \def\gobble@till@marker##1\endmarker{}%
    \futurelet\first@char\gobble@till@marker#1\endmarker
    \ifcat\noexpand\first@char A\else
      \def\first@char{}%
    \fi
    \macc@nested@a\relax111{\first@char}%
  \fi
  \endgroup
}
\newcommand{\macronode}[1]{\mathcal{M}_{#1}}
\title{Genome assembly, from practice to theory:\\ safe, complete and \emph{linear-time}}
\author[1]{Massimo Cairo}
\author[2]{Romeo Rizzi}
\author[1]{Alexandru~I.~Tomescu}
\author[3,2]{Elia~C.~Zirondelli}
\affil[1]{Department of Computer Science, University of Helsinki, Finland, \texttt{alexandru.tomescu@helsinki.fi}}
\affil[2]{Department of Computer Science, University of Verona, Italy, \texttt{romeo.rizzi@univr.it}}
\affil[3]{Department of Mathematics, University of Trento, Italy, \texttt{eliacarlo.zirondelli@unitn.it}}
\date{}
\begin{document}

\maketitle

\begin{abstract}
Genome assembly asks to reconstruct an unknown string from many shorter substrings of it. Even though it is one of the key problems in Bioinformatics,
it is generally lacking major theoretical advances. Its hardness stems both from practical issues (size and errors of real data), and from the fact that problem formulations inherently admit multiple solutions. Given these, at their core, most state-of-the-art assemblers are based on finding non-branching paths (\emph{unitigs}) in an assembly graph. While such paths constitute only partial assemblies, they are likely to be correct. More precisely, if one defines a genome assembly solution as a \emph{closed arc-covering walk} of the graph, then unitigs appear in all solutions, being thus \emph{safe} partial solutions.

Until recently, it was open what are \emph{all} the safe walks of an assembly graph. Tomescu and Medvedev (RECOMB 2016) characterized all such safe walks (\emph{omnitigs}), thus giving the first safe and \emph{complete} genome assembly algorithm. Even though omnitig finding was later improved to quadratic time by Cairo et al.~(ACM Trans.~Algorithms 2019), it remained open whether the crucial linear-time feature of finding unitigs can be attained with omnitigs. That is, whether \emph{all} the strings that can be correctly assembled from a graph can be obtained in a time feasible for being implemented in practical genome assemblers.

We answer this question affirmatively, by describing a surprising $O(m)$-time algorithm to \emph{identify} all maximal omnitigs of a graph with $n$ nodes and $m$ arcs, notwithstanding the existence of families of graphs with $\Theta(mn)$ total maximal omnitig size. This is based on the discovery of a family of walks (\emph{macrotigs}) with the property that all the non-trivial omnitigs are univocal extensions of subwalks of a macrotig.
This has two consequences: 
\begin{enumerate}
    \item A \emph{linear-time output-sensitive} algorithm enumerating all maximal omnitigs.
    \item A \emph{compact $O(m)$ representation} of all maximal omnitigs, which allows, e.g., for $O(m)$-time computation of various statistics on them.
\end{enumerate}
Our results close a long-standing theoretical question inspired by practical genome assemblers, originating with the use of unitigs in 1995. They are also crucial in covering problems incorporating additional practical constraints. Thus, we envision our results to be at the core of a reverse transfer from theory to practical and \emph{complete} genome assembly programs, as has been the case for other key Bioinformatics problems.
\end{abstract}

\thispagestyle{empty}

\newpage
\clearpage
\setcounter{page}{1}

\section{Introduction}

\paragraph{Theoretical and practical background of genome assembly.} Genome assembly is one of the flagship problems in Bioinformatics, along with other problems originating in--or highly motivated by--this field, such as edit distance computation, reconstructing and comparing phylogenetic trees, text indexing and compression. In genome assembly, we are given a collection of strings (or \emph{reads}) and we need to reconstruct the unknown string (the genome) from which they originate. This is motivated by sequencing technologies that are able to read either ``short'' strings (100-250 length, Illumina technology), or ``long'' strings (10.000-50.000 length, Pacific Biosciences or Oxford Nanopore technologies) in huge amounts from the genomic sequence(s) in a sample. For example, the SARS-CoV-2 genome was obtained in~\cite{Wu:2020aa} from short reads using the MEGAHIT assembler~\cite{li2015megahit}. 

Other leading Bioinformatics problems have seen significant theoretical progress in major Computer Science venues, culminating (just to name a few) with both positive results, see e.g.~\cite{10.1145/3313276.3316390,WilliamsWWY15} for phylogeny problems, \cite{DBLP:conf/focs/FerraginaM00,DBLP:conf/stoc/Belazzougui14,DBLP:conf/stoc/KempaK19} for text indexing, \cite{DBLP:conf/soda/FerraginaNV09,DBLP:conf/soda/BelazzouguiP16,DBLP:conf/stoc/KempaP18} for text compression, and negative results, see e.g.~\cite{DBLP:conf/stoc/BackursI15,DBLP:conf/focs/AbboudBW15,DBLP:conf/focs/BackursI16,DBLP:conf/icalp/EquiGMT19} for string matching problems. However, the genome assembly problem is generally lacking major theoretical advances. 

One reason for this stems from practice: the huge amount of data (e.g.~the 3.1 Billion long human genome is read 50 times over) which impedes slower than linear-time algorithms, errors of the sequencing technologies (up to 15\% for long reads), and various biases when reading certain genomic regions~\cite{nagarajan2013sequence}. 
Another reason stems from theory: historically, finding an optimal genome assembly solution is considered NP-hard under several formulations~\cite{peltola83,K92,KM95,MB09,nagarajan2009parametric,SequencingHybridizationLysov1988,narzisi14}, but, more fundamentally, even if one outputs a 3.1 Billion characters long string, this is likely incorrect, since problem formulations inherently admit a large number of solutions of such length~\cite{kingsford10}.

Given all these setbacks, most state-of-the-art assemblers, including e.g.~MEGAHIT~\cite{li2015megahit} (for short reads), or wtdbg2~\cite{Ruan:2020aa} (for long reads), generally employ a very simple and \emph{linear-time} strategy, dating back to 1995~\cite{KM95}. They start by building an assembly graph encoding the overlaps of the reads, such as a \emph{de Bruijn graph} \cite{Pevzner1989} or an \emph{overlap graph}~\cite{M05} (graphs are directed in this paper). After some simplifications to this graph to remove practical artifacts such as errors, at their core they find strings labeling paths whose internal nodes have in-degree and out-degree equal to 1 (called \emph{unitigs}), approach dating back to 1995~\cite{KM95}. That is, they do not output \emph{entire} genome assemblies, but only shorter strings that are likely to be present in the sequenced genome, since unitigs do not branch at internal nodes.

\smallskip


The issue of multiple solutions to a problem has deep roots in Bioinformatics, but is in fact common to many real-world problems from other fields. In such problems, we seek ultimate knowledge of an unknown object (e.g., a genome) but have access only to partial observations from it (e.g., reads). A standard paradigm is to apply Occam's razor principle which favors the simplest model explaining the data. As such, the reconstruction problem is cast in terms of an optimization problem, to be addressed by various mathematical, computational and technological paradigms. 

While this approach has been extremely successful in Bioinformatics, it is not always robust. First, the optimization problem might admit several optimal solutions, and thus several interpretations of the observed data. A standard way to tackle this is to \emph{enumerate} all solutions~\cite{Grossi2016,Kiyomi2016}. Second, the problem formulation might be inaccurate, or the data might be incomplete, and the true solution might be a sub-optimal one. One could then enumerate all the \emph{first $k$-best} solutions to it~\cite{DBLP:journals/eatcs/Eppstein15,DBLP:reference/algo/Eppstein15}, hoping that the true solution is among such first $k$ ones. The motivation of such enumeration algorithms is that e.g.~later ``one can apply more sophisticated quality criteria, wait for data to become available to choose among them, or present them all to human decision-makers''~\cite{DBLP:journals/eatcs/Eppstein15}. However, both approaches do not scale when the number of solutions is large, and are thus unfeasible in genome assembly.

\paragraph{Safe and complete algorithms: A theoretical framing of practical genome assembly.} With the aim of enhancing the widely-used practical approach of assembling just unitigs---as those walks considered to be present in any possible assembly solution---a result in a major Bioinformatics venue~\cite{tomescu2017safe} asked \emph{what is the ``limit'' of the correctly reconstructible information from an assembly graph}. Moreover, is all such reconstructible information still obtainable in \emph{linear time}, as in the case of the popular unitigs? Variants of this question also appeared in~\cite{journals/bioinformatics/Guenoche92,boisvert2010ray,nagarajan2009parametric,DBLP:journals/bioinformatics/ShomoronyKCT16,DBLP:journals/bmcbi/LamKT14,BBT13}, while other works already considered simple linear-time generalizations of unitigs~\cite{PTW01,MGMB07,jacksonthesis,kingsford10}, without knowing if the ``assembly limit'' is reached.

To make this question precise,~\cite{tomescu2017safe} introduced the following \emph{safe and complete} framework. Given a notion of solution to a problem (e.g.~a type of walk in a graph), a partial solution (e.g.~some shorter walk in the graph) is called \emph{safe} if it appears (e.g.~is a subwalk) in all solutions. An algorithm reporting only safe partial solutions is called a \emph{safe algorithm}. A safe algorithm reporting \emph{all} safe partial solutions is called \emph{safe and complete}. A safe and complete algorithm outputs all and only what is likely part of the unknown object to be reconstructed, \emph{synthesizing all solutions from the point of view of correctness.}

Safety generalizes the existing notion of \emph{persistency}: a \emph{single} node or edge was called \emph{persistent} if it appears in all solutions~\cite{doi:10.1137/0603052,Costa1994143,DBLP:journals/mmor/Cechlarova98}, for example persistent edges for maximum bipartite matchings~\cite{Costa1994143}. However, it also has roots in other Bioinformatics works~\cite{Vingron01071990,Chao01081993,Friemann:1992aa,Zuker:1991aa}, which considered the aligned symbols---the \emph{reliable regions}---appearing in all optimal (and sub-optimal) alignments of two strings. The reliable regions of an alignment of proteins were shown in \cite{Vingron01071990} to match in a significant proportion the true ones determined experimentally.

There are many theoretical formulations of genome assembly as an optimization problem, e.g.~a shortest common superstring of all the reads~\cite{peltola83,K92,KM95}, or some type of shortest walk covering all nodes or arcs of the assembly graph~\cite{PTW01,MB09,MGMB07,kapun13b,SequencingHybridizationLysov1988,narzisi14,nagarajan2009parametric}. However, it is widely acknowledged~\cite{nagarajan2009parametric,narzisi14,medvedev2019modeling,nagarajan2013sequence,DBLP:books/cu/MBCT2015,kingsford2010assembly} that, apart from some being NP-hard, these formulations are lacking in several aspects, for example they collapse repeated regions of a genome. At present, given the complexity of the problem, there is no definitive notion of a ``good'' genome assembly solution. Therefore,~\cite{tomescu2017safe} considered as genome assembly solution \emph{any} closed arc-covering walk of a graph, where \emph{arc-covering} means that it passes through each arc \emph{at least} once. The main benefit of considering \emph{any} arc-covering walk is that safe walks for them are safe also for any possible restriction such covering walks (e.g.~by some additional optimality criterion\footnote{For example, closed arc-covering walks are a common relaxation of the fundamental notions of closed Eulerian walk (we now pass through each arc \emph{at least once}), and of closed Chinese postman walk (i.e.~a closed arc-covering walk \emph{of minimum length})~\cite{guan1962graphic}, which were mentioned in \cite{nagarajan2009parametric} as unsatisfactory models of genome assembly.}). Put otherwise, safe walks for \emph{all} arc-covering walks are more likely to be correct than safe walks for some peculiar type of arc-covering walks.

\begin{figure}[t!]
	\centering
	\vspace{-.5cm}
	\includegraphics[scale=0.20]{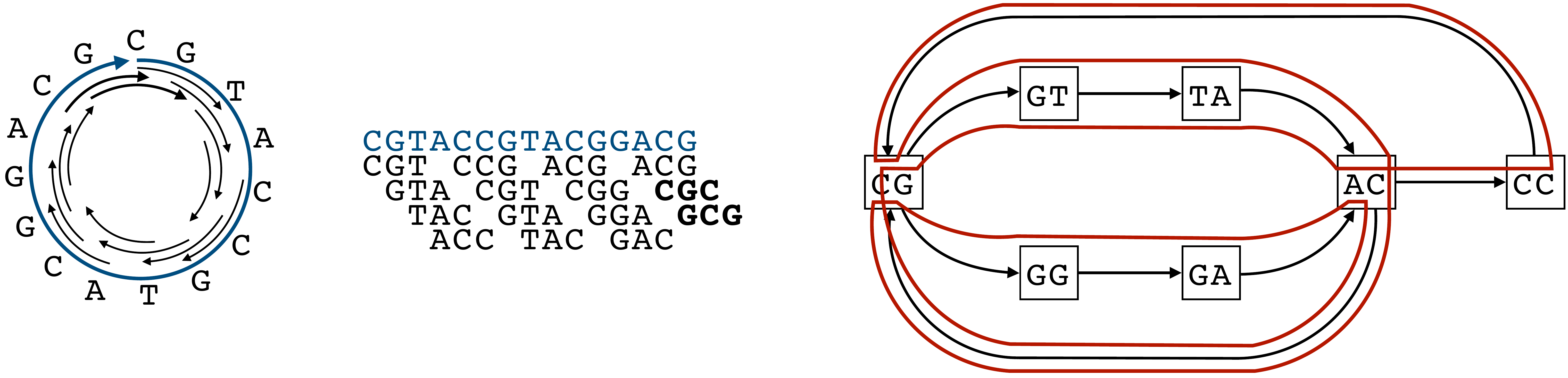}
	\caption{\small
		Left: A circular string and a set strings of length 3, drawn as black arrows. Middle: The same circular string shown as linearized (in blue) and the same strings (in black). The two strings in bold overlap the beginning and the end of the circular string. Right: The de Bruijn graph of order $k=3$ built from the input strings. The set of input string corresponds to the set of arcs. The closed arc-covering walk spelling the circular string on the left is shown in red. However, the graph admits several closed arc-covering walks (also an Eulerian one), all spelling thus a circular string with the same set of $k$-mers as the circular string on the left.	\label{fig:db-example-0}}
\vspace{-.3cm}
\end{figure}

Moreover, closed arc-covering walks in widely-used de Bruijn graphs are in one-to-one correspondence with circular strings having the same set of $k$-mers of the reads (a \emph{$k$-mer} of a set of strings is any length-$k$ string appearing as a substring of some string in the set). More precisely, consider the following setting mentioned in~\cite{tomescu2017safe}. A \emph{de Bruijn graph of order $k$} has the set of all \emph{$(k-1)$-mers} of the reads as the set of nodes, and the set of all $k$-mers of the reads as arcs (from their length-$(k-1)$ prefix to their length-$(k-1)$ suffix). The most basic notion of genome assembly solution (for circular genomes) are circular strings having the same set of $k$-mers as the reads, which correspond exactly to closed arc-covering walks of the de Bruijn graph of order $k$. See \Cref{fig:db-example-0,sec:bioinfo-motivation} for a more detailed description.

\paragraph{Prior results on safety in closed arc-covering walks.} It is immediate to see that unitigs are safe walks for closed arc-covering walks. A first safe generalization of unitigs consisted of those paths whose internal nodes have only out-degree equal to 1 (with no restriction on their in-degree)~\cite{PTW01}. Further, these safe paths have been generalized in~\cite{MGMB07,jacksonthesis,kingsford10} to those partitionable into a prefix whose nodes have in-degree equal to 1, and a suffix whose nodes have out-degree equal to 1. \emph{All} safe walks for closed arc-covering walks were characterized by~\cite{tomescu2017safe,TomescuMedvedev} with the notion of \emph{omnitigs}, see \Cref{def:omnitig,fig:omnitig}. This lead to the first \emph{safe and complete} genome assembly algorithm (obtained thus 20 years after unitigs were first considered), outputting all \emph{maximal} omnitigs in polynomial time (maximal omnitigs are those which are not sub-walks of other omnitigs).

\begin{definition}[Omnitig]
\label{def:omnitig}
A walk $W = e_0\dots e_\ell$ is an \emph{omnitig} if, for all $1 \leq i \leq j \leq \ell$, there is no non-empty path (\emph{forbidden path}) from the tail of $e_j$ to the head of $e_{i-1}$, with first arc different from $e_j$ and last arc different from $e_{i-1}$.
\end{definition}

\begin{figure}[h!]
	\centering
	\includegraphics{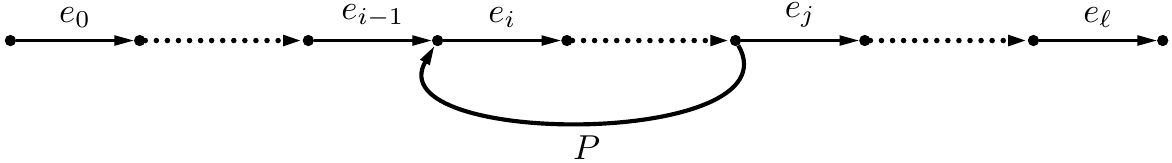}
	\caption{\small
		The walk $e_0\dots e_\ell$ is \emph{not} an omnitig because there is a non-empty path forbidden path $P$.
		}
	\label{fig:omnitig}
\end{figure}

Furthermore, through experiments on ``perfect'' human read datasets,~\cite{tomescu2017safe} also showed that strings labeling omnitigs are about 60\% longer on average than unitigs, and contain about 60\% more biological content on average. Thus, once other issues of real data (e.g.~errors) are added to the problem formulation, omnitigs (and the safe walks for such extended models) have the potential to significantly improve the quality of genome assembly results. Nevertheless, for this to be possible, one first needs the best possible results for omnitigs (given e.g.~the sheer size of the read datasets), and a full comprehension of them, otherwise, such extensions are hard to solve efficiently.

Cairo et al.~\cite{DBLP:journals/talg/CairoMART19} recently proved that the length of all maximal omnitigs of any graph with $n$ nodes and $m$ arcs is $O(nm)$, and proposed an $O(nm)$-time algorithm enumerating all maximal omnitigs. This was also proven to be optimal, in the sense that they constructed families of graphs where the total length of all maximal omnitigs is $\Theta(nm)$. However, it was left open if it is necessary to pay $O(nm)$ even when the total length of the output is smaller. Moreover, that algorithm cannot break this barrier, because e.g.~$O(m)$-time traversals have to be done for $O(n)$ cases.

This theoretical question is crucial also from the practical point of view: assembly graphs have the number of nodes and arcs in the order of millions, and yet the total length of the maximal omnitigs is almost linear in the size of the graph. For example, the compressed (see \Cref{def:compressed-graph}) de Bruijn graph of human chromosome 10 (length 135 million) has 467 thousand arcs~\cite[Table~1]{DBLP:journals/talg/CairoMART19}, and the length of all maximal omnitigs (i.e.~their total number of arcs, not their total string length) is 893 thousand. Moreover, even though this chromosome is only about 4\% of the full human genome, the running time of the \emph{quadratic} algorithm of~\cite{DBLP:journals/talg/CairoMART19} on its compressed de Bruijn graph is about 30 minutes. Both of these facts imply that a linear-time output sensitive enumeration algorithm has also a big potential for practical impact.

\paragraph{Our results.} Our main result is an $O(m)$-size representation of all maximal omnitigs\footnote{Note that the total length of the maximal omnitigs is at least $m$, since every arc is an omnitig.}, based on a careful structural decomposition of the omnitigs of a graph. This is surprising, given that there are families of graphs with $\Theta(nm)$ total length of maximal omnitigs~\cite{DBLP:journals/talg/CairoMART19}.

\begin{theorem}
\label{thm:macrotigs-main}
Given a strongly connected graph $G$ with $n$ nodes and $m$ arcs, there exists a $O(m)$-size representation of all maximal omnitigs, consisting of a set $\mathcal{M}$ of walks (\emph{maximal macrotigs}) of total length $O(n)$ and a set $\mathcal{F}$ of arcs, such that every maximal omnitig is the univocal extension\footnote{The \emph{univocal extension} $U(W)$ of a walk $W$ is obtained by appending to $W$ the longest path whose nodes (except for the last one) have out-degree $1$, and prepending to $W$ the longest path whose nodes (except for the first one) have in-degree $1$; see~\Cref{sec:overview} for the formal definition.} of either a subwalk of a walk in $\mathcal{M}$, or of an arc in $\mathcal{F}$. 

Moreover, $\mathcal{M}$, $\mathcal{F}$, and the endpoints of macrotig subwalks univocally extending to maximal omnitigs can be computed in time $O(m)$.
\end{theorem}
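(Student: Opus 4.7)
The plan is to reduce the enumeration of all maximal omnitigs to two small ingredients: (i) a list $\mathcal{F}$ of arcs whose univocal extensions already capture the ``trivial'' maximal omnitigs, and (ii) a small collection $\mathcal{M}$ of maximal macrotigs, whose subwalks, after univocal extension, recover every remaining maximal omnitig. The guiding intuition is that univocal extension absorbs all parts of an omnitig that are forced by unique in- or out-neighborhoods, so the only information that needs to be stored explicitly concerns the genuinely branching portions of each omnitig.

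I would proceed by first analyzing how \Cref{def:omnitig} behaves at a branching node (one with in-degree $\geq 2$ or out-degree $\geq 2$): the forbidden-path condition rules out certain pairs of incoming/outgoing arc choices and, in fact, typically pins down the next arc uniquely given the previous one. This local analysis yields a canonical \emph{forced continuation} operation at every branching node, and chaining these forced continuations as far as possible in both directions gives the maximal macrotigs. Every non-trivial maximal omnitig then arises as the univocal extension of a subwalk of such a macrotig; conversely, trivial maximal omnitigs, which are univocal extensions of a single arc, are collected in $\mathcal{F}$ with one representative arc per equivalence class.

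The central technical step is to prove that the total length of $\mathcal{M}$ is $O(n)$ — surprising, given that the sum of maximal omnitig lengths is $\Theta(nm)$ in the worst case. For this I would use a charging argument: each interior arc of a maximal macrotig is charged to a distinct branching node that serves as its anchor, and a case analysis of the forbidden-path condition shows that each branching node anchors only a constant number of macrotig arcs. Since there are at most $n$ branching nodes, the $O(n)$ bound follows. The main obstacle will be this charging step: one must show that forced continuations emerging from different branching nodes do not overlap in ways that over-charge a single node. I expect this to require introducing an intermediate structure of \emph{macronodes} (as hinted by the macros $\macronode{}$ and $\exmacronode{}$) that decouples the ``left'' forced continuation from the ``right'' one at each branching node, together with an invariant that controls how many macronodes a node can participate in.

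For the $O(m)$-time algorithm, I would: (a) compute the compressed graph and the univocal extensions in linear time via standard in-/out-degree scans, which also produces $\mathcal{F}$; (b) identify the branching nodes and, starting from each, build its forced continuation using adjacency-list lookups at $O(1)$ amortized cost per macrotig arc, for a total of $O(n)$ work on macrotigs plus $O(m)$ for the initial scans; (c) for every maximal omnitig, record the endpoints of its corresponding macrotig subwalk (or its arc in $\mathcal{F}$) via a table indexed by arcs, so that the mapping can be recovered in constant time per omnitig. Since the total number of macrotig arcs is $O(n)$ and every remaining step is linear in $m$, the total running time is $O(m)$.
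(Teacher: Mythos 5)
Your high-level architecture matches the paper's (compression, macronodes, forced ``omnitig tracks'' glued into macrotigs, a set $\mathcal{F}$ of arcs for the remaining maximal omnitigs), but two steps that you treat as routine are exactly where the real work lies, and as written they fail. First, the ``forced continuation'' cannot be decided by ``adjacency-list lookups at $O(1)$ amortized cost'': whether $fWg$ is an omnitig is a \emph{global} condition, since a forbidden path may end at any node of $fW$, and this is precisely why all previous algorithms paid an $\Omega(m)$ traversal per extension and ran in $O(nm)$ time. The paper's way around this is the \extprop (\Cref{thm:strong-ita-result}), which reduces the test to a single query ``does $h(g)$ reach $h(f)$ in $G\smallsetminus f$?'', and this is answerable in $O(1)$ only because the query target coincides with the head of the removed arc, so the Georgiadis et al.\ oracle (\Cref{cor:italiano_result}) applies; together with the constant-degree transformation this gives $O(1)$ per candidate arc. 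Your proposal contains no substitute for this ingredient, so neither the $O(m)$ construction of $\mathcal{M}$ nor the $O(m)$ identification of the macrotig subwalk endpoints is justified.

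Second, your charging argument for the $O(n)$ bound presupposes that chaining forced continuations terminates and never revisits an arc, i.e.\ that a macrotig cannot wrap around a cycle; nothing in the local forbidden-path analysis at a single branching node rules this out, and if an arc could recur the claim ``each interior arc is charged to a distinct branching node'' collapses (as would termination of the construction itself). The paper handles this with a genuinely global tool: the acyclicity of the relation $g \prec g'$ on split arcs from \cite{DBLP:journals/talg/CairoMART19} (\Cref{lem:acyclic}), used in \Cref{lem:noetwiceinproto} to show that a macrotig repeats no arc except possibly a self-bivalent arc at its two ends; combined with the \xintprop/\yintprop (\Cref{thm:Xthm}, \Cref{cor:Y-lemma}), which you correctly flag as needed but do not prove, and with the fact that each non-bivalent arc lies in at most one maximal microtig (\Cref{lem:arcs-in-microtigs}), this yields the $O(n)$ total length (\Cref{lem:maximalbigenerator}, \Cref{lem:alg-extend-correct}). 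Finally, the completeness direction---that every non-trivial maximal omnitig is $U(fW'g)$ for a macrotig $fW'g$, and that the leftover ones are univocal extensions of bivalent arcs outside all macrotigs---needs an actual induction on internal bivalent arcs using \Cref{lem:nobivntwice} (see \Cref{lem:omnijoinsplitproto}), not just the observation that univocal extension absorbs forced parts.
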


Since the univocal extension $U(W)$ of a walk $W$ can be trivially computed in time linear in the length of $U(W)$, we get the linear-time output sensitive algorithm as an immediate corollary:

\begin{corollary}
\label{thm:main-1}
Given a strongly connected graph $G$, it is possible to enumerate all maximal omnitigs of $G$ in time linear in their total length. 
\end{corollary}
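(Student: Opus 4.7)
The plan is to derive the corollary as a direct consequence of \Cref{thm:macrotigs-main}, once we observe that the univocal extension of any walk can be computed on demand in time linear in the length of the extension. First, I would invoke \Cref{thm:macrotigs-main} to compute, in $O(m)$ time, the sets $\mathcal{M}$ and $\mathcal{F}$ together with the list of endpoint pairs identifying exactly those macrotig subwalks whose univocal extensions are the maximal omnitigs arising from macrotigs. This reduces the enumeration problem to materializing, one by one, the univocal extension of each arc in $\mathcal{F}$ and of each listed macrotig subwalk.

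For the extension step itself, observe that for any walk $W$ with first node $s$ and last node $t$, the path to be prepended is the longest in-unary path ending at $s$, obtained by repeatedly following the unique in-arc until the current node has in-degree different from $1$; the path to be appended is obtained symmetrically from $t$. Each step of these greedy traversals costs $O(1)$ using standard adjacency-list representations together with in- and out-degree arrays precomputed in $O(m)$ time, so $U(W)$ is produced in $O(|U(W)|)$ time. For the macrotig subwalks, storing each walk $M \in \mathcal{M}$ as an array of arcs lets us extract the subwalk between two given indices in time linear in its length, so the cost of producing the subwalk and then extending it remains linear in the size of the final omnitig.

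Summing over the whole representation, the algorithm runs in time $O(m) + \sum_i |U(W_i)|$, where $W_i$ ranges over the arcs in $\mathcal{F}$ and the listed macrotig subwalks. The $O(m)$ preprocessing term is absorbed by the output length, since every arc of $G$ is itself an omnitig and therefore lies in at least one maximal omnitig, forcing the total output length to be at least $m$. I do not foresee a substantive obstacle: \Cref{thm:macrotigs-main} already packages all the difficulty of identifying what to output, and univocal extension is a trivially output-linear greedy traversal; the only care needed is to precompute the degree arrays and to store macrotigs so that they admit constant-time index access, so that no step of the enumeration is ever super-linear in what it produces.
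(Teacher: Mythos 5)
Your proposal is correct and follows essentially the same route as the paper: the paper derives this corollary immediately from \Cref{thm:macrotigs-main}, noting only that the univocal extension $U(W)$ can be trivially computed in time linear in $|U(W)|$ (your greedy traversal along in-degree-$1$ and out-degree-$1$ nodes), and that the $O(m)$ cost of the representation is absorbed because every arc is itself an omnitig, so the total output length is at least $m$. Your write-up just makes these standard implementation details explicit.
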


We obtain \Cref{thm:macrotigs-main} using two interesting ingredients. The first is a novel graph structure (\emph{macronodes}), obtained after a \emph{compression} operation of `easy' nodes and arcs (\Cref{sec:macronode-macrotigs}). The second is a connection to a recent result by Georgiadis et al.~\cite{georgiadis2017strong} showing that it is possible to answer in $O(1)$-time strong connectivity queries under a \emph{single} arc removal, after a linear-time preprocessing of the graph (notice that a forbidden path is defined w.r.t.~\emph{two} arcs to avoid). 

\Cref{thm:macrotigs-main} has additional practical implications. First, omnitigs are also representable in the same (linear) size as the commonly used unitigs. Second, maximal macrotigs enable various $O(m)$-time operations on maximal omnitigs (without listing them explicitly), by pre-computing the univocal extensions from any node, needed in \Cref{thm:macrotigs-main}. For example, given that the number of maximal omnitigs is $O(m)$~\cite{DBLP:journals/talg/CairoMART19}, this implies the following result:

\begin{corollary}
\label{cor:enumerate-lengths}
Given a strongly connected graph $G$ with $m$ arcs, it is possible to compute the lengths of all maximal omnitig in total time $O(m)$.
\end{corollary}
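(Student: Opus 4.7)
The plan is to invoke \Cref{thm:macrotigs-main} as a black box and reduce the task to computing, for each of the $O(m)$ seeds it hands back, the length of its univocal extension in amortized $O(1)$ time. Here a \emph{seed} is either an arc of $\mathcal{F}$ or a subwalk of some macrotig in $\mathcal{M}$ whose two endpoints are among those returned by \Cref{thm:macrotigs-main}. The theorem produces $\mathcal{M}$, $\mathcal{F}$, and all endpoint pairs in $O(m)$ time, and the number of seeds is $O(m)$, since each maximal omnitig corresponds to one and there are $O(m)$ maximal omnitigs.

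The key preprocessing is to tabulate, for every node $v$ of $G$, two quantities $\mathrm{ext}^+(v)$ and $\mathrm{ext}^-(v)$, defined as the number of arcs in the forward (respectively backward) univocal extension rooted at $v$. The forward extension from a node $v$ is governed by the functional structure on the nodes of out-degree exactly $1$: each such node has a unique successor, and the extension proceeds along this deterministic chain until a node of out-degree $\neq 1$ is reached, or until the chain closes on itself. A linear-time sweep---process the out-degree-$1$ chains backwards starting from each ``exit'' node (of out-degree $\neq 1$), then handle the remaining cycles of out-degree-$1$ nodes separately---fills in all $\mathrm{ext}^+(v)$ in $O(m)$ time. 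Running the symmetric procedure on the reverse graph yields $\mathrm{ext}^-(v)$. Since $\sum_{M \in \mathcal{M}} |M| = O(n)$, I would also store each macrotig as an indexed array, so that the number of arcs between any two given positions of $M$ is available in $O(1)$.

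With these tables in hand, the length of every maximal omnitig is read off in $O(1)$ per seed. For a seed arc $e = (u,v) \in \mathcal{F}$, the length of $U(e)$ is $1 + \mathrm{ext}^-(u) + \mathrm{ext}^+(v)$. For a seed subwalk of some $M \in \mathcal{M}$ from position $i$ to position $j$, with first node $s$, last node $t$, and $k := j-i$ arcs, the length of the univocal extension is $k + \mathrm{ext}^-(s) + \mathrm{ext}^+(t)$. Summing over the $O(m)$ seeds yields all lengths in total time $O(m)$.

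The main delicacy I foresee is the bookkeeping around the univocal extensions: one has to be sure that the seeds handed back by \Cref{thm:macrotigs-main} do not themselves already contain arcs that belong to the extensions (so that the two sides of each formula add up without overlap), and that degenerate configurations---arcs of $\mathcal{F}$ that lie inside cyclic chains of degree-$1$ nodes, or extensions that close a cycle---are handled consistently so that each maximal omnitig contributes its length exactly once. These are clean issues once the extension convention of \Cref{thm:macrotigs-main} is fixed, and none of them affects the $O(m)$ time bound.
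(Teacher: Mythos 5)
Your proposal is correct and follows the same route the paper sketches: invoke \Cref{thm:macrotigs-main}, pre-tabulate the lengths of the forward and backward univocal extensions from every node in $O(m)$ time, and then read off each maximal omnitig's length in $O(1)$ from its seed (an arc of $\mathcal{F}$ or a macrotig subwalk given by its endpoint pair), relying on the $O(m)$ bound on the number of maximal omnitigs from Cairo et al. The paper leaves this as a one-line consequence of the preceding theorem, so your contribution is mainly to supply the implementation details (the chain/cycle sweep for $\mathrm{ext}^\pm$, and the non-overlap of the extension with the seed because the seed begins at a join arc and ends at a split arc), which are all sound.
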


\Cref{cor:enumerate-lengths} leads to a linear-time computation of various statistics about maximal omnitigs, such as minimum, maximum, and average length (useful e.g.~in~\cite{10.1093/bioinformatics/btt310}). One can also use this to filter out subfamilies of them (e.g.~those of length smaller and / or larger than a given value) before enumerating them explicitly.

\paragraph{Significance of our results.} This paper closes the issue of finding safe walks for a fundamental model of genome assembly (\emph{any} closed arc-covering walk), a long-standing theoretical question, inspired by practical genome assemblers, and originating with the use of unitigs in 1995~\cite{KM95}. However, we envision a reverse transfer from theory to practical and \emph{complete} genome assembly programs, as has been the case in other Bioinformatics problems. 

Trivially, safe walks for all closed arc-covering walks are also safe for more specific types of arc-covering walks. Moreover, while a genome soltuion defined as a single closed arc-covering walk does not incorporate several practical issues of real data, in a follow-up work~\cite{hydrostructure} we show that omnitigs are the basis of more advanced models handling many practical aspects. For example, to allow more types of genomes to be assembled, one can define an assembly solution as a \emph{set} of closed walks that together cover all arcs~\cite{DBLP:journals/almob/AcostaMT18}, which is the case in \emph{metagenomic} sequencing of bacteria. For linear chromosomes (as in eukaryotes such as human), or when modeling missing sequencing coverage, one can analogously consider one, or many, such \emph{open} walks~\cite{TomescuMedvedev,tomescu2017safe}. Safe walks for all these models are subsets of omnitigs~\cite{DBLP:journals/almob/AcostaMT18,hydrostructure}. Moreover, when modeling sequencing errors, or mutations present e.g.~only in the mother copy of a chromosome (and not in the father's copy), one can require some arcs not to be covered by a solution walk, or even to be ``invisible'' from the point of view safety. Finding safe walks for such models is also based on first finding omnitigs-like walks~\cite{hydrostructure}.

Notice that such separation between theoretical formulations and their practical embodiments is common for many classical problems in Bioinformatics. For example, computing edit distance is often replaced with computing edit distance under affine gap costs~\cite{durbin1998biological}, or enhanced with various heuristics as in the well-known BLAST aligner~\cite{altschul1990basic}. Also text indexes such as the FM-index~\cite{DBLP:conf/focs/FerraginaM00} are extended in popular read mapping tools (e.g.~\cite{li2009fast,langmead2012fast}) with many heuristics handling errors and mutations in the reads.

Finally, our results show that safe partial solutions enjoy interesting combinatorial properties, further promoting the persistency and safety frameworks. For real-world problems admitting multiple solutions, safe and complete algorithms are more pragmatic than the classical approach of outputting an arbitrary optimal solution. They are also more efficient than enumerating all solutions, or only the first $k$-best solutions, because they already \emph{synthesize all that can be correctly reconstructed from the input data.}

\section{Overview of the proofs}
\label{sec:overview}

We highlight here our key structural and algorithmic contributions, and give the formal details in \Cref{sec:macronode-macrotigs,sec:final-algorithm,sec:prepro}. We start with the minimum terminology needed to understand this section, and defer the rest of the notation to \Cref{sec:preliminaries}.

\noindent \paragraph*{Terminology.} Functions $t(\cdot)$ and $h(\cdot)$ denote the \emph{tail} node and the \emph{head} node, respectively, of an arc or walk. We classify the nodes and arcs of a strongly connected graph as follows (see \Cref{fig:macronodes-example}):
\begin{itemize}
	\itemsep0em
	
	\item a node $v$ is a \emph{join node} if its \emph{in-degree} $d^-(v)$ satisfies $d^-(v) > 1$, and a \emph{join-free node} otherwise. An arc $f$ is called a \emph{join arc} if $h(f)$ is a join node, and a \emph{join-free arc} otherwise. 
	\item a node $v$ is a \emph{split node} if its \emph{out-degree} $d^+(v)$ satisfies $d^+(v) > 1$, and a \emph{split-free node} otherwise. An arc $g$ is called a \emph{split arc} if $t(g)$ is a split node, and a \emph{split-free arc} otherwise. 
	\item a node or arc is called \emph{bivalent} if it is both join and split, and it is called \emph{biunivocal} if it is both split-free and join-free.
\end{itemize}
A walk $W$ is \emph{split-free} (resp., \emph{join-free}) if all its arcs are split-free (resp., join-free). Given a walk $W$, its \emph{univocal extension $U(W)$}\label{def:univocal-extension} is defined as $W^-WW^+$, where $W^-$ is the longest join-free path to $t(W)$ and $W^+$ is the longest split-free path from $h(W)$ (observe that they are uniquely defined).

\noindent \paragraph*{Structure.} The main structural insight of this paper is that omnitigs enjoy surprisingly limited freedom, in the sense that any omnitig can be seen as a concatenation of walks in a very specific set. In order to give the simplest exposition, we first simplify the graph by contracting biunivocal nodes and arcs. The nodes of the resulting graph can now be partitioned into \emph{macronodes} (see \Cref{fig:macronodes-example,def:macronode,}), where each macronode $\macronode{v}$ is uniquely identified by a bivalent node~$v$ (its \emph{center}).
We can now split the problem by first finding omnitigs inside each macronode, and then characterizing the ways in which omnitigs from different macronodes can combine.

\begin{figure}[t]
    \centerline{
    \begin{subfigure}[m]{0.60\linewidth}
	\centering
	\includegraphics[width=\textwidth]{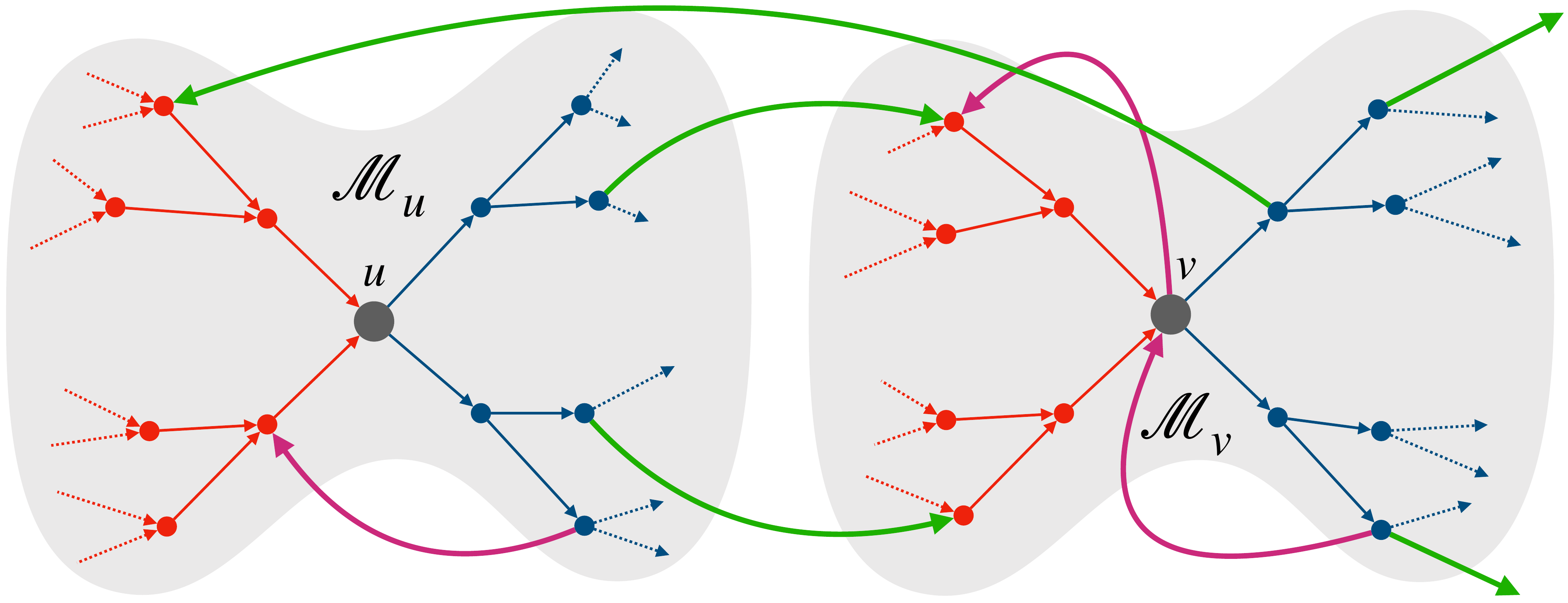}
	\caption{\label{fig:macronodes-example}}
	\end{subfigure}
	\hfill
	\begin{subfigure}[m]{0.35\linewidth}
	\centering
	\includegraphics[width=\textwidth]{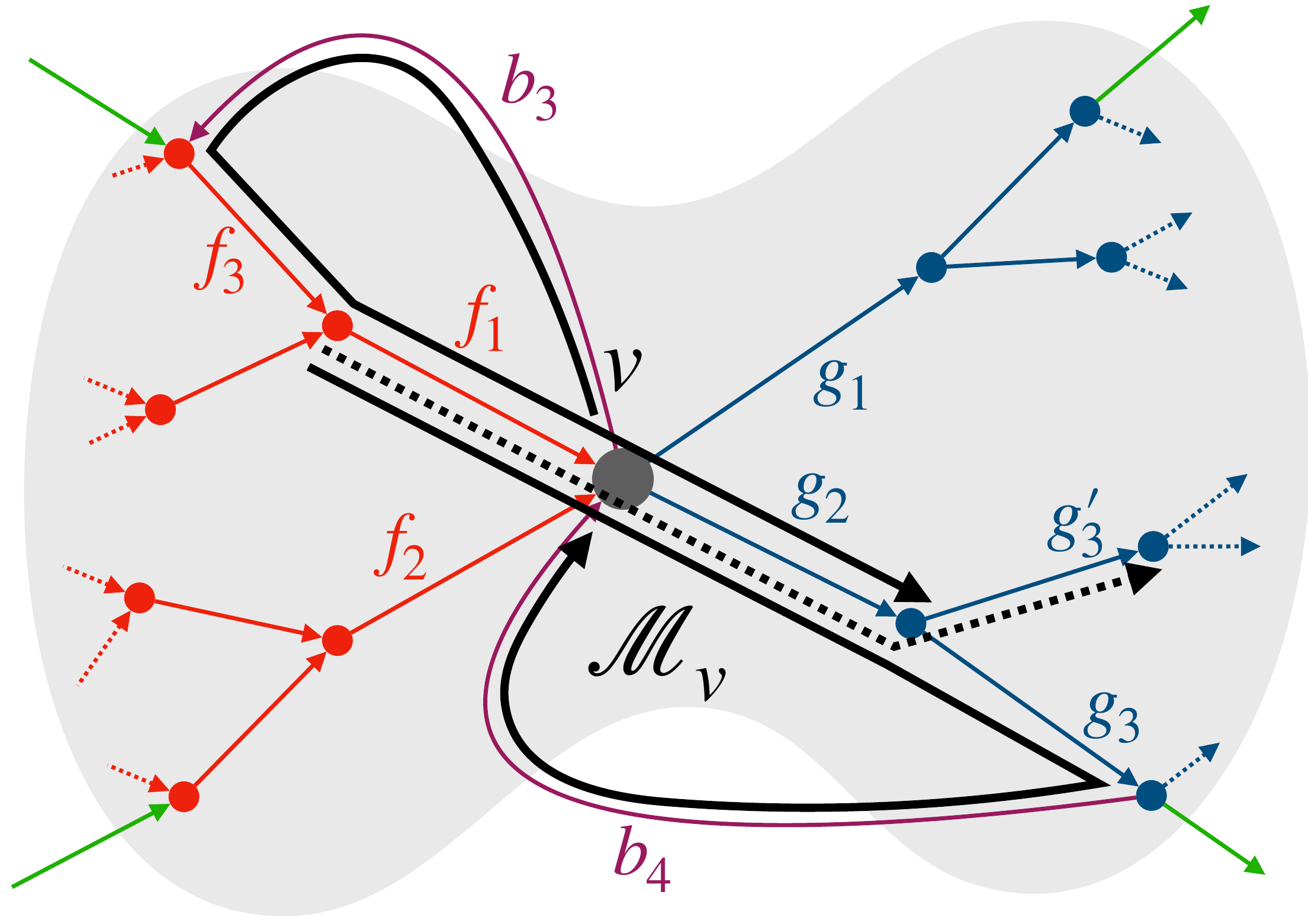}
	\caption{\label{fig:microtigs}}
	\end{subfigure}
	}
	
	\caption{\small Left: Given a bivalent node $v$, the macronode $\mathcal{M}_v$ is the subgraph of $G$ induced by the nodes reaching $v$ with a split-free path (in red), and the nodes reachable from $v$ with a join-free path (in blue). These two types of nodes induce the two trees of the macronode. By definition, every arc with endpoints in different macronodes are bivalent (in green). The remaining bivalent arcs have endpoints in the same macronode (in purple). 
	Right: The only omnitig traversing the bivalent node $v$ is $f_1g_2$; e.g., by the \xintprop neither $f_2g_2$ is an omnitig ($b_3f_3f_1$ is a forbidden path) nor $f_1g_1$ is an omnitig ($g_2g_3b_4$ is a forbidden path). Extending the micro-omnitig $f_1g_2$ to the right we notice that $f_1g_2g_3$ is an omnitig and by the \yintprop $f_1g_2g'_3$ is not an omnitig ($g_3b_4$ is a forbidden path). Hence, the only maximal right-micro omnitig is $f_1g_2g_3b_4$, and the only maximal left-micro omnitig is $b_3f_3f_1g_2$. Merging the two on $f_1g_2$, we obtain the maximal microtig $b_3f_3f_1g_2g_3b_4$.}
\end{figure}


We discover a key combinatorial property of how omnitigs can be extended: there are at most two ways that any omnitig can traverse a macronode center (see also \Cref{fig:microtigs}):

\begin{theorem}[\xintprop]
\label{thm:Xthm}
Let $v$ be a bivalent node. Let $f_1$ and $f_2$ be distinct join arcs with $h(f_1) = h(f_2) = v$; let $g_1$ and $g_2$ be distinct split arcs with $t(g_1) = t(g_2) = v$. We have:
\begin{itemize}
	\itemsep0em
	\item[$i)$] If $f_1g_1$ and $f_2g_2$ are omnitigs, then $d^+(v)=d^-(v)=2$.
	\item[$ii)$] If $f_1g_1$ is an omnitig, then there are no omnitigs $f_1 g'$ with $g' \neq g_1$, nor $f'g_1$ with $f' \neq f_1$. 
\end{itemize}
\end{theorem}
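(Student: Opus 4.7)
The plan is to read off from \Cref{def:omnitig} a compact closed-walk reformulation of the hypothesis: a length-two walk $fg$ with $h(f) = t(g) = v$ is an omnitig iff no non-empty walk from $v$ back to $v$ simultaneously begins with an arc different from $g$ and ends with an arc different from $f$. Strong connectivity then supplies, for every out-arc $g$ of $v$ and every in-arc $f$ of $v$, a non-empty closed walk at $v$ beginning with $g$ and ending with $f$: just concatenate $g$, any walk from $h(g)$ to $t(f)$, and $f$. The whole proof is driven by this construction, producing a contradictory first-arc versus last-arc conflict in each case.

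For part $(i)$, I would argue $d^-(v)=d^+(v)=2$ by contradiction. Suppose first that $d^-(v) \geq 3$ and pick an in-arc $f_3 \notin \{f_1,f_2\}$. Build a closed walk at $v$ ending with $f_3$. Because $f_3 \neq f_1$, the omnitig property of $f_1 g_1$ forces the walk's first arc to be $g_1$; because $f_3 \neq f_2$, the omnitig property of $f_2 g_2$ forces the same first arc to be $g_2$; this contradicts $g_1 \neq g_2$. The case $d^+(v) \geq 3$ is perfectly symmetric: pick a third out-arc $g_3$, build a closed walk at $v$ starting with $g_3$, and conclude that its last arc must be both $f_1$ and $f_2$.

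For part $(ii)$, suppose $f_1 g'$ is an omnitig with $g' \neq g_1$. Since $v$ is bivalent we can pick $f_2 \neq f_1$ entering $v$ and build a closed walk at $v$ ending with $f_2$. Its last arc is not $f_1$, so the omnitig properties of $f_1 g_1$ and of $f_1 g'$ force its first arc to be $g_1$ and $g'$ respectively, a contradiction. The claim for $f' g_1$ with $f' \neq f_1$ is analogous: by bivalence pick $g_2 \neq g_1$ leaving $v$, and a closed walk at $v$ starting with $g_2$ must simultaneously end with $f_1$ (forced by $f_1 g_1$) and with $f'$ (forced by $f' g_1$).

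The only delicate point is that \Cref{def:omnitig} speaks of forbidden \emph{paths} rather than walks, so I would justify that the walks produced above can be refined to objects that genuinely witness a violation of the omnitig property. Since we prescribe only the very first and very last arcs, any internal repetition of a vertex (including $v$ itself) can be short-circuited without altering the extremes, so a minimal such forbidden object is a path. I expect this reduction to be routine, so that the structural heart of the proof is exactly the first-arc/last-arc forcing outlined above, applied once to each of the cases $(i)$ and $(ii)$.
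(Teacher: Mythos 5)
Your proof is correct, and for part~$(i)$ it is essentially the paper's argument: the paper fixes a third \emph{out}-arc $g_3$, takes a shortest closed path $g_3P$, and case-splits on its last arc $f$ ($f\neq f_1$ kills $f_1g_1$; $f=f_1\neq f_2$ kills $f_2g_2$), then appeals to symmetry for the in-degree. You do the dual (fix a third \emph{in}-arc $f_3$ and force the first arc), phrased as forcing rather than case analysis, but the content is identical. For part~$(ii)$ you genuinely diverge: the paper deduces it from the \yintprop{} (\Cref{cor:Y-lemma} with empty $W$), which in turn rests on the weak \extprop{} (\Cref{lem:weak_ita_lem}); you instead give a self-contained argument that reuses exactly the same closed-path forcing as in $(i)$, which is arguably cleaner and more uniform, though it forgoes the modularity the paper gets from having the \yintprop{} available for the algorithms. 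One small inaccuracy in your closing reduction remark: if \emph{both} the first and last arcs of a closed walk at $v$ are prescribed and $v$ occurs internally, you cannot short-circuit that occurrence without discarding the prefix or the suffix, i.e.\ without changing one prescribed extreme. This does not hurt your proof, because in each case you prescribe only \emph{one} of the two extreme arcs, so you can always retain the side carrying the prescribed arc; more simply, you can do as the paper does and take a shortest closed path through the prescribed arc, which is automatically a genuine path with no short-circuiting needed.
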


In order to prove the \xintprop, we prove an even more fundamental property: once an omnitig traverses a macronode center, for any node it meets after the center node, there is at most one way of continuing from that node (\yintprop, \Cref{cor:Y-lemma}), see \Cref{fig:microtigs}. The basic intuition is that if there are more than one possibilities, then strong connectivity creates forbidden paths. 

Given an omnitig $fg$ traversing the bivalent node $v$, we define the \emph{maximal right-micro omnitig} as the longest extension $fgW$ in the macronode $\macronode{v}$ (see \Cref{fig:microtigs} and \Cref{def:omni_type}). The \emph{maximal left-micro omnitig} is the symmetrical omnitig $Wfg$. By \Cref{thm:Xthm}, there are at most two maximal right-micro omnitigs and two maximal left-micro omnitigs. The merging of a maximal left- and right-micro omnitig on $fg$ is called a \emph{maximal microtig} (see \Cref{fig:microtigs,def:omni_type}; notice that a microtig is not necessarily an omnitig). These \emph{at most two} maximal microtigs represent ``forced omnitig tracks'' that must be followed by any omnitig crossing $v$.



We now describe how omnitigs can advance from one macronode to another. Notice that any arc having endpoints in different macronodes is a bivalent arc (\Cref{obs:macropartition}). In \Cref{lem:arcs-in-microtigs} we prove that for every maximal microtig ending with a bivalent arc $b$, there is at most one maximal microtig starting with $b$. As such, when an omnitig track exits a macronode, there is at most one way of connecting it with an omnitig track from another macronode. 
It is natural to merge all omnitig tracks (i.e.~maximal microtigs) on all bivalent arcs between \emph{different} macronodes, and thus obtain \emph{maximal macrotigs} (\Cref{def:macrotig,fig:macrotig-example}). The total size of all maximal macrotigs is $O(n)$ (\Cref{lem:alg-extend-correct}), and they are a representation of all maximal omnitigs, except for those that are univocal extensions of the arcs of $\mathcal{F}$, see below and \Cref{lem:omnijoinsplitproto}.

\noindent \paragraph*{Algorithms.} Our algorithms first build the set $\mathcal{M}$ of maximal macrotigs, and then identify maximal omnitigs inside them. The set $\mathcal{F}$ of arcs univocally extending to the remaining maximal omnitigs will be the set of bivalent arcs not appearing in $\mathcal{M}$ (\Cref{lem:omnijoinsplitproto}). 

Crucial to the algorithms is an \emph{extension} primitive deciding what new arc (if any) to choose when extending an omnitig (recall that the X- and Y-intersection Properties limits the \emph{number} of such arcs to one). Suppose we have an omnitig $fW$, with $f$ a join arc, and we need to decide if it can be extended with an arc $g$ out-going from $h(W)$. Naturally, this extension can be found by checking that there is no forbidden path  from $t(g) = h(W)$. However, this forbidden path can potentially end in any node of $W$. Up to this point, \cite{TomescuMedvedev, tomescu2017safe,DBLP:journals/talg/CairoMART19} need to do an entire $O(m)$ graph traversal to check if any node of $W$ is reachable by a forbidden path.
%
%
We prove here a new key property:

\begin{theorem}[\extprop]
\label{thm:strong-ita-result}
Let $fW$ be an omnitig in $G$, where $f$ is a join arc. 
Then $fWg$ is an omnitig if and only if $g$ is the only arc with $t(g) = h(W)$ such that there exists a path from $h(g)$ to $h(f)$ in $G \smallsetminus f$.
\end{theorem}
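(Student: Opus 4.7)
The plan is to work from the observation that, since $fW$ is already an omnitig, the only new forbidden-path obligations introduced by appending $g = e_\ell$ are those with $j = \ell$: for each $i \in \{1,\dots,\ell\}$, no forbidden simple path from $h(W) = t(g)$ to $h(e_{i-1})$ with first arc $\ne g$ and last arc $\ne e_{i-1}$. I will match these obligations one-to-one with the stated reachability condition.

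For $(\Rightarrow)$, uniqueness is the quick part. If some $g' \ne g$ out of $h(W)$ had a simple path $Q$ from $h(g')$ to $h(f)$ in $G \smallsetminus f$, then $g' \cdot Q$, pruned to a simple path starting at the last visit of $h(W)$, is a forbidden simple path from $h(W)$ to $h(f)$ whose first arc is an out-arc of $h(W)$ distinct from $g$ and whose last arc is $\ne f$ (since $Q$ avoids $f$), violating the $i = 1, j = \ell$ obligation. Existence of such a path for $g$ itself (when $h(g) \ne h(f)$) follows from strong connectivity combined with $f$ being a join arc: a second in-arc $f' \ne f$ of $h(f)$ yields an alternative route from $h(g)$ to $h(f)$ that avoids $f$.

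For $(\Leftarrow)$ I handle each $i$ separately. The case $i = 1$ is the direct converse: any forbidden simple path $P$ from $h(W)$ to $h(f)$ with last arc $\ne f$ cannot use $f$ at all, because simplicity confines the visit of $h(f)$ to the endpoint and the last-arc condition excludes $f$ there; so $P$ lies in $G \smallsetminus f$ and yields a competing path from $h(g')$ to $h(f)$ in $G \smallsetminus f$, contradicting uniqueness. For $i \ge 2$, I push a hypothetical forbidden simple path $P$ from $h(W)$ to $h(e_{i-1})$ forward along $e_i e_{i+1} \dots e_{\ell-1}$, then through $g$ and the postulated path $Q \subseteq G \smallsetminus f$ from $h(g)$ to $h(f)$, producing a walk from $h(W)$ to $h(f)$ with first arc $g' \ne g$, and extract from it a simple path to reduce to the $i = 1$ case.

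The main obstacle I foresee is the corner subcase where $P$ itself reaches $h(f)$ internally via the arc $f$, so truncating at the first visit of $h(f)$ gives a prefix $P_1' \cdot f$ whose last arc is $f$ and therefore does not directly furnish a competing path in $G \smallsetminus f$. Here $P_1' \subseteq G \smallsetminus f$ is a simple path from $h(W)$ to $t(f)$ starting with $g' \ne g$, and the continuation $P_2$ of $P$ beyond $h(f)$ lies in $G \smallsetminus f$. I plan to close this subcase by invoking $fW$'s omnitig property at $i' = 1$, $j' = \ell - 1$---which forbids simple paths from $t(e_{\ell-1})$ to $h(f)$ with first arc $\ne e_{\ell-1}$ and last arc $\ne f$---combined with the suffix $P_2 \cdot e_i \dots e_{\ell-1}$, to synthesize a forbidden simple path for $fW$ itself and reach a contradiction with $fW$ being an omnitig.
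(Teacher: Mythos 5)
Your proposal correctly isolates the right reduction: since $fW$ is already an omnitig, only the $j=\ell$ forbidden-path obligations matter when appending $g$, and your treatment of existence via strong connectivity and the sibling join arc of $f$ is essentially the paper's opening step (delegated there to the weak form, Lemma~\ref{lem:weak_ita_lem}). However, the key technical tool the paper uses is an auxiliary lemma (Lemma~\ref{lem:doesnotcontainf}), which states that any path from $h(W)$ back to a node of $W$ whose last arc is not an arc of $fW$ has \emph{no internal node in $W$ at all}; its proof is non-trivial, relying on a closed path $f'WQ$ through a sibling join arc $f'$ of $f$ (which must start with $W$, because $fW$ is an omnitig) to manufacture a forbidden path for $fW$. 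Once this lemma is in hand, any forbidden path $P$ for $fWg$ is shown to avoid $f$ entirely and to land directly on $W$, so the desired competing path to $h(f)$ in $G\setminus f$ falls out uniformly, with no case split on $i$. Your proposal tries to route around this lemma and has two genuine gaps.

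First, in the $(\Rightarrow)$ direction, pruning $g'Q$ ``at the last visit of $h(W)$'' does not guarantee that the surviving first arc differs from $g$: if $Q$ passes through $h(W)$ internally, the suffix beginning there may well start with $g$ itself, and then you have produced nothing new. The paper instead truncates $Q$ at the \emph{first} node it hits in $W$; the resulting prefix $P_W$ satisfies $g'P_W$ is a path whose first arc is $g'$ and whose last arc is not in $fW$, which is exactly what makes it a forbidden path for $fWg$.

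Second, and more seriously, your handling of the corner case where $P$ contains $f$ internally does not go through. You write $P = P_1' f P_2$ and plan to combine $P_2 \cdot e_i\cdots e_{\ell-1}$ with the omnitig property of $fW$ at $i'=1,\ j'=\ell-1$; but $P_2 \cdot e_i\cdots e_{\ell-1}$ runs from $h(f)$ \emph{forward} to $h(W)$, while a forbidden path for $fW$ at those indices must run from $t(e_{\ell-1})$ \emph{backward} to $h(f)$. The directions are incompatible, so no forbidden path for $fW$ is produced and no contradiction is reached. This is precisely the situation that Lemma~\ref{lem:doesnotcontainf} is designed to preclude (it shows $h(f)\in W$ cannot be an internal node of $P$), and its proof needs a separate idea — the closed path $f'WQ$ — that is absent from your outline. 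Without that ingredient, the $(\Leftarrow)$ direction for $i\ge 2$ is not closed.
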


Thus, for each arc~$g$ with $t(g) = h(W)$, we can do a \emph{single} reachability query under one arc removal: ``does $h(g)$ reach $h(f)$ in $G \smallsetminus f$?'' Since the target of the reachability query is also the head of the arc excluded $f$, then we can apply an immediate consequence of the results of \cite{georgiadis2017strong}:

\begin{theorem}[\cite{georgiadis2017strong}]
\label{cor:italiano_result}
Let $G$ be a strongly connected graph with $n$ nodes and $m$ arcs. It is possible to build an $O(n)$-space data structure that, after $O(m+n)$-time preprocessing, given a node $w$ and an arc $f$, tests in $O(1)$ worst-case time if there is a path from $w$ to $h(f)$ in $G \smallsetminus f$.
\end{theorem}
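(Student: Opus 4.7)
The plan is to observe that \Cref{cor:italiano_result} is a direct reformulation of the Georgiadis et al.~strong-connectivity-under-failures data structure of~\cite{georgiadis2017strong}, once we exploit the hypothesis that $G$ is strongly connected. The cited result supplies an $O(n)$-space data structure that, after $O(m+n)$-time preprocessing, answers in $O(1)$ worst-case time queries of the form ``are nodes $x$ and $y$ strongly connected in $G \smallsetminus f$?'' for any arc $f$. The strategy is therefore to reduce the one-directional reachability query ``does $w$ reach $h(f)$ in $G \smallsetminus f$?'' to a symmetric strong-connectivity query between $w$ and $h(f)$ in $G \smallsetminus f$.

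The core structural observation driving the reduction is the following: in a strongly connected graph $G$, for every arc $f = (u,v)$ the head $v = h(f)$ still reaches every other node in $G \smallsetminus f$. Indeed, pick any target $z$ and let $P$ be a shortest walk from $v$ to $z$ in $G$; then $P$ is a simple path starting at $v$, and since a simple path cannot revisit its starting vertex, $P$ cannot contain the arc $(u,v) = f$. Hence $P$ survives in $G \smallsetminus f$, giving a $v$-to-$z$ path there.

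Using this, one can argue the desired equivalence: $w$ reaches $h(f)$ in $G \smallsetminus f$ if and only if $w$ and $h(f)$ lie in the same strongly connected component of $G \smallsetminus f$. The ``if'' direction is trivial; for ``only if'', the assumed $w$-to-$h(f)$ path in $G \smallsetminus f$ together with the $h(f)$-to-$w$ path granted by the observation witnesses strong connectivity of the pair in $G \smallsetminus f$. Consequently the reachability query of the statement is equivalent to one strong-connectivity query on $G \smallsetminus f$, and the desired preprocessing, space, and $O(1)$ query bounds are inherited directly from~\cite{georgiadis2017strong}.

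The only real ``obstacle'' here is purely bibliographic: one must verify that the data structure of~\cite{georgiadis2017strong} admits the arc-failure variant (rather than only the vertex-failure one) within the stated $O(m+n)$ preprocessing and $O(n)$ space bounds, which is explicitly stated there. No additional combinatorial work is required, because strong connectivity of $G$ collapses the gap between one-way reachability and strong connectivity after a single arc is removed.
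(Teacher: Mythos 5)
Your reduction is correct and matches the paper's intent exactly: the paper invokes this as an ``immediate consequence'' of the strong-connectivity-under-one-arc-failure structure of Georgiadis et al., precisely because the query target is $h(f)$, which still reaches every node in $G \smallsetminus f$ (a simple path from $h(f)$ cannot revisit $h(f)$, hence avoids $f$), so one-way reachability to $h(f)$ coincides with strong connectivity of the pair. No further comment needed.
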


Using the \extprop and \Cref{cor:italiano_result}, we can thus pay $O(1)$ time to check each out-outgoing arc $g$, before discovering the one (if any) with which to extend $fW$. In \Cref{sec:prepro} we describe how to transform the graph to have constant degree, so that we pay $O(1)$ per node. This transformation also requires slight changes to the maximal omnitig enumeration algorithm to maintain the linear-time output sensitive complexity (see~\Cref{subsec:constant-degree-enumeration}). We first use the \extprop when building the left- and right-maximal micro omnitigs, and then when identifying maximal omnitigs inside macrotigs, as follows.

\begin{figure}[t!]
	\centering
\centerline{	
	\begin{subfigure}[t]{0.62\textwidth}
    \centering
    \includegraphics[width=\textwidth]{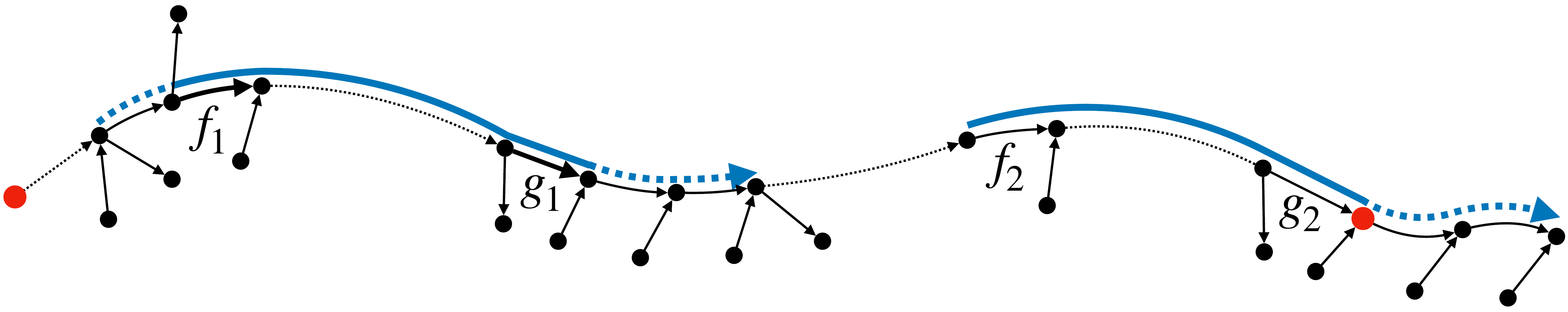}
    \end{subfigure}
    \hfill
    \begin{subfigure}[t]{0.32\textwidth}
    \centering
    \includegraphics[width=\textwidth]{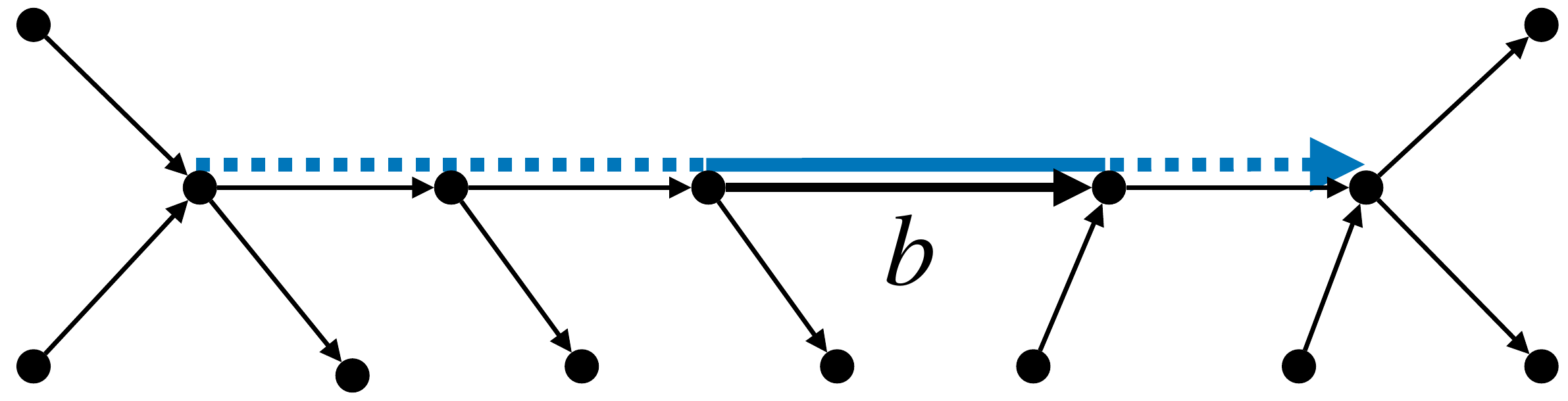}
    \end{subfigure}
}
	\caption{\small Any maximal omnitig is identified (in solid blue) either by a macrotig interval (from a join arc $f$ to a split arc $g$; left), or by a bivalent arc $b$ not appearing in any macrotig (right). The full maximal omnitig is obtained by univocal extension (dotted blue), extension which may go outside of the maximal macrotig. 
	\label{fig:macronode-algorithm}}
\end{figure}

Once we have the set $\mathcal{M}$ of maximal macrotigs, we scan each macrotig with two pointers, a left one always on a join arc $f$, and a right one always on a split arc $g$ (see \Cref{fig:macronode-algorithm,alg:maximalomnitigs}). Both pointers move from left to right in such a way that the subwalk between them is always an omnitig. The subwalk is grown to the right by moving the right pointer as long as it remains an omnitig (checked with the \extprop).
When growing to the right is no longer possible, the omnitig is shrunk from the left by moving the left pointer. This technique runs in time linear to the total length of the maximal macrotigs, namely $O(n)$. 


In \Cref{fig:macronodes} we work out all these notions on a concrete example.

\begin{figure}[h!]
\centerline{
    \begin{subfigure}[t]{0.5\textwidth}
    \centering
    \includegraphics[width=\textwidth]{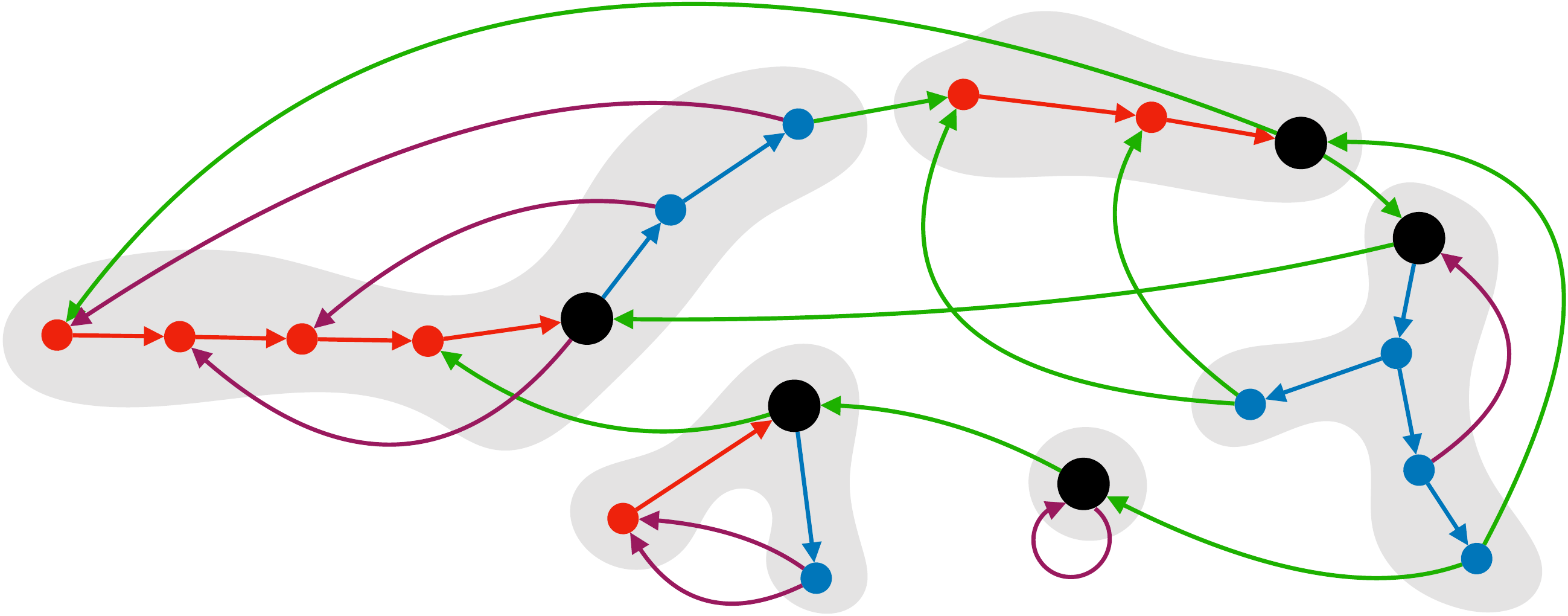}
    \caption{\small Nodes and arcs color-coded as in \Cref{fig:macronodes-example}. \label{fig:macronodes-A}}
    \end{subfigure}
    \begin{subfigure}[t]{0.5\textwidth}
    \centering
    \includegraphics[width=\textwidth]{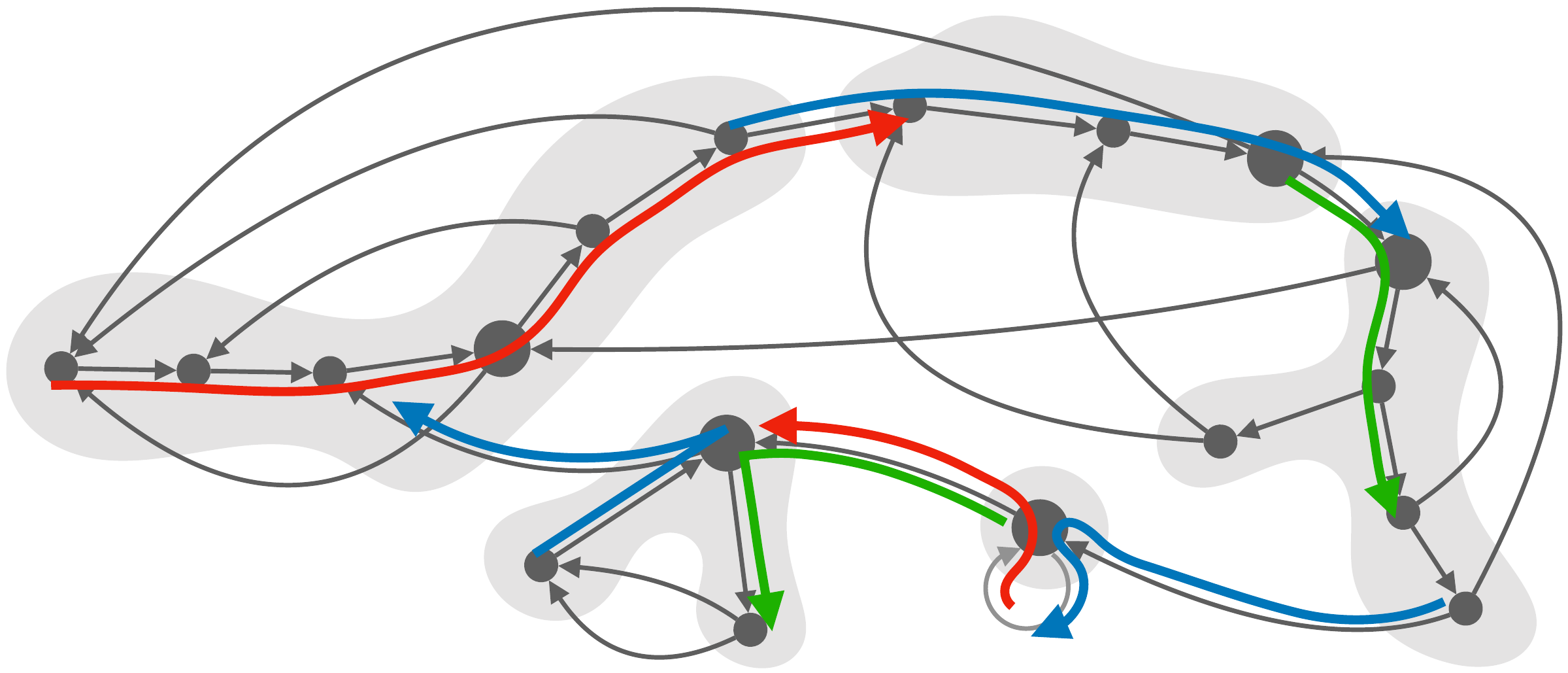}
    \caption{\small Maximal microtigs.\label{fig:macronodes-B}}
    \end{subfigure}
}
\centerline{
    \begin{subfigure}[t]{0.5\textwidth}
    \centering
    \includegraphics[width=\textwidth]{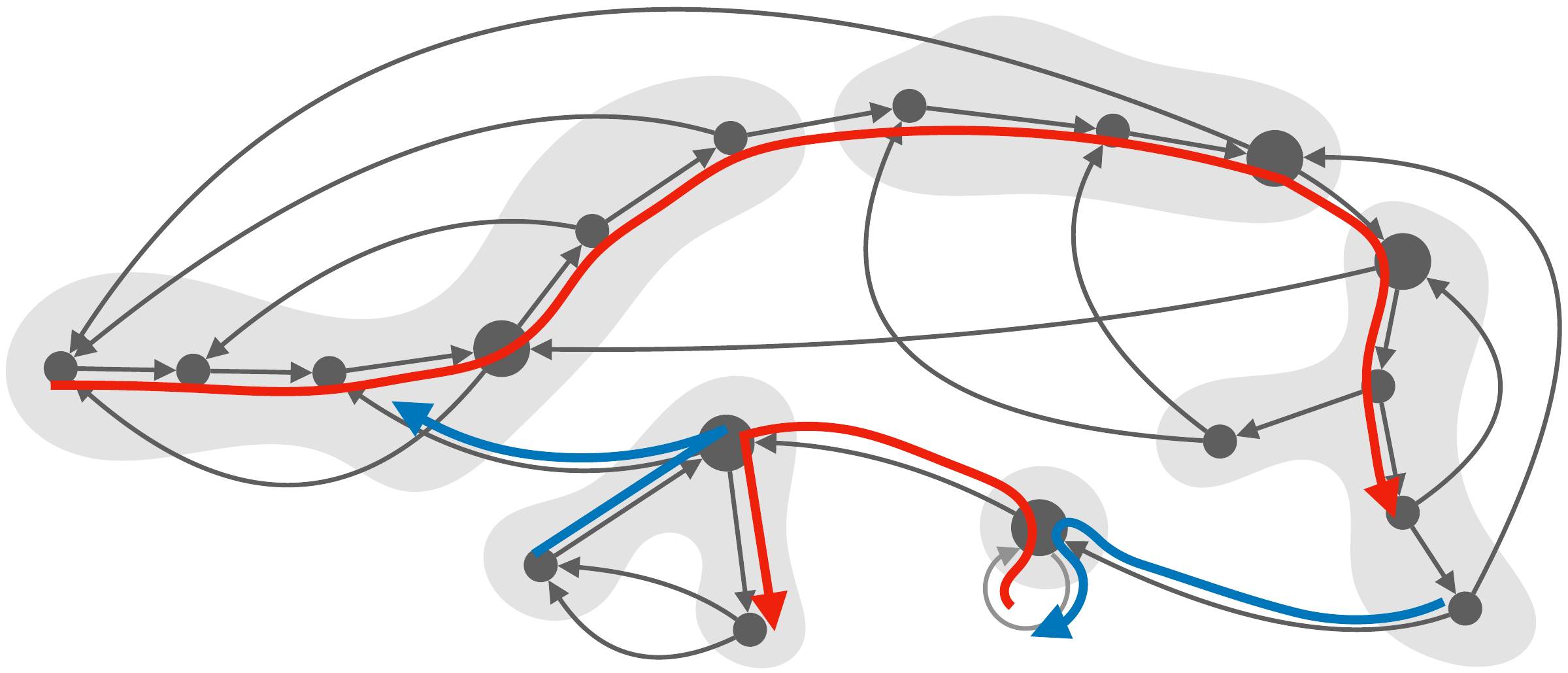}
    \caption{\small Maximal macrotigs.\label{fig:macronodes-C}}
    \end{subfigure}
    \begin{subfigure}[t]{0.5\textwidth}
    \centering
    \includegraphics[width=\textwidth]{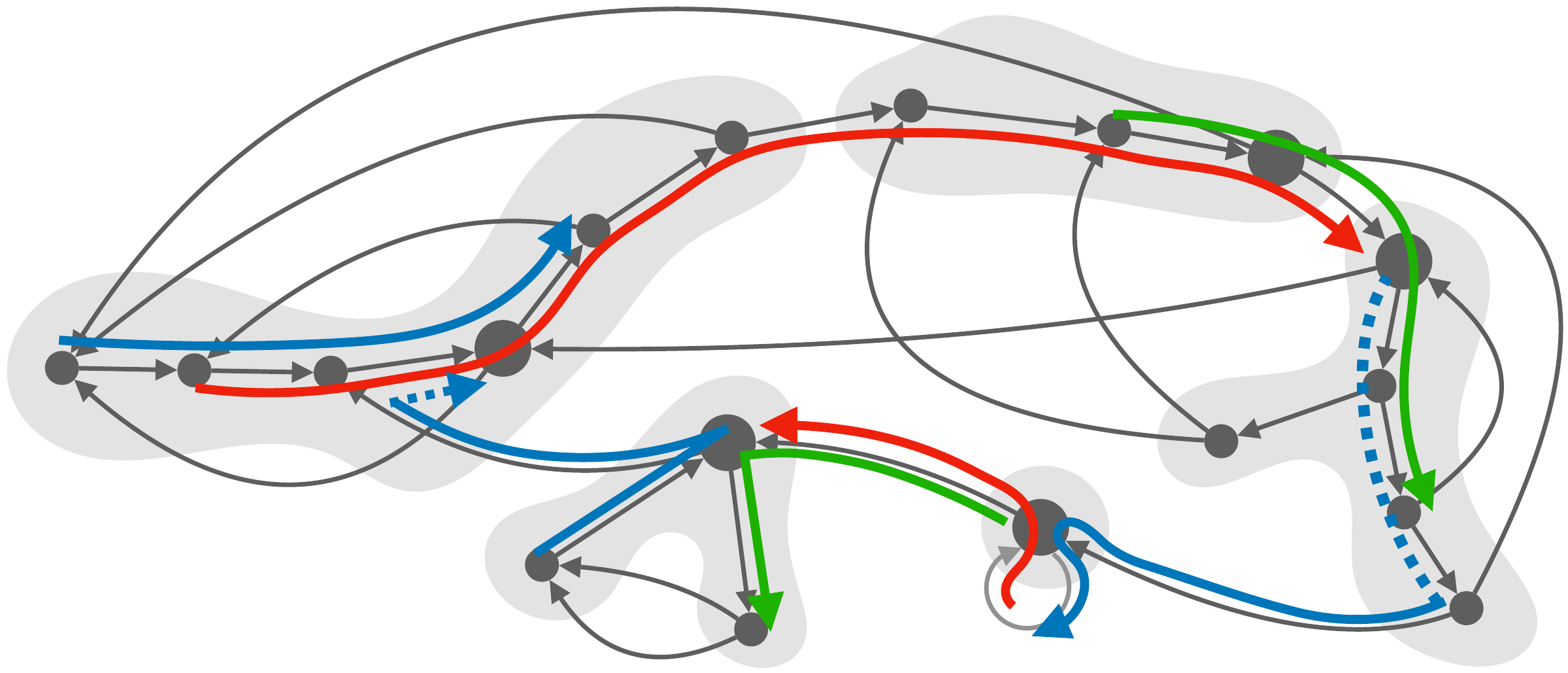}
    \caption{\small The maximal omnitigs obtained from maximal macrotigs (univocal extensions are dotted). All other maximal omnitigs are univocal extensions of the bivalent arcs not appearing in macrotigs.\label{fig:macronodes-D}}
    \end{subfigure}
}
    \caption{\small A concrete example of the main notions of this paper. In \Cref{fig:macronodes-B,fig:macronodes-C,fig:macronodes-D} walks have different colors for visual distinguishability. 
    \label{fig:macronodes}}
\end{figure}



\section{Basics}
\label{sec:preliminaries}

In this paper,
a \emph{graph} is a tuple $G=(V,E)$, where
$V$ is a finite set of \emph{nodes},
$E$ is a finite multi-set of ordered pair of nodes called \emph{arcs}.
Parallel arcs and self-loops are allowed.
For an arc $e \in E(G)$, we denote $G \smallsetminus e = (V , E \smallsetminus \{e\})$. The \emph{reverse graph} $G^R$ of $G$ is obtained by reversing the direction of every arc. In the rest of this paper, we assume a fixed strongly connected graph $G=(V,E)$, with $|V| = n$ and $|E| = m \geq n$.

A \emph{walk} in $G$ is a sequence $W = (v_0,e_1,v_1,e_2, \dots, v_{\ell-1},e_\ell,v_\ell)$, $\ell\geq 0$,
where $v_0,v_1,\dots,v_\ell \in V$, and
each $e_i$ is an arc from $v_{i-1}$ to $v_i$. Sometimes we drop the nodes $v_0,\dots,v_\ell$ of $W$, and write $W$ more compactly as $e_1\dots e_\ell$. If an arc $e$ appears in $W$, we write $e \in W$. We say that $W$ goes \emph{from $t(W) = v_0$ to $h(W) = v_\ell$}, has \emph{length} $\ell$, contains $v_1,\dots,v_{\ell-1}$ as \emph{internal nodes}, \emph{starts with $e_1$}, \emph{ends with $e_{\ell}$}, and contains $e_2,\dots,e_{\ell - 1}$ as \emph{internal arcs}. A walk $W$ is called \emph{empty} if it has length zero, and \emph{non-empty} otherwise.
There exists exactly one empty walk $\epsilon_v = (v)$
for every node $v \in V$,
and $t(\epsilon_v) = h(\epsilon_v) = v$. A walk $W$ is called \emph{closed} if it is non-empty and $t(W) = h(W)$, otherwise it is \emph{open}. The concatenation of walks $W$ and $W'$ (with $h(W)=t(W')$) is denoted $WW'$.

A walk $W = (v_0, e_1, v_1, \ldots, e_{\ell}, v_\ell)$ is called a \emph{path} when the nodes $v_0, v_1, \ldots, v_{\ell}$ are all distinct,
with the exception that $v_{\ell} = v_0$ is allowed (in which case we have either a closed or an empty path).
To simplify notation, we may denote a walk $W = (v_0, e_1, v_1, \ldots, e_{\ell}, v_\ell)$ as a sequence of arcs, i.e.~$W = e_1 \ldots e_{\ell}$. Subwalks of open walks are defined in the standard manner. For a closed walk $W = e_0\dots e_{\ell-1}$, we say that $W' = e'_0\dots e'_{j}$ is a subwalk of $W$ if there exists $i \in \{0,\dots,\ell-1\}$ such that for every $k \in \{0,\dots,j\}$ it holds that $e'_{k} = e_{(i + k) \bmod \ell}$.

%

A \emph{closed} arc-covering walk exists if and only if the graph is strongly connected. We are interested in the (safe) walks that are subwalks of all \emph{closed} arc-covering walks, characterized in~\cite{tomescu2017safe}.

\begin{theorem}[\cite{tomescu2017safe}]
\label{thm:tomescu2017safe}
Let $G$ be a strongly connected graph different from a closed path. Then a walk $W$ is a subwalk of all closed arc-covering walks of $G$ if and only if $W$ is an \emph{omnitig}.
\end{theorem}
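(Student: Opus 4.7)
I would aim for a contrapositive/construction argument in each direction.

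\textbf{Direction 1 (omnitig $\Rightarrow$ safe, by contradiction).} Assume $W = e_0 e_1 \cdots e_\ell$ is an omnitig but there exists a closed arc-covering walk $C$ that does not contain $W$ as a subwalk. Since $C$ covers every arc, each $e_p$ appears in $C$ at least once. The plan is to extract a forbidden path from $C$, contradicting that $W$ is an omnitig. Concretely, I would pick any occurrence of $e_0$ in $C$ and let $j$ be the largest index such that $e_0 e_1 \cdots e_j$ appears in $C$ starting at that occurrence; by assumption $j<\ell$, and the arc used by $C$ immediately after $e_j$ is some $g \neq e_{j+1}$. Then follow $C$ starting with $g$ until it uses $e_{j+1}$ for the first time, and let $Q$ be the subwalk of $C$ from $g$ up to (but excluding) that $e_{j+1}$. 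The walk $gQ$ is non-empty, closed at $t(e_{j+1}) = h(e_j)$, and has first arc $\neq e_{j+1}$. Its last arc $b$ satisfies $b \neq e_j$: otherwise the occurrence $\ldots e_j e_{j+1} \ldots$ in $C$ could be extended backwards; tracking the maximum backwards match from that $e_{j+1}$ would allow one to splice the two matches together and eventually reveal $e_0 \cdots e_{j+1}$ in $C$, contradicting maximality of $j$. Hence $gQ$ is a forbidden path for the pair $(i,j')=(j+1,j+1)$, contradicting that $W$ is an omnitig. The symmetric argument (longest \emph{suffix} match ending at an occurrence of $e_\ell$) handles any residual cases.

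\textbf{Direction 2 (not omnitig $\Rightarrow$ not safe, by construction).} Assume $W$ admits a forbidden path $P$ from $t(e_j)$ to $h(e_{i-1})$ whose first arc differs from $e_j$ and whose last arc $b$ differs from $e_{i-1}$. I would explicitly build a closed arc-covering walk $C^*$ that does \emph{not} contain $W$ as a subwalk. The detour primitive is the closed walk $D = e_i e_{i+1} \cdots e_{j-1} P$ based at $h(e_{i-1})$. Starting from $D$, I would repeatedly splice in, at the base node $h(e_{i-1})$, small closed walks (which exist by strong connectivity) that visit any arc not yet covered, until every arc of $G$ is covered; call the result $C^*$. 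The key claim is that $W$ cannot be a subwalk of $C^*$: any occurrence of $W$ would in particular force an occurrence of the contiguous segment $e_{i-1} e_i \cdots e_{j-1} e_j$, but every time $C^*$ uses the arc $e_{j-1}$ immediately followed by some arc out of $t(e_j)$, the construction has inserted $P$ instead of $e_j$, so the segment breaks. The assumption ``$G$ is not a closed path'' is exactly what lets strong connectivity give us the flexibility to both form $D$ and to add the covering cycles while avoiding reconstituting $W$.

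\textbf{Main obstacle.} The delicate part is Direction 2: making the construction truly avoid $W$ as a subwalk. Two subtleties arise. First, $P$ or the added covering cycles may themselves internally use the arcs $e_0,\dots,e_\ell$, and one must argue that this cannot create a fresh occurrence of the full sequence $e_0\cdots e_\ell$; the condition ``last arc of $P$ is $\neq e_{i-1}$'' is critical here, since it blocks $W$ from re-emerging across the seam between $P$ and the copy of $e_i\cdots e_{j-1}$ following it. Second, in the degenerate case $i=j$ the ``detour'' $D$ collapses to just the closed path $P$, and one must verify the construction still works; this is where the hypothesis that $G$ is not itself a closed path matters, to guarantee genuine branching at the relevant nodes. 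In Direction 1, the main care is in the longest-match bookkeeping, in particular handling walks with repeated arcs in $W$, where ``position in $C$'' and ``position in $W$'' must be tracked separately.
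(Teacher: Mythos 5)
This statement is quoted by the paper from~\cite{tomescu2017safe} and is not proved in the present manuscript, so your sketch can only be judged on its own merits; doing so, both directions have gaps at exactly the places where the real work lies. The serious one is Direction~2: the ``key claim'' that $W$ is not a subwalk of $C^*$ is false for the construction as described. The closed walks you splice in at $h(e_{i-1})$ to cover the missing arcs are arbitrary closed walks through that node, and nothing prevents one of them from containing the contiguous segment $e_{i-1}e_i\cdots e_{j-1}e_j$ --- or even all of $W$ --- as a subwalk, in which case $C^*$ contains $W$. Your justification (``every time $C^*$ uses $e_{j-1}$ followed by an arc out of $t(e_j)$, the construction has inserted $P$'') only accounts for the occurrences of $e_{j-1}$ inside the single copy of $D$; it says nothing about occurrences inside the covering cycles, inside $P$ itself, or across the concatenation seams. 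The entire difficulty of this direction is to show that \emph{every} arc can be covered while provably avoiding that segment, e.g.\ by rerouting each would-be completion of the segment through $P$ (using that $P$'s first arc is $\neq e_j$ and its last arc is $\neq e_{i-1}$) and then arguing no new occurrence is created; you flag this subtlety in your ``main obstacle'' paragraph but do not resolve it, and as written the step fails.

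Direction~1 is closer to the standard argument but has two holes. First, your claim that the last arc $b$ of $gQ$ must differ from $e_j$ (``otherwise \dots $e_0\cdots e_{j+1}$ would appear in $C$'') is not justified: the maximal backward match ending at that occurrence of $e_{j+1}$ may stop at some $e_i$ with $i\geq 1$, which contradicts nothing about the maximality of $j$. The correct fix is not to insist on the pair $(j{+}1,j{+}1)$: if the backward match is $e_i\cdots e_{j+1}$ with $i\geq 1$, then the portion of $C$ from $g$ up to just before that match is a walk from $t(e_{j+1})$ to $h(e_{i-1})$ with first arc $\neq e_{j+1}$ and last arc $\neq e_{i-1}$, i.e.\ a violation for the pair $(i,j{+}1)$ --- this is precisely why \Cref{def:omnitig} quantifies over all $1\leq i\leq j$. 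Second, \Cref{def:omnitig} forbids a \emph{path}, whereas what you extract from $C$ is a walk; converting a forbidden walk into a forbidden path (for possibly another pair $(i,j)$) needs its own argument, since naive shortcutting can change the first or last arc, and your sketch does not address this at all.
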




Observe that $W$ is an omnitig in $G$ if and only if $W^R$ is an omnitig in $G^R$. Moreover, any subwalk of an omnitig is an omnitig. For every arc $e$, its univocal extension $U(e)$ is an omnitig. 
A walk $W$ satisfying a property $\mathcal{P}$ is \emph{right-maximal} (resp., \emph{left-maximal})
if there is no walk $We$ (resp., $eW$) satisfying $\mathcal{P}$.
A walk satisfying $\mathcal{P}$ is \emph{maximal} if it is left- and right-maximal w.r.t.~$\mathcal{P}$.

Notice that if $G$ is a closed path, then every \emph{walk} of $G$ is an omnitig. As such, it is relevant to find the maximal omnitigs of $G$ only when $G$ is different from a closed path. Thus, in the rest of this paper our strongly connected graph $G$ is considered to be different from a closed path, even when we do not mention it explicitly.

\section{Macronodes and macrotigs}
\label{sec:macronode-macrotigs}

In this section, unless otherwise stated, we assume that the input graph is \emph{compressed}, in the sense that it has no biunivocal nodes and arcs. In some algorithms we will also require that the graph has constant in- and out-degree. In \Cref{sec:prepro} we show how these properties can be guaranteed, by transforming any strongly connected graph $G$ with $m$ arcs, in time $O(m)$, into a compressed graph of constant degree and with $O(m)$ nodes and arcs. 

In a compressed graph all arcs are split, join or bivalent. Moreover, in compressed graphs, the following observation holds.


\begin{observation}
\label{obs:joinsplitarcs}
	Let $G$ be a compressed graph. Let $f$ and $g$ be a join and a split arc, respectively, in $G$. The following holds:
	\begin{itemize}
	\itemsep0em
	\item[$(i)$] if $fWg$ is a walk, then $W$ has an internal node which is a bivalent node;
	\item[$(ii)$] if $gWf$ is a walk, then $gWf$ contains a bivalent arc.
	\end{itemize}
\end{observation}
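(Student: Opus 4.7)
The plan is to exploit the defining property of a compressed graph: no node and no arc is biunivocal, so every node is a join node or a split node (possibly both) and every arc is a join arc or a split arc (possibly both). Both parts then follow from a one-step ``propagation'' argument along $W$ that alternates between this node-dichotomy and its arc-counterpart.

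For part $(i)$, I would write $W = (v_0, e_1, v_1, \ldots, e_k, v_k)$ with $v_0 = h(f)$ and $v_k = t(g)$, and argue by contradiction, assuming no $v_i$ is bivalent. Since $f$ is a join arc, $v_0$ is a join node; being non-bivalent, $v_0$ is split-free, so its unique outgoing arc $e_1$ is split-free, and by compressedness $e_1$ must then be a join arc, making $v_1 = h(e_1)$ a join node. Iterating this single step shows that every $v_i$ is join but not split, so in particular $v_k$ is not a split node, contradicting $v_k = t(g)$ with $g$ a split arc. Hence some $v_i$ is bivalent, and every such $v_i$ is an internal node of $fWg$ lying on $W$. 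The degenerate case $k=0$ is immediate, as then $h(f) = t(g)$ is simultaneously join and split.

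For part $(ii)$ I would run the dual argument on arcs. Assuming no arc of $gWf$ is bivalent: $g$ is split but not join, so $h(g) = v_0$ is not a join node and therefore (by compressedness) split; this makes $e_1$ a split arc, and since $e_1$ is not bivalent its head $v_1$ must be split as well. Iterating, each $e_i$ is split and each $v_i$ is split, until finally $v_k = t(f)$ is split, so $f$ is a split arc; combined with $f$ being a join arc by hypothesis, this makes $f$ bivalent, the desired contradiction. The empty-$W$ case reduces to the same one-step reasoning on the two-arc walk $gf$. I do not anticipate any genuine obstacle here: the statement is a purely local consequence of the fact that in a compressed graph every arc must declare itself as join or split, and the only care needed is the bookkeeping of boundary cases, which collapse to the same propagation step rather than requiring separate treatment.
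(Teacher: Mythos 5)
Your proof is correct. The paper states this as an unproved observation, and your propagation argument (non-bivalent nodes force split-free/join-free arcs, which by compressedness must stay join arcs in case $(i)$ and split arcs in case $(ii)$, until a contradiction at $t(g)$ resp.\ $f$) is exactly the intended justification; your remark that the bivalent node found may be an endpoint of $W$ (hence an internal node of $fWg$, e.g.\ when $W$ is empty) is a useful precision that matches how the paper later applies the observation.
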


In the rest of this paper we will use the following technical lemmas (omitted proofs are in \Cref{subsec:compression}.).

\begin{restatable}{lemma}{omnitigbivnodearc}
\label{lem:omnitig-biv-node-arc}
Every maximal omnitig of a compressed graph contains both a join arc and a split arc.
Moreover, it has a bivalent arc or an internal bivalent node.
\end{restatable}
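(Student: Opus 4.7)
The plan is to prove both parts by contradiction, exploiting the maximality of $W$ together with the fact that $G$ is compressed (so no node or arc is biunivocal).

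\emph{Existence of a join arc (and, by symmetry, of a split arc).} Suppose that the maximal omnitig $W = e_0 e_1 \dots e_\ell$ contains no join arc. Then each $h(e_i)$ has in-degree~$1$, and since no arc of $G$ is biunivocal, each $e_i$ must be a split arc. In particular $h(W) = h(e_\ell)$ has in-degree~$1$ and, not being biunivocal, must have out-degree at least~$2$; let $g$ be any out-arc of $h(W)$. I claim $Wg$ is still an omnitig, contradicting right-maximality. The only new omnitig conditions concern forbidden paths from $t(g) = h(e_\ell)$ to $h(e_{i-1})$, for $1 \leq i \leq \ell + 1$, with last arc different from $e_{i-1}$; but each $h(e_{i-1})$ has in-degree~$1$ whose unique in-arc is $e_{i-1}$, so any non-empty path ending at $h(e_{i-1})$ is forced to end with $e_{i-1}$, violating the last-arc requirement. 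Hence $Wg$ is an omnitig, contradicting right-maximality of $W$. Applying the same argument in the reverse graph $G^R$ (in which omnitigs are preserved) shows that $W$ must also contain a split arc.

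\emph{Bivalent arc or internal bivalent node.} Suppose instead that $W$ contains neither. Since $G$ is compressed, every arc of $W$ is pure-join (join but not split) or pure-split (split but not join), and every internal node of $W$ is pure-join or pure-split (as it is neither bivalent by assumption nor biunivocal). I propagate the arc type along $W$: let $v = h(e_{i-1}) = t(e_i)$ be an internal node. If $v$ is pure-join (in-degree $\geq 2$, out-degree~$1$), then $e_i$ is split-free and hence pure-join, while $e_{i-1}$, being a join arc that is not bivalent, is also pure-join. Symmetrically, if $v$ is pure-split, then both $e_{i-1}$ and $e_i$ are pure-split. Thus consecutive arcs of $W$ share the same pure type, so by induction all arcs of $W$ have a single type; but the first part guarantees that $W$ contains both a join and a split arc, a contradiction.

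The main obstacle is in the first part, where one must argue that the single-arc extension $Wg$ remains an omnitig. The key observation driving everything is that each $h(e_{i-1})$ has in-degree~$1$, which prevents any potential forbidden path from ending inside $W$ without using $e_{i-1}$ as its last arc. Once this is established, the bivalent-arc/node part is a short type-propagation argument that leverages the first part.
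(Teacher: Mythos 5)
Your proof is correct and takes essentially the same approach as the paper: the first part is the paper's extension argument applied on the symmetric side (you assume no join arc and extend to the right, the paper assumes no split arc and extends to the left, both resting on the fact that an in-degree-one, resp.\ out-degree-one, internal node cannot be the endpoint of a forbidden path). For the second part, the paper invokes \Cref{obs:joinsplitarcs} (stated without proof), whereas your type-propagation along consecutive arcs reproves that observation inline; this is marginally more self-contained and avoids the paper's case split on whether the join arc precedes or follows the split arc.
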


\vspace{-.3cm}
\begin{restatable}{lemma}{noftwice}
\label{lem:noftwice}
Let $e$ be a join or a split arc. No omnitig can traverse $e$ twice.
\end{restatable}

\vspace{-.3cm}
\begin{restatable}{lemma}{nobivntwice}
\label{lem:nobivntwice}
Let $u$ be a bivalent node. No omnitig contains $u$ twice as an internal node.  
\end{restatable}

\subsection{Macronodes}

We now introduce a natural partition of the nodes of a compressed graph; each class of such a partition (i.e. a \emph{macronode}) contains precisely one bivalent node. We identify each class with the unique bivalent node they contain. All other nodes belonging to the same class are those that either reach the bivalent node with a join-free path or those that are reached by the bivalent node with a split-free path (recall \Cref{fig:macronodes-example}). 

\begin{definition}[Macronode]
	\label{def:macronode}
	Let $v$ be a bivalent node of $G$. Consider the following sets:
	\begin{itemize}
		\itemsep0em
		\item[] $R^+(v) := \{ u \in V(G) : \exists~ \text{a join-free path from}~ v ~\text{to}~ u \}$;
		\item[] $R^-(v) := \{ u \in V(G) : \exists~ \text{a split-free path from}~ u ~\text{to}~ v \}.$ 
	\end{itemize} 
	The subgraph $\macronode{v}$ induced by $R^+(v)\cup R^-(v)$ is called the \emph{macronode centered in $v$}.
\end{definition}

\begin{restatable}{lemma}{macropartition}
\label{obs:macropartition}
In a compressed graph $G$, the following properties hold:
\begin{itemize}
	\itemsep0em
    \item[i)] The set $\{V(\macronode{v}) : \text{$v$ is a bivalent node of $G$}\}$ is a partition of $V(G)$. 
    \item[ii)] In a macronode $\mathcal{M}_v$, $R^+(v)$ and $R^-(v)$ induce two trees with common root $v$, but oriented in opposite directions. Except for the common root, the two trees are node disjoint, all nodes in $R^-(v)$ being join nodes and all nodes in $R^+(v)$ being split nodes.
    \item[iii)] The only arcs with endpoints in two different macronodes are bivalent arcs. 
\end{itemize}
\end{restatable}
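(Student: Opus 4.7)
I plan to exploit the trichotomy of node types forced by compression: every node of $G$ is either \emph{pure join} (in-degree $\geq 2$, out-degree $=1$), \emph{pure split} (in-degree $=1$, out-degree $\geq 2$), or bivalent (both), since biunivocal nodes are excluded. The technical core will be a trace-termination lemma: starting from any pure split $u$, the chain of unique in-neighbors $u, u_{-1}, u_{-2}, \dots$ consists entirely of pure splits until it first reaches a unique bivalent node $v$, and this $v$ is exactly the center with $u \in R^+(v)$. The key point is that no pure join can appear inside this chain, because a pure join (out-degree $1$) pointing to a pure split (in-degree $1$) would form a biunivocal arc, contradicting compression; and the chain cannot cycle through pure splits only, since such a cycle would admit no in-arcs from outside, forcing by strong connectivity the cycle to exhaust $V(G)$, which contradicts the fact that $u$ has out-degree $\geq 2$. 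A symmetric trace-forward argument places every pure join in a unique $R^-(v)$.

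For part (i), the trace-termination lemma assigns each non-bivalent node to a unique macronode. A bivalent $u$ belongs to $V(\macronode{u})$ via empty paths; conversely, $u$ cannot lie in $V(\macronode{v})$ for $v \neq u$, since being the head of any non-empty join-free path would force $u$ to be non-join, contradicting bivalency (and symmetrically for split-free).

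For part (ii), the two trees are the out-arborescence of join-free arcs inside $R^+(v)$ and the in-arborescence of split-free arcs inside $R^-(v)$. Each non-root $u \in R^+(v)$ has in-degree $1$ in $G$, and its unique in-arc is join-free (since $u$ is non-join) with tail in $R^+(v)$ by the trace-back. This gives exactly $|R^+(v)| - 1$ arcs, which together with reachability from $v$ yields an out-arborescence rooted at $v$; the split-free in-arborescence on $R^-(v)$ is symmetric. Node-disjointness of the two node sets apart from $v$ follows because a common non-root element would be simultaneously non-join and non-split, i.e.\ biunivocal, contradicting compression; the assertions that $R^+(v) \setminus \{v\}$ consists of split nodes and $R^-(v) \setminus \{v\}$ of join nodes are immediate from the definitions of join-free and split-free paths.

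For part (iii), I take an arc $e = (x, y)$ with $x \in V(\macronode{v_1})$, $y \in V(\macronode{v_2})$, $v_1 \neq v_2$, and suppose for contradiction that $e$ is not bivalent. Then $e$ must be split-free or join-free (not both, since biunivocal arcs are excluded); assume split-free by symmetry. Then $x$ is non-split, hence pure join (bivalent is split), and $e$ is the unique out-arc of $x$. The forward trace from $x$ therefore begins $x, y, \dots$ and terminates at $v_1$ through pure joins; either $y = v_1$, or $y$ is pure join with $y \in R^-(v_1)$. In both cases $y \in V(\macronode{v_1})$, contradicting $v_1 \neq v_2$. I expect the only nontrivial step to be the trace-termination lemma, which uses the compression hypothesis essentially to exclude the pure-join-to-pure-split transition along the chain; everything else is then bookkeeping.
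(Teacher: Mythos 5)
Your proof is correct, and it is in fact more detailed than the paper's own argument, which only proves the disjointness half of part (i) explicitly and dismisses coverage, (ii) and (iii) as immediate from \Cref{def:macronode}. The paper's disjointness argument is a divergence argument: if $x$ lay in two macronodes $\macronode{u}$ and $\macronode{v}$, the two split-free paths from $x$ would share a prefix (out-degree $1$ along the way) and the node at the first arc where they diverge would have out-degree at least $2$ while being the tail of a split-free arc, a contradiction. You instead organize everything around a trace-termination lemma: the univocal backward (resp.\ forward) chains through pure splits (resp.\ pure joins) never meet a node of the opposite pure type (that arc would be biunivocal, excluded by compression) and cannot cycle (strong connectivity plus a degree count), so each non-bivalent node is assigned to a unique bivalent center. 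This is the time-reversed twin of the paper's mechanism --- both exploit univocality of join-free and split-free chains in a compressed graph --- but your route buys an explicit proof of the coverage part of (i), which genuinely needs strong connectivity and the no-biunivocal-arc hypothesis exactly as you deploy them, and it yields concrete derivations of (ii) and (iii) rather than an appeal to the definition. One small point of care: the subgraphs induced by $R^+(v)$ and $R^-(v)$ may also contain bivalent (in particular self-bivalent) arcs, so reading the ``two trees'' of (ii) as the arborescences formed by the join-free, resp.\ split-free, arcs --- as you do --- is the right interpretation of the statement.
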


To analyze how omnitigs can traverse a macronode and the degrees of freedom they have in choosing their directions within the macronode, we introduce the following definitions. \emph{Central-micro omnitigs} are the smallest omnitigs that cross the center of a macronode. \emph{Left- and right-micro omnitigs} start from a central-micro omnitig and proceed to the periphery of a macronode. Finally, we combine left- and right-micro omnitigs into microtigs (which are not necessarily omnitigs themselves); recall \Cref{fig:microtigs}.

\begin{definition}[Micro omnitigs, microtigs]
\label{def:omni_type}
Let $f$ be a join arc and $g$ be a split arc, such that $fg$ is an omnitig.
\begin{itemize}
\itemsep0em
\item The omnitig $fg$ is called a \emph{central-micro omnitig}.
\item An omnitig $fgW$ ($Wfg$, resp.) that does not contain a bivalent arc as an internal arc is called a \emph{right-micro omnitig} (respectively, \emph{left-micro omnitig}). 
\item A walk $W = W_1 fg W_2$, where $W_1 fg$ and $fg W_2$ are, respectively, a left-micro omnitig, and a right-micro omnitig, is called a \emph{microtig}.
\end{itemize}
\end{definition}

\begin{sloppypar}
Given a join arc $f$, we first find central micro-omnitigs (of the type $fg$) with the generic function $\mathsf{RightExtension}(G,f,W)$ from \Cref{alg:extensions}, where $W$ is a join-free path (possibly empty). This extension uses the following weak version of the \extprop (since $W$ is join-free). To build up the intuition, we also give a self-contained proof of this weaker result.
\end{sloppypar}

\begin{lemma}[Weak form of the \extprop~ (\Cref{thm:strong-ita-result})]
\label{lem:weak_ita_lem}
Let $fW$ be an omnitig in $G$, where $f$ is a join arc and $W$ is a join-free path. 
Then $fWg$ is an omnitig if and only if $g$ is the only arc with $t(g) = h(W)$ such that there exists a path from $h(g)$ to $h(f)$ in $G \smallsetminus f$.
\end{lemma}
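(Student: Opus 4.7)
The plan is to prove each implication separately, using the join-free structure of $W$ to tightly restrict the possible forbidden paths for $fWg$.

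For the forward direction, I assume $fWg$ is an omnitig. I first handle uniqueness: if a sibling $g' \neq g$ with $t(g') = h(W)$ admits a path $P'$ from $h(g')$ to $h(f)$ in $G \setminus f$, then $g'P'$ is a walk from $h(W) = t(g)$ to $h(f)$ starting with $g' \neq g$ and with last arc distinct from $f$; after a walk-to-path reduction this yields a forbidden path for $fWg$ at indices $i = 1$, $j = \ell + 1$, contradicting the omnitig hypothesis. For existence, I rely on a trivial-cycle property that follows from $fW$ being an omnitig starting with a join arc: every closed path of $G$ containing a sibling $f'$ of $f$ has the form $f' W Q$, where $Q$ is a path from $h(W)$ to $t(f')$ (any deviation of the closed path from $W$ immediately after $f'$ would furnish a forbidden path for $fW$). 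Strong connectivity supplies such a closed path $C = f'WQ$. The arc $f$ cannot occur in $Q$: $C$ is simple, $f \neq f'$, and $f \notin W$ by \Cref{lem:noftwice}; moreover, if $f$ were in $Q$ then $h(f)$ would appear twice in $C$, contradicting simplicity. Hence the first arc $g_*$ of $Q$ is an out-arc of $h(W)$ witnessing goodness, and combined with the already-proven uniqueness, $g_* = g$.

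For the reverse direction, I assume the unique-good-arc condition and suppose, for contradiction, that $fWg$ is not an omnitig. Since $fW$ is an omnitig, any forbidden path for $fWg$ must involve the new arc $g$, i.e., have $j = \ell + 1$ in the definition. I split on the target index $i$. If $i = 1$, the forbidden path goes from $h(W)$ to $h(f)$, starts with some arc $g' \neq g$, and has last arc different from $f$; since as a path it visits $h(f)$ only at its end, it must avoid $f$ entirely and thus lie in $G \setminus f$. Its suffix after $g'$ then witnesses that $g'$ is good, contradicting uniqueness. If $i \geq 2$, the target $h(e_{i-1})$ is the head of an arc of the join-free path $W$, so it has in-degree $1$ in $G$ with unique in-arc $e_{i-1}$; every nonempty path ending there has last arc $e_{i-1}$, contradicting the condition ``last arc $\neq e_{i-1}$''.

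The step I expect to be the main obstacle is the walk-to-path reduction in the uniqueness step of the forward direction: naively cutting cycles in $g'P'$ might pass through $h(W)$ and inadvertently replace the first arc $g'$ with $g$, destroying the forbidden witness. I plan to handle this either by working throughout with ``forbidden walks'' (which in the omnitig literature are equivalent to forbidden paths), or by choosing a minimum-length walk from $h(W)$ to $h(f)$ in $G \setminus f$ whose first arc is not $g$ and using minimality to preclude any re-visit of $h(W)$ that could be shortened. All other ingredients---the trivial-cycle structure of $fW$ and the in-degree-$1$ consequence of join-freeness---should fall into place cleanly.
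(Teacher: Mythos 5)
Your proof is correct and follows essentially the same route as the paper's: existence of a good out-arc via a closed path through a sibling join arc of $f$, which is forced to begin by traversing $W$; uniqueness by converting a good sibling $g'$ into a forbidden path for $fWg$; and the converse by using that $fW$ is an omnitig and $W$ is join-free to force any forbidden path for $fWg$ to run from $h(W)$ to $h(f)$ with last arc different from $f$, hence to lie in $G \smallsetminus f$. The walk-to-path subtlety you flag is left at the same informal level in the paper and in fact needs neither of your heavier fixes: since every node along the join-free path $W$ has in-degree $1$, a witness path ending at $h(f)$ can revisit $h(W)$ only by first traversing all of $W$ starting from $h(f)$, which would make it visit $h(f)$ internally---impossible for a path ending at $h(f)$ unless it starts there, i.e.\ $h(g')=h(f)$, and in that degenerate case the single arc $g'$ is already a forbidden path for $fWg$.
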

\begin{proof}
To prove the existence of an arc $g$, which satisfies the condition, consider any closed path $Pf'$ in $G$, where $f'$ is an arbitrary sibling join arc of $f$. 
Notice that $W$ is a prefix of $Pf'$, since $fW$ is an omnitig, since otherwise one can easily find a forbidden path for the omnitig $fW$ as a subpath of $Pf'$, from the head of the very first arc of $Pf'$ that is not in $W$ to $h(f')$. Therefore, let $g$ be the the first arc of $Pf'$ after the prefix $W$, in such a way that the suffix of $Pf'$ starting from $h(g)$ is a path to $h(f)$ in $G \smallsetminus f$.

For the direct implication, assume that there is a path $P$ in $G \smallsetminus f$ from $h(g')$, where $g'$ sibling of $g$ and $g' \neq g$, to $h(f)$. Then, this forbidden path $P$ contradicts the fact that $fWg$ is an omnitig.

For the reverse implication, assume that $fWg$ is not an omnitig. Then take any forbidden path $P$ for $fWg$. Since $fW$ is an omnitig, $P$ must start with some $g'$ sibling arc of $g, g' \neq g$. Since $W$ is join-free, then $P$ must end in $h(f)$ with the last arc different from $f$. Therefore, $P$ is a path from $h(g')$ to $h(f)$ in $G \smallsetminus f$.
\end{proof}

Not only \Cref{lem:weak_ita_lem} gives us an efficient extension mechanism, but it also immediately implies the \yintprop (for clarity of reusability, we state both its symmetric variants). 

\begin{corollary}[\yintprop]
\label{cor:Y-lemma}
Let $fWg$ be an omnitig, where $f$ is a join arc, and $g$ is a split arc.
\begin{itemize}
	\itemsep0em
    \item[i)] If $W$ is a join-free path (possibly empty), then for any $g'$ a sibling split arc of $g$, the walk $fWg'$ is not an omnitig.
    \item[ii)] If $W$ is a split-free path (possibly empty), then for any $f'$ a sibling join arc of $f$, the walk $f'Wg$ is not an omnitig.
\end{itemize}
\end{corollary}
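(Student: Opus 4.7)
The plan is to obtain both parts as essentially immediate consequences of the Weak Extension Property (\Cref{lem:weak_ita_lem}) stated just above, with part (ii) reduced to part (i) via graph reversal.

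For part (i), I would first note that $fW$ is itself an omnitig, being a prefix of the omnitig $fWg$, and that $W$ is join-free by hypothesis, so \Cref{lem:weak_ita_lem} applies. Hence the very fact that $fWg$ is an omnitig is equivalent to $g$ being the \emph{unique} arc with tail at $h(W)$ admitting a path to $h(f)$ in $G \smallsetminus f$. Now take any sibling split arc $g' \neq g$, i.e.\ $t(g') = t(g) = h(W)$. If $fWg'$ were also an omnitig, then applying \Cref{lem:weak_ita_lem} again, this time with $g'$ playing the role of the extending arc, would likewise force $g'$ to be the unique such arc, contradicting the uniqueness already witnessed by $g$. Hence $fWg'$ is not an omnitig.

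For part (ii), I would argue by reverse-graph symmetry, using the fact noted earlier in the paper that $W$ is an omnitig in $G$ if and only if $W^R$ is an omnitig in $G^R$, together with the observation that join and split arcs swap roles between $G$ and $G^R$, and split-free paths in $G$ correspond to join-free paths in $G^R$. Under these correspondences, $g^R W^R f^R$ is an omnitig of $G^R$ whose first arc $g^R$ is a join arc, whose middle walk $W^R$ is join-free, and whose last arc $f^R$ is a split arc. This is precisely the situation covered by part~(i) applied in $G^R$. Choosing an arbitrary sibling split arc of $f^R$ in $G^R$ (which is the reverse of a sibling join arc $f'$ of $f$ in $G$) and invoking part~(i) yields that $g^R W^R (f')^R$ is not an omnitig in $G^R$; reversing back, $f'Wg$ is not an omnitig in $G$.

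The only subtlety, rather than a true obstacle, is making sure the uniqueness part of \Cref{lem:weak_ita_lem} is used correctly: the lemma is a biconditional, so once $g$ is identified as \emph{the} arc satisfying the reachability condition, no other sibling split arc can satisfy it, and this is exactly what rules out $fWg'$ being an omnitig. No further graph-theoretic analysis is needed beyond this and the standard reversal correspondence.
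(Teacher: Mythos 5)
Your proposal is correct and takes essentially the same route as the paper: the paper presents the \yintprop as an immediate consequence of \Cref{lem:weak_ita_lem} (with the two items being symmetric variants), and your argument simply spells out the uniqueness in the biconditional for part~(i) and the standard reverse-graph correspondence for part~(ii).
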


We now use the \yintprop to prove the \xintprop.

\begin{proof}[Proof of the \xintprop~(\Cref{thm:Xthm})]
For point $i)$, assume there exists an arc $g_3$, distinct from $g_1$ and $g_2$, such that $t(g_3) = v$. Consider any shortest closed path $g_3 P$ (with $P$ possibly empty), which exists by the strong connectivity of $G$. Let $f$ be the last arc of $P$.
If $f \neq f_1$ then $g_3P$ is a forbidden path for the omnitig $f_1 g_1$, since $g_3 \neq g_1$.
Otherwise, if $f = f_1$ then $g_3 P$ is a forbidden path for the omnitig $f_2 g_2$, since $g_3 \neq g_1$.
In both cases we reached a contradiction, therefore $g_1$ and $g_2$ are the only arcs in $G$ with $t(g_1) = t(g_2) = v$. 
To prove that $f_1$ and $f_2$ are the only arcs in $G$ with $h(f_1) = h(f_2) = v$ one can proceed by symmetry.
 
Point $ii)$ follows from \Cref{cor:Y-lemma} (by taking the $W$ path of its statement to be empty) and from the symmetric analogue of \Cref{cor:Y-lemma}.
\end{proof}

\begin{sloppypar}
Given an omnitig $fg$, we obtain the maximal right-micro omnitig with function $\mathsf{MaximalRightMicroOmnitig}(G,f,g)$ from \Cref{alg:extensions}. This works by extending $fg$, as much as possible, with the function $\mathsf{RightExtension}(G,f,W)$ (where initially $W = g$). This extension stops when reaching the periphery of the macronode (i.e.~a bivalent arc).
\end{sloppypar}

\begin{algorithm}[t]
    \caption{Functions $\mathsf{RightExtension}$ and $\mathsf{MaximalRightMicroOmnitig}$.}
    \label{alg:extensions}
    \Function{$\mathsf{RightExtension}(G,f,W)$}{
	    \Input{The compressed graph $G$, $fW$ omnitig with $W$ join-free.}
	    \Returns{The unique arc $e$ such that $fWe$ is an omnitig, if it exists. Otherwise, $\Nil$.}
		\BlankLine
		
		$S \gets \{ e \in E(G) \mid t(e) = h(W) \text{ and there is a path from } h(e) \text{ to } h(f) \text{ in } G \smallsetminus f \}$ \;
		
		\lIf{there is exactly one arc $e \in S$}{\Return $e$}
		
    	\BlankLine

		\Return $\Nil$ \;
	}
	\BlankLine
	\Function{$\mathsf{MaximalRightMicroOmnitig}(G,f,g)$}{
		\Input{The compressed graph $G$, $fg$ omnitig with $f$ join arc and $g$ split arc.}
		\Returns{The path $W$ such that $fgW$ is a maximal right-micro omnitig.}
		
    	\BlankLine

        $W \gets$ empty path

	    \While{\True}{
    	    \lIf{$fgW$ ends with a bivalent arc}{\Return $W$}
    
    	    $e \gets \mathsf{RightExtension}(G,f,W)$ \;
    	    \lIf{$e = \Nil$}{\Return $W$}

        	\BlankLine

            $W \gets We$ \;
	    }
    }
\end{algorithm}

\begin{lemma}
\label{lem:alg1correct}
The functions in \Cref{alg:extensions} are correct. Moreover, assuming that the graph has constant degree, we can preprocess it in time $O(m+n)$ time, so that $\mathsf{RightExtension}(G,f,W)$ runs in constant time, and $\mathsf{MaximalRightMicroOmnitig}(G,f,g)$ runs in time linear in its output size.
\end{lemma}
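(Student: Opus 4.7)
The plan is to establish correctness of the two functions separately and then read off the running-time bounds from \Cref{cor:italiano_result} combined with the constant-degree assumption. Correctness of $\mathsf{RightExtension}(G,f,W)$ is a direct compilation of the Weak Extension Property (\Cref{lem:weak_ita_lem}): by hypothesis $fW$ is an omnitig and $W$ is join-free, so the lemma applies, and the set $S$ computed on Line~1 is exactly the candidate set considered in the lemma's statement. The lemma asserts that $fWe$ is an omnitig iff $e$ is the unique element of $S$, so the returned value matches the stated specification.

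To handle $\mathsf{MaximalRightMicroOmnitig}(G,f,g)$, I will prove by induction on the loop iterations the invariant that, at the start of each iteration, $fgW$ is a right-micro omnitig and the suffix of $fgW$ obtained by dropping $f$ is join-free up to its last arc, so that the precondition of $\mathsf{RightExtension}$ is met on the subsequent call. The base case holds because $fg$ is the given central-micro omnitig and, by \Cref{obs:macropartition}(ii), $g$ enters the tree $R^+(v)$ whose non-root nodes are split (hence non-join). For the inductive step, if $fgW$ already ends in a bivalent arc, any further extension would turn that arc into an internal arc and thus violate the definition of right-micro omnitig, so returning $W$ is both safe and maximal. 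Otherwise the invariant lets me invoke \Cref{lem:weak_ita_lem}, and $\mathsf{RightExtension}$ returns the unique arc (if any) extending the omnitig. Termination is immediate since each successful iteration appends a new arc and, by \Cref{lem:noftwice}, no split arc can be repeated along an omnitig.

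The hard part will be the invariant-maintenance itself, because \Cref{lem:weak_ita_lem} cannot be applied without the join-free hypothesis on the suffix. This will be discharged by combining the macronode structure of \Cref{obs:macropartition}---all non-root nodes of $R^+(v)$ are split, hence any arc internal to that tree is join-free---with the bivalent-arc test at the top of the loop: by \Cref{obs:macropartition}(iii) the only arcs that could carry the walk out of $R^+(v)$ (and thus potentially introduce a join target) are bivalent, so such an arc can only appear as the last arc of $W$, triggering termination on the next iteration and never becoming an internal arc that would spoil the precondition.

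For the running time, I will preprocess $G$ once with the Georgiadis et al.\ data structure of \Cref{cor:italiano_result} in $O(m+n)$ time. Each call to $\mathsf{RightExtension}$ then inspects the $O(1)$ out-arcs of $h(W)$ and performs one $O(1)$-time reachability query per candidate, costing $O(1)$ in total. Consequently $\mathsf{MaximalRightMicroOmnitig}$ performs at most $|W|+1$ iterations of $O(1)$ cost each---one bivalent-arc test and one call to $\mathsf{RightExtension}$ per iteration---for $O(|W|+1)$ total time, which is linear in the output size.
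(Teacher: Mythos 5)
Your proof is correct and takes essentially the same route as the paper: correctness of $\mathsf{RightExtension}$ is read off from \Cref{lem:weak_ita_lem}, and the time bounds come from the preprocessing of \Cref{cor:italiano_result} together with the constant-degree assumption, each loop iteration costing $O(1)$ and adding one arc to the output. The paper's own proof states only this; your explicit loop invariant (join-freeness of the suffix, maintained via the macronode structure of \Cref{obs:macropartition} and the bivalent-arc test at the top of the loop) supplies the precondition-maintenance argument that the paper leaves implicit, and it is sound.
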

\begin{proof}
For $\mathsf{RightExtension}(G,f,W)$, recall \Cref{lem:weak_ita_lem} and \Cref{cor:italiano_result} and that the input graph is a compressed graph, and as such every node has constant degree.

For $\mathsf{MaximalRightMicroOmnitig}(G,f,g)$, notice that every iteration of the \emph{while} loop increases the output by one arc and takes constant time, since $\mathsf{RightExtension}(G,f,W)$ runs in $O(1)$ time.
\end{proof}

\Cref{alg:maximalbigenerator} is the procedure to obtain all maximal microtigs of a compressed graph. It first finds all central micro-omnitigs $fg$ (with $\mathsf{RightExtension}(G,f,\emptyset)$), and it extends each to the right (i.e.~forward in $G$) and to the left (i.e.~forward in $G^R$) with $\mathsf{MaximalRightMicroOmnitig}$.

\begin{algorithm}[htb]
    \caption{Function $\mathsf{AllMaximalMicrotigs}$
    \label{alg:maximalbigenerator}}
    
    \Function{$\mathsf{AllMaximalMicrotigs}(G)$}{
		\Input{The compressed graph $G$.}
		\Returns{All the maximal microtigs in $G$.}
		
    	\BlankLine
    	
		$S \gets \emptyset$ \;
		
		\ForEach{bivalent node $u$ in $G$}{
    		\ForEach{join arc $f$ with $h(f) = u$}{
        		\ForEach{split arc $g$ with $t(g) = u$}{
        		    \If{$g = \mathsf{RightExtension}(G,f,\emptyset)$\label{line:right-extension-check}}{
                	    \Comment{$fg$ is a central-micro omnitig}

                        \BlankLine
                	    
                	    \Comment{applied symmetrically for left- and right-micro omnitigs}

                	    $W_1 \gets \mathsf{MaximalRightMicroOmnitig}(G^R,g,f)$ \;
                	    $W_2 \gets \mathsf{MaximalRightMicroOmnitig}(G,f,g)$ \;
                	    
                        \BlankLine

                        \BlankLine

                	    add $W_1 fg W_2$ to $S$ \;
        		    }
    		    }
		    }
		}
		
    	\BlankLine
    	
    	\Return $S$ \;
    }
\end{algorithm}

To prove the correctness of \Cref{alg:maximalbigenerator}, we need to show some structural properties of micro-omnitigs and microtigs, as follows.

\begin{lemma}
\label{lem:atmostonemicro}
Let $fg$ be a central-micro omnitig. The following hold:
\begin{itemize}
	\itemsep0em
    \item[i)] There exists at most one maximal right-micro omnitig $fgW$, and at most one maximal left-micro omnitig $Wfg$.
    \item[ii)] There exists a unique maximal microtig containing $fg$.
\end{itemize}
\end{lemma}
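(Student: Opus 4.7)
The plan is to prove (i) for right-micro omnitigs (the left case follows by the same argument applied to $G^R$), and then deduce (ii) from (i) by merging the unique left- and right-maximal extensions at $fg$.

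For (i), I would first extract a structural invariant about the shape of any right-micro omnitig $fgW$ with $W = e_1 \cdots e_k$: the prefix $g e_1 \cdots e_{k-1}$ is a join-free path. This follows from \Cref{obs:macropartition}, which says every node of $R^+(v) \setminus \{v\}$ is a pure split node, combined with the fact that in a compressed graph a non-bivalent split arc is automatically join-free, together with the defining condition of a right-micro omnitig that forbids bivalent arcs as internal arcs. With this invariant, the weak form of the Extension Property (\Cref{lem:weak_ita_lem}) applies at every intermediate prefix $f g e_1 \cdots e_{i-1}$ — because $g e_1 \cdots e_{i-1}$ is join-free — and it yields at most one arc $e_i$ making $f g e_1 \cdots e_i$ an omnitig. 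Iterating this, two maximal right-micro omnitigs $fgW$ and $fgW'$ must coincide arc by arc: they start with $fg$, the next arc is forced whenever either can be further extended, and both must stop under the same condition (either the most recently added arc is bivalent, in which case extending further would violate the ``no internal bivalent arc'' rule, or \Cref{lem:weak_ita_lem} returns no continuation). Hence $W = W'$.

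For (ii), existence is immediate: letting $fg W_2$ and $W_1 fg$ be the unique maximal right- and left-micro omnitigs supplied by (i), the walk $W_1 fg W_2$ is a microtig containing $fg$ by \Cref{def:omni_type}. For uniqueness and maximality, I would take any microtig $W'_1 fg W'_2$ containing $fg$ as its central-micro omnitig. Then $W'_1 fg$ is a left-micro omnitig and, applying the symmetric version of the iterative extension argument from (i), it can be extended to the left until it reaches the unique maximal left-micro omnitig, which must be $W_1 fg$; therefore $W'_1$ is a suffix of $W_1$. Symmetrically $W'_2$ is a prefix of $W_2$. Hence $W'_1 fg W'_2$ is a subwalk of $W_1 fg W_2$, which is therefore the unique maximal microtig containing $fg$.

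The delicate step I expect to be the main obstacle is correctly managing the interface between the interior of the macronode, where the join-free invariant holds and \Cref{lem:weak_ita_lem} is directly applicable, and the boundary of the macronode where the omnitig may legally terminate with a bivalent arc. One has to argue carefully that two maximal right-micro omnitigs really do stop in a synchronized way: if the greedy extension adds a bivalent arc $e_i$, then the definition of right-micro omnitig forbids any further extension (since $e_i$ would become internal), so both walks must stop there; and if $e_i$ is not bivalent, then \Cref{lem:weak_ita_lem} governs the next step identically for both. Once this synchronization is pinned down, the uniqueness claims of (i) and (ii) fall out essentially for free.
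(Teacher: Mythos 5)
Your proposal is correct and follows essentially the same approach as the paper: establish that the portion of a right-micro omnitig up to (but excluding) its last arc is join-free (using that internal arcs are non-bivalent, and that in a compressed graph a non-bivalent split arc is join-free), then invoke \Cref{lem:weak_ita_lem} (equivalently, the Y-intersection Property, \Cref{cor:Y-lemma}) to force uniqueness of each extension; part (ii) then follows directly from (i). The paper phrases the argument more compactly as a minimal-counterexample argument, while you unroll it as a step-by-step greedy extension, but the key structural invariant and the key lemma are the same.
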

\begin{proof}
We prove only the first of the two symmetric statements in $i)$. If $g$ is a bivalent arc, the claim trivially holds by definition of maximal right-micro omnitig. Otherwise, a minimal counterexample consists of two right-micro omnitigs $fgPg_1$ and $fgPg_2$ (with $P$ a join-free path possibly empty), with $g_1$ and $g_2$ distinct sibling split arcs.  
Since $gP$ is a join-free path, the fact that both $fgPg_1$ $fgPg_2$ are omnitigs contradicts the \yintprop (\Cref{cor:Y-lemma}).

For $ii)$, given $fg$, by $i)$ there exists at most one maximal left-micro omnitig $W_1fg$ and at most one maximal right-micro omnitig $fgW_2$, as such there is at most one maximal microtig $W_1 fg W_2$. 
\end{proof}

\begin{lemma}
\label{lem:arcs-in-microtigs}
Let $e$ be an arc. The following hold:
\begin{itemize}
	\itemsep0em
    \item[i)] if $e$ is not a bivalent arc, then there exists at most one maximal microtig containing~$e$.
    \item[ii)] if $e$ is a bivalent arc, there exist at most two maximal microtigs containing~$e$, of which at most one is of the form $eW_1$, and at most one is of the form $W_2e$.
\end{itemize}
\end{lemma}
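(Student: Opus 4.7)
First I would reduce the counting problem to uniqueness of the central-micro omnitig, since by \Cref{lem:atmostonemicro}(ii) each central-micro omnitig sits in a unique maximal microtig. Toward this I would prove a localization claim: if $e$ is a non-bivalent join arc and $W_1 fg W_2$ is a maximal microtig containing $e$, then $e \in W_1 f$ (symmetrically, a non-bivalent split arc lies in $g W_2$). Indeed $e \neq g$, since otherwise $g$ would be both split and join, hence bivalent; and if $e$ were an arc of $W_2$, the subwalk of $fgW_2$ from $g$ up to $e$ would start with the split arc $g$ and end with the join arc $e$, so by \Cref{obs:joinsplitarcs}(ii) it would contain a bivalent arc---but every arc of this subwalk is either internal to $fgW_2$ (hence non-bivalent by the definition of right-micro omnitig) or equals $e$ (non-bivalent by assumption), a contradiction.

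Next, for a non-bivalent join arc $e$, I would show that $fg$ is forced by $e$. Since $e$ is not a split arc, $t(e)$ has out-degree one and, by compression, is join-only; hence by \Cref{obs:macropartition} it belongs to a unique macronode $\macronode{v}$, and the unique split-free path $P$ from $t(e)$ to $v$ inside the tree on $R^-(v)$ starts with $e$. My key claim is that the subwalk $Q$ of $W_1 f$ going from $h(e)$ to $h(f)$ coincides with the suffix of $P$ after $e$: traversing $Q$ one arc at a time, if an intermediate node $u$ were bivalent then the next arc of $Q$ would be a split arc, producing a split-to-join subwalk ending at $f$ whose arcs all lie internal to $W_1 fg$---contradicting \Cref{obs:joinsplitarcs}(ii) together with the absence of internal bivalent arcs in a left-micro omnitig. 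Thus every intermediate node of $Q$ is join-only with a single outgoing arc, $Q$ is uniquely determined, and its endpoint must be $v$. So $f$ is the last arc of $P$, $g$ is the unique arc extending $f$ to a central-micro omnitig (by \Cref{lem:weak_ita_lem} with $W$ empty), and \Cref{lem:atmostonemicro}(ii) closes~(i). The case of non-bivalent split arcs is symmetric, via $G^R$.

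For~(ii), a bivalent $e$ is simultaneously a join and a split arc, so $e$ cannot be an internal arc of any left- or right-micro omnitig; in any microtig $W_1 fg W_2$ containing $e$ it must therefore occupy either the very first or the very last position, yielding the two forms $eW'$ and $W'e$. Applying the argument of~(i) to $e$ viewed as a join arc gives at most one microtig of the form $eW'$, and applying the symmetric argument to $e$ viewed as a split arc gives at most one of the form $W'e$; together, at most two maximal microtigs contain~$e$.

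The main obstacle is the inductive step in the second paragraph forcing $Q$ to follow $P$: one must simultaneously rule out the walk branching at a bivalent intermediate node and the walk overshooting $v$ and continuing beyond it. Both hazards are neutralized by repeated use of \Cref{obs:joinsplitarcs}(ii) on appropriate suffixes of $Q$, but the bookkeeping has to be precise enough that the arcs in play always remain internal to $W_1 fg$, so that the ``no internal bivalent arc'' property of left-micro omnitigs can be invoked at every step.
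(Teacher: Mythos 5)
Your proof is correct and follows essentially the same route as the paper's: both localize $e$ to the split-free (tree) side of a macronode, use \Cref{obs:macropartition} to force the join arc $f$ as the last arc of the unique split-free path to the center, obtain the unique completing split arc $g$, and then invoke \Cref{lem:atmostonemicro} to get a unique maximal microtig (and treat the bivalent case by applying this once per side). The only difference is in justifying the forcing step: you derive it from \Cref{obs:joinsplitarcs} together with the absence of internal bivalent arcs in micro omnitigs, whereas the paper argues it via the \yintprop (\Cref{cor:Y-lemma}); both are valid, and your phrasing of the step (ruling out bivalent intermediate nodes and concluding they are join-only) is fine once one notes that each forced arc is split-free, hence a join arc in a compressed graph.
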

\begin{proof}
By symmetry, in $i)$ we only prove the case in which $e$ is a split-free arc. Notice that by \Cref{obs:macropartition}, $h(e)$ belongs to a uniquely determined macronode $\macronode{u}$ of $G$; let $P$ be the split-free path in $G$, from $h(e)$ to $u$. Let $f$ be the last arc of $eP$ ($f = e$ if $P$ is empty). By the \xintprop (\Cref{thm:Xthm}), there exists at most one split arc $g$ with $t(g) = u = h(f)$ such that $fg$ is an omnitig; if it exists, $fg$ is a central-micro omnitig, hence by \Cref{lem:atmostonemicro}, there is at most one maximal left-micro omnitig $Wfg$. Finally, if such a maximal left-micro omnitig exists, $ePg$ is a subwalk of $Wfg$, by the \yintprop (\Cref{cor:Y-lemma}). Otherwise, a minimal counterexample consists of paths $f_1 R g$ (subpath of $ePg$) and $f_2 R g$ (subpath of $Wfg$), where $f_1 \neq f_2$ and $R$ is a split-free path, since it is subpath of the split-free path $eP$; since both $f_1 R g$ and $f_2 R g$ are omnitigs, this contradicts the \yintprop.

For $ii)$, we again prove only one of the symmetric cases. The proof is identical to the above, since by \Cref{obs:macropartition}, $h(e)$ belongs to a unique macronode $\macronode{v_1}$ of $G$. As such, $e$ belongs to at most one maximal microtig $eW_1$ in $\macronode{v_1}$. Symmetrically, $t(e)$ belongs to a uniquely determined macronode $\macronode{v_2}$ of $G$. Thus, $e$ belongs to at most one maximal microtig $W_2e$ within $\macronode{v_2}$. 
\end{proof}

\begin{theorem}[Maximal microtigs]
\label{lem:maximalbigenerator}
The maximal microtigs of any strongly connected graph $G$ with $n$ nodes, $m$ arcs, and arbitrary degree have total length $O(n)$, and can be computed in time $O(m)$, with \Cref{alg:maximalbigenerator}.
\end{theorem}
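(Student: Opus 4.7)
The plan is to establish three things: that \Cref{alg:maximalbigenerator} outputs each maximal microtig exactly once (correctness), that the total length of the output is $O(n)$, and that the algorithm runs in $O(m)$ time. Throughout, I first apply the preprocessing of \Cref{sec:prepro} in $O(m)$ time to reduce to a compressed graph of constant degree, and recover microtigs of the original graph from those of the reduced graph within the same budget.

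For correctness, the check on \cref{line:right-extension-check} recognises precisely the central-micro omnitigs by \Cref{lem:weak_ita_lem} with empty $W$, the two calls to $\mathsf{MaximalRightMicroOmnitig}$ correctly produce the two halves by \Cref{lem:alg1correct}, and \Cref{lem:atmostonemicro} yields a unique maximal microtig per central-micro omnitig. To prevent the same microtig from being emitted twice, I will show each maximal microtig $M$ admits a unique decomposition $M = W_1 fg W_2$ with $fg$ central-micro. The crux is a shape lemma: the only internal bivalent node of a left-micro omnitig $W_1 fg$ is $v := h(f) = t(g)$. Since the internal arcs of $W_1 fg$ are non-bivalent, and in a compressed graph every arc is pure join, pure split, or bivalent, a pure split arc cannot be followed by a pure join arc (their common endpoint would be neither join nor split, i.e., biunivocal, forbidden by compression). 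Starting from $f$, which (when internal) is non-bivalent and hence pure join, making $t(f)$ a split-free join node, every preceding internal arc of $W_1 f$ must also be pure join; this forces every internal node of $W_1 f$ to lie in $R^-(v) \setminus \{v\}$, which by \Cref{obs:macropartition} contains only split-free join nodes. The symmetric argument for $gW_2$ then isolates $v$ as the unique internal bivalent node of $M$, pinning $f$ and $g$ as the arcs incident to $v$ in $M$.

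The size bound falls out from the same shape analysis: $W_1 f$ is a path in the join-tree on $R^-(v)$, extended by at most one incoming bivalent arc, so $|W_1 f| \le |R^-(v)|$; symmetrically $|gW_2| \le |R^+(v)|$, giving $|M| \le |V(\macronode{v})|+1$. By \Cref{thm:Xthm}(i) and \Cref{lem:atmostonemicro}, at most two maximal microtigs share any given center $v$, so summing over the macronode partition (\Cref{obs:macropartition}(i)) yields total length $O(\sum_v |V(\macronode{v})|) = O(n)$. For the running time, after the constant-degree reduction I apply \Cref{cor:italiano_result} in $O(m)$ time so that each $\mathsf{RightExtension}$ call costs $O(1)$ (\Cref{lem:alg1correct}). \Cref{alg:maximalbigenerator} iterates over $O(n)$ bivalent nodes and a constant number of $(f,g)$ pairs per node with an $O(1)$ check each; every discovered central-micro omnitig triggers two extensions whose combined cost is linear in the resulting microtig length, so the extension cost totals $O(n)$. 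The whole procedure therefore runs in $O(m)$ time. I expect the main obstacle to be the shape lemma itself: it powers both the uniqueness-of-decomposition needed for non-duplication and the per-macronode length bound, and it requires careful use of the compressed-graph assumption to rule out alternating pure split / pure join transitions inside a micro-omnitig.
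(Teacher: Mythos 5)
Your proposal is correct, and at the top level it follows the paper's plan: reduce to a compressed, constant-degree graph (\Cref{sec:prepro}), use \Cref{lem:weak_ita_lem} together with \Cref{cor:italiano_result} so that each extension test costs $O(1)$ (\Cref{lem:alg1correct}), and charge at most two maximal microtigs to each macronode center via \Cref{thm:Xthm} and \Cref{lem:atmostonemicro}. Where you genuinely diverge is the $O(n)$ length bound and the non-duplication argument: the paper gets both from \Cref{lem:arcs-in-microtigs} used as a black box --- every internal arc of a microtig is non-bivalent, each non-bivalent arc lies in at most one maximal microtig, and a compressed graph with $n'$ nodes has at most $n'$ non-bivalent arcs (a non-bivalent arc is split-free or join-free, and assigning it to its tail or head, respectively, is injective since no node is biunivocal), giving total length at most $3n'$. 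You instead prove a per-microtig ``shape lemma'': inside a microtig $W_1fgW_2$ the internal arcs before $f$ are pure join and those after $g$ are pure split (a pure split arc followed by a pure join arc would create a biunivocal node), so the internal nodes of $W_1f$ lie in $R^-(v)\smallsetminus\{v\}$ and those of $gW_2$ in $R^+(v)\smallsetminus\{v\}$, whence $|W_1fgW_2|\le |V(\macronode{v})|+1$ and the bound follows by summing over the macronode partition (\Cref{obs:macropartition}). This shape lemma is sound (it is essentially a localized re-derivation of the structure the paper establishes in the proof of \Cref{lem:arcs-in-microtigs}), and it buys you something the paper leaves implicit, namely that the internal bivalent occurrence in a maximal microtig is unique, so the decomposition $W_1fgW_2$ is canonical and no maximal microtig is emitted twice; the paper's route is shorter simply because \Cref{lem:arcs-in-microtigs} is already on the shelf, while yours is more self-contained. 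The correctness and $O(m)$ running-time portions of your argument coincide with the paper's.
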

\begin{proof}
First we prove the $O(n)$ bound on the total length. As we explain in \Cref{sec:prepro} we can transform $G$ into a compressed graph $G'$ such that $G'$ has $n' \leq n$ nodes and $m' \leq m$ arcs. 

Since $G'$ has at most $n'$ macronodes (recall that macronodes partition the vertex set, \Cref{obs:macropartition}), and every macronode has at most two maximal microtigs, then number of maximal microtigs is at most $2n'$. The total length of all maximal microtigs is bounded as follows. Every internal arc of a maximal microtig is not a bivalent arc, by definition. Since every non-bivalent arc appears in at most one maximal microtig (\Cref{lem:arcs-in-microtigs}), and there are at most $n'$ non-bivalent arcs in any graph with $n'$ nodes, then the number of internal arcs in all maximal microtigs is at most $n'$. Summing up for each maximal microtigs its two non-internal arcs (i.e., its first and last arc), we obtain that the total length of all maximal microtigs is at most $2n' + n' = 3n'$, thus $O(n)$.

As mentioned, in \Cref{sec:prepro} we show how to transform $G$ into a compressed graph $G'$ with $O(m)$ arcs, $O(m)$ nodes, and constant degree. On this graph we can apply \Cref{alg:maximalbigenerator}. Since every node of the graph has constant degree, the \emph{if} check in \Cref{line:right-extension-check} runs a number of times linear in the size $O(m)$ of the graph. Checking the condition in \Cref{line:right-extension-check} takes constant time, by \Cref{lem:alg1correct}; in addition, the condition is true for every central-micro omnitig $fg$ of the graph.
The \emph{then} block computes a maximal microtig and takes linear time in its size, \Cref{lem:alg1correct}. By \Cref{lem:arcs-in-microtigs} we find every microtig in linear total time.
\end{proof}

\subsection{Macrotigs}

In this section we analyze how omnitigs go from one macronode to another. Macronodes are connected with each other by bivalent arcs (\Cref{obs:macropartition}), but merging microtigs on all possible bivalent arcs may create too complicated structures. However, this can be avoided by a simple classification of bivalent arcs: those that connect a macronode with itself (\emph{self-bivalent}) and those that connect two different macronodes (\emph{cross-bivalent}), recall \Cref{fig:macronodes}.

\begin{definition}[Self-bivalent and cross-bivalent arcs]
A bivalent arc $b$ is called a \emph{self-bivalent arc} if $U(b)$ goes from a bivalent node to itself. Otherwise it is called a \emph{cross-bivalent arc}.
\end{definition}

A macrotig is now obtained by merging those microtigs from different macronodes which overlap only on a cross-bivalent arc, see also \Cref{fig:macrotig-example}.

\begin{definition}[Macrotig]
\label{def:macrotig}
Let $W$ be any walk. 
$W$ is called a \emph{macrotig} if 
\begin{enumerate}
	\itemsep0em
    \item $W$ is an microtig, or
    \item By writing $W = W_0 b_1 W_1 b_2 \ldots b_{k-1} W_{k-1} b_k W_k$, where $b_1, \ldots, b_k$ are all the internal bivalent arcs of $W$, the following conditions hold:
    \begin{enumerate}
        \item the arcs $b_1, \ldots, b_k$ are all cross-bivalent arcs, and
        \item $W_0 b_1,\; b_1 W_1 b_2,\; \ldots,\; b_{k-1} W_{k-1} b_k,\; b_k W_k$ are all microtigs.
    \end{enumerate} 
\end{enumerate}
\end{definition}

Notice that the above definition does not explicitly forbid two different macrotigs of the form $W_0 b W_1$ and $W_0 b W_2$. However,  \Cref{lem:arcs-in-microtigs} shows that there cannot be two different microtigs $bW_1$ and $bW_2$, thus we immediately obtain:

\begin{lemma}
\label{lem:uniqueprotomni}
For any macrotig $W$ there exists a unique maximal macrotig containing $W$.
\end{lemma}
\begin{proof}
W.l.o.g.,~a minimal counterexample consists of a non-right-maximal macrotig $Wb$, such that there exist two distinct microtigs $bW_1$ and $bW_2$ (notice that $b$ is a cross-bivalent arc). By \Cref{lem:arcs-in-microtigs} applied to $b$, we obtain $bW_1 = bW_2$, a contradiction.
\end{proof}

The macrotig definition also does not forbid a cross-bivalent arc to be used twice inside a macrotig. In \Cref{lem:noetwiceinproto} below we prove that also this is not possible, using the following result.

 \begin{lemma}[\cite{DBLP:journals/talg/CairoMART19}]
\label{lem:acyclic}
For any two distinct non-sibling split arcs $g, g'$, write $g \prec g'$ if there exists an omnitig $g P g'$ where $P$ is split-free. Then, the relation $\prec$ is acyclic.
\end{lemma}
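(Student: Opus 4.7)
The plan is to argue by contradiction. Suppose $\prec$ admits a cycle, and choose a minimum-length one $g_1 \prec g_2 \prec \dots \prec g_k \prec g_1$, witnessed by omnitigs $O_i = g_i P_i g_{i+1}$ (indices modulo $k$), where each $P_i$ is split-free. First I would form the closed walk $W = g_1 P_1 g_2 P_2 \cdots g_k P_k$, which returns to $t(g_1)$, and consider the extension $W g_1$, which traverses the split arc $g_1$ twice. The overall goal is to show $W g_1$ is itself an omnitig, thereby contradicting \Cref{lem:noftwice} (no omnitig can traverse a split arc twice), and thus forcing $\prec$ to be acyclic.

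To establish that $W g_1$ is an omnitig, I would proceed by induction, showing that each prefix $X_i := g_1 P_1 g_2 \cdots g_i P_i g_{i+1}$ is an omnitig. The base case $X_1 = O_1$ is immediate. For the inductive step, assuming $X_{i-1}$ is an omnitig, I extend by $P_i g_{i+1}$; any newly introduced forbidden path $Q$ must start at the tail of some arc $e_s$ in $P_i g_{i+1}$ and end at the head of an arc $e_{r-1}$ strictly before $g_i$ in $X_{i-1}$ (otherwise $Q$ would already be forbidden for $X_{i-1}$ or for $O_i$, each of which is an omnitig). The crucial observation is that $P_i$ is split-free, so the first arc of $Q$ is necessarily a sibling of some split arc on the cycle; combined with the rigidity of split-free paths, this lets me use $Q$ to construct an alternative omnitig $g_j R g''$ for some $g'' \neq g_{j+1}$, yielding $g_j \prec g''$ via the \yintprop (\Cref{cor:Y-lemma}).

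From this new relation I would splice a shorter cycle: replacing the segment of the original cycle from $g_j$ to $g_j$ through the new arc $g''$ produces a cycle in $\prec$ of length strictly less than $k$, contradicting the minimality of the chosen cycle. The Weak Extension Property (\Cref{lem:weak_ita_lem}), applied in the reverse graph $G^R$ to capture the symmetric backward-extension condition for split arcs through split-free paths, would be used to guarantee that the new sub-omnitig $g_j R g''$ is indeed a valid omnitig (with $R$ split-free), so that $g_j \prec g''$ genuinely holds. Once the induction closes around $i = k$, the walk $W g_1$ is an omnitig containing $g_1$ twice, and the desired contradiction follows.

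The main obstacle I anticipate is the careful case analysis of forbidden paths in the inductive step, in particular when $Q$ reaches into a non-consecutive portion $P_j$ of the cycle, wrapping through several macronodes. The key leverage comes from the fact that every $P_i$ is split-free (so any path can enter a $P_i$-segment only at $h(g_i)$ via $g_i$ or at a forced interior node with unique in-arc), together with the \xintprop (\Cref{thm:Xthm}) controlling the branching at each split node $t(g_j)$. These two structural constraints ensure that every ``shortcut'' supplied by a hypothetical forbidden path can be converted into a split-free extension of some $g_j$ to a distinct split arc, thereby always producing the shorter cycle that contradicts minimality.
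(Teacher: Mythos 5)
First, note that the paper does not actually prove this lemma: it is imported verbatim from~\cite{DBLP:journals/talg/CairoMART19}, so there is no in-paper proof to compare against, and your attempt has to stand on its own. As it stands, it does not. Your whole strategy rests on the inductive claim that every prefix $X_i = g_1P_1g_2\cdots g_iP_ig_{i+1}$ of the concatenated walk is an omnitig, so that at the end $Wg_1$ contains the split arc $g_1$ twice and contradicts \Cref{lem:noftwice}. But concatenating omnitigs that overlap on a split arc and are joined by split-free connectors is exactly how microtigs/macrotigs are built, and the paper stresses that such walks are in general \emph{not} omnitigs; so the inductive step has no intrinsic reason to hold, and everything hinges on your claim that a forbidden path $Q$ for $X_i$ (which, as you correctly observe, must start at $t(g_{i+1})$ with a sibling split arc $g''\neq g_{i+1}$, because the arcs of $P_i$ are split-free) can be converted into a strictly shorter $\prec$-cycle. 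That conversion is precisely what is missing, and it does not work as sketched.

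Two concrete failures. First, the existence of a forbidden path starting with $g''$ and ending at the head of an earlier arc of $X_{i-1}$ does not yield an omnitig of the form $g_jRg''$ with $R$ split-free: \Cref{lem:weak_ita_lem} (even applied in $G^R$) only characterizes the unique admissible extension via a reachability-plus-uniqueness condition that $Q$ does not certify, and \Cref{cor:Y-lemma} and \Cref{thm:Xthm} are purely negative statements (they rule extensions out, they never establish that a new walk \emph{is} an omnitig). Second, even if you had $g_j \prec g''$, this cannot shorten the cycle: $g''$ is a sibling of $g_{i+1}$ lying off the cycle, and you have no $\prec$-edge from $g''$ back to any $g_l$, so "replacing the segment of the original cycle from $g_j$ to $g_j$ through $g''$" does not define a cycle at all. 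For the minimality argument to bite you would need either a direct relation $g_j \prec g_l$ between non-consecutive arcs of the cycle, or a contradiction with one of the witnesses $O_i$ being an omnitig, and nothing in the sketch produces either. So the proposal has a genuine gap at its central step; a correct argument needs a different mechanism (for instance, exploiting how closed paths through siblings of the $g_i$ are forced to route along the witness omnitigs), rather than trying to certify the concatenated walk itself as an omnitig.
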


\begin{figure}[t]
	\centering
	\includegraphics[width=\textwidth]{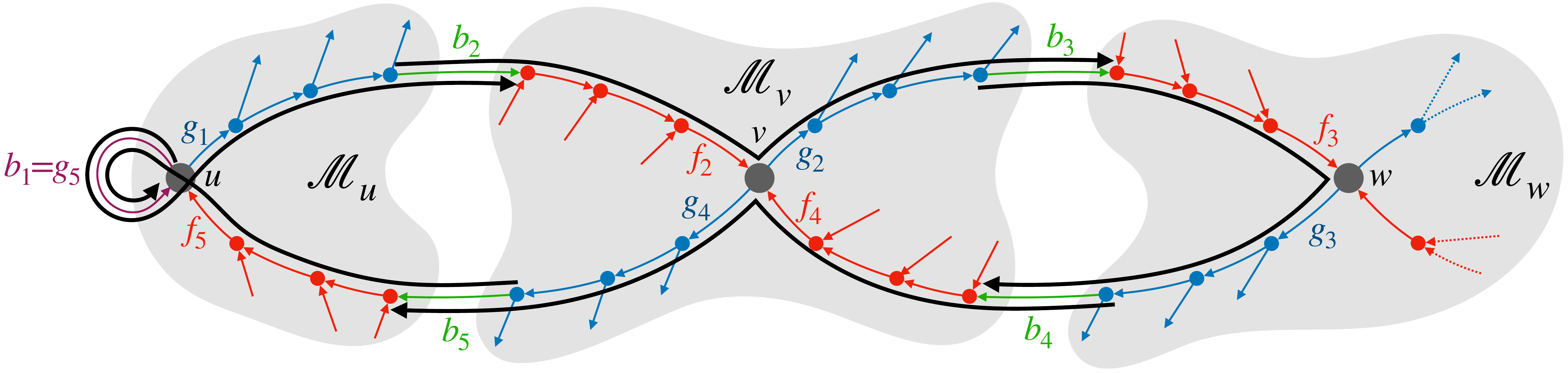}
\caption{\small Three macronodes $\mathcal{M}_{u},\mathcal{M}_{v},\mathcal{M}_{w}$ (as gray areas) with arcs color-coded as in \Cref{fig:macronodes-example}. Black walks mark their five maximal microtigs: $b_1 g_1\dots b_2$, $b_i\dots f_ig_i\dots b_{i+1}$ ($i \in \{2,3,4\}$), $b_5\dots f_5g_5$ ($g_5 = b_1$). The maximal macrotig $M$ is obtained by overlapping them on the cross-bivalent arcs $b_2,b_3,b_4,b_5$, i.e.~$M = b_1\dots b_2 \dots b_3\dots b_4\dots b_5 \dots b_1$. Notice that no arc is contained twice, with the exception of the cross-bivalent arc $b_1$, appearing as the first and last arc of $M$ (\Cref{lem:noetwiceinproto}). Bivalent nodes (e.g.~$u,v$) can appear (at most) twice in $M$ (by the \xintprop and \Cref{lem:atmostonemicro}).
\label{fig:macrotig-example}}
\end{figure}

\begin{lemma}
\label{lem:noetwiceinproto}
Let $W$ be a macrotig and let $e$ be an arc of $W$. If $e$ is self-bivalent, then $e$ appears at most twice in $W$ (as first or as last arc of $W$). Otherwise, $e$ appears only once.
\end{lemma}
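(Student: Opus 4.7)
The plan is to split the proof into two cases based on whether $e$ is self-bivalent, using the macrotig definition and the uniqueness/structure results for maximal microtigs.

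First, for self-bivalent $e$: by the macrotig definition, every internal bivalent arc $b_i$ of $W$ is cross-bivalent. A self-bivalent arc therefore cannot appear as an internal arc of $W$ and can only occur at its first or last position, giving at most two occurrences, exactly as claimed.

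Second, for non-self-bivalent $e$ (non-bivalent or cross-bivalent), I will argue by contradiction that $e$ occurs at most once. Suppose $e$ occurs at two distinct positions in $W$, lying in microtigs $M_{l_1}$ and $M_{l_2}$ of the decomposition. By \Cref{lem:arcs-in-microtigs}, both are sub-walks of a common maximal microtig $\mathcal{M}_{\max}$. Using the macronode analysis (the internal arcs of any microtig split into arc-disjoint paths in the $R^-(v)$- and $R^+(v)$-trees, each uniquely determined; cross-bivalent arcs of a microtig can sit only at its first or last position; and a cross-bivalent arc cannot be simultaneously first and last, since its tail and head live in different macronodes), one shows that $e$ occupies a unique position inside $\mathcal{M}_{\max}$ and that $M_{l_1} = M_{l_2} = \mathcal{M}_{\max}$ as walks, so in particular $b_{l_1} = b_{l_2}$ and $b_{l_1+1} = b_{l_2+1}$ (the boundary cases $l_j \in \{0, k\}$ being handled analogously). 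The canonical-successor rule $b_l \mapsto b_{l+1}$ is thus deterministic, and iterating yields a periodicity of the decomposition with period $p = l_2 - l_1$: $M_l = M_{l+p}$ for every valid $l$.

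The contradiction then comes from \Cref{lem:acyclic}. Let $g_l$ denote the central split arc of $M_l$, so that $g_{l_1+p} = g_{l_1}$. Inside each $M_l$, the suffix $g_l W_2^{(l)}$ of the right-micro omnitig $f_l g_l W_2^{(l)}$ is an omnitig all of whose arcs are split (pure-split tree arcs of $R^+(v_l)$ plus the bivalent $b_{l+1}$). Since any two consecutive arcs of an omnitig themselves form a length-2 omnitig, taking the empty (vacuously split-free) intermediate path gives a chain $g_l \prec g^{(1)} \prec \cdots \prec b_{l+1}$. The maximal left-micro omnitig $W_1^{(l+1)} f_{l+1} g_{l+1}$ then witnesses $b_{l+1} \prec g_{l+1}$, because its interior between $b_{l+1}$ and $g_{l+1}$ consists of pure-join tree arcs of $R^-(v_{l+1})$ and is hence split-free. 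Concatenating these $\prec$-relations over one full period produces $g_{l_1} \prec \cdots \prec g_{l_1+p} = g_{l_1}$, a cycle in $\prec$, contradicting \Cref{lem:acyclic}.

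The main technical obstacle is the periodicity step, where the non-bivalent and cross-bivalent subcases must collapse to the same configuration: a non-bivalent repetition ultimately forces repeated cross-bivalent boundary arcs, and a cross-bivalent repetition directly reproduces its canonical successors, so both reduce to the same periodic pattern; the boundary microtigs $M_0$ and $M_k$ also require a short separate verification. A secondary subtlety is that the most natural walk in $W$ between $g_l$ and $g_{l+1}$ is not split-free, so the $\prec$-chain has to be assembled arc-by-arc within the right-micro omnitig of $M_l$ and then extended across $b_{l+1}$ via the split-free interior of the left-micro omnitig of $M_{l+1}$.
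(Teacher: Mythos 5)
Your first case (self-bivalent $e$) is exactly the paper's argument and is fine, and your final contradiction targets the right object: a cycle in the relation $\prec$ of \Cref{lem:acyclic}, witnessed by omnitigs with split-free middles coming from the micro-omnitig structure. The problem is the route you take to set up that cycle. The ``periodicity'' step is the heart of your case~2, and it is only sketched: the claims that $e$ has a unique position in $\mathcal{M}_{\max}$, that $M_{l_1}=M_{l_2}=\mathcal{M}_{\max}$, that the successor rule propagates, and that ``a non-bivalent repetition ultimately forces repeated cross-bivalent boundary arcs'' are all asserted with ``one shows'' or ``handled analogously,'' yet they are precisely where the difficulty sits. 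Concretely, the identification $M_{l}=\mathcal{M}_{\max}$ is only justified for internal pieces $b_lW_lb_{l+1}$ (which begin and end with bivalent arcs and are therefore maximal); the boundary pieces $W_0b_1$ and $b_kW_k$ are merely microtigs, need not be maximal, and so an occurrence of $e$ there breaks the alignment argument rather than being a routine ``analogous'' verification. Likewise, a repeated non-bivalent \emph{join} arc never enters your split-arc chain directly, and the reduction to repeated boundary bivalent arcs is not proved. As written, the proof has a genuine gap.

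The gap is also avoidable, because the periodicity detour is unnecessary: the $\prec$-chain you build ($g_l\prec\cdots\prec b_{l+1}\prec g_{l+1}$) already links \emph{every} non-self-bivalent split arc along the macrotig, in order, to the next one (the arcs between two consecutive such split arcs are split-free, since internal bivalent arcs are cross-bivalent and everything else in between lies on split-free tree paths). Hence if a non-self-bivalent split arc occurred twice, chaining the relations between its two occurrences gives a cycle in $\prec$ immediately—no alignment of decomposition pieces, no $g_{l_1}=g_{l_2}$, no successor rule. This is in fact the paper's proof: it shows that any two consecutive non-self-bivalent split arcs $g,g'$ of the macrotig satisfy $g\prec g'$ (either the intermediate path is empty, or $gPg'$ is a left-micro omnitig), concludes by \Cref{lem:acyclic}, and then disposes of repeated non-self-bivalent join arcs by the symmetric argument in the reverse graph, using that in a compressed graph every arc is a join or a split arc. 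If you restructure your case~2 this way (and check the small side condition that consecutive arcs in your chains are non-sibling, which your write-up also skips), the argument closes; with the periodicity scaffolding it does not.
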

\begin{proof}
If $e$ is self-bivalent, then \Cref{def:macrotig} implies that $e$ is either the first arc of $W$, the last arc of $W$, or both. Thus, $e$ appears at most twice.

Suppose now that $e$ is not self-bivalent. We first consider the case when $e$ is a split arc. We are going to prove that between any two consecutive non-self-bivalent split arcs the relation $\prec$ from \Cref{lem:acyclic} holds. Indeed, let $g$ and $g'$ be two consecutive (i.e.~closest distinct) non-self-bivalent split arcs along $W$: that is $gPg'$ subwalk of $W$, with $P$ a split-free path. Notice that $g$ and $g'$ are not sibling arcs; since otherwise, $g$ is a self-bivalent arc, by \Cref{obs:joinsplitarcs}. If $t(g')$ is not a bivalent node, then $P$ is empty. In this case, $g$ is a join-free arc, so $gg'$ is an omnitig; as such, $g \prec g'$.
Otherwise, if $t(g')$ is a bivalent node, then $gPg'$ is a left-micro omnitig and so it is an omnitig; as such, again, $g \prec g'$.

Suppose for a contradiction that $e$ is traversed twice. Since there are no internal self-bivalent arcs (as argued at the beginning of the proof), this would result in a cycle in the relation $\prec$, which contradicts \Cref{lem:acyclic}.

When $e$ is a non-self-bivalent join arc, we proceed symmetrically. First, notice that the relation defined in \Cref{lem:acyclic} is symmetric: if $f$ and $f'$ are two distinct non-sibling join arcs such that $fPf'$, with $P$ a join-free path, then $f \prec f'$. The claim above can be symmetrically adapted to hold for any two closest distinct non-self-bivalent join arcs $f$ and $f'$ within a macrotig (i.e.~corresponding to a subwalk of $W$ of the form $fPf'$, with $P$ a join-free path). Moreover, $f$ and $f'$ are not siblings; since otherwise, $f'$ is a self-bivalent arc, by \Cref{obs:joinsplitarcs}.    

Hence, by the acyclicity property of the relation $\prec$ on the reverse graph, the claim also holds for non-self-bivalent join arcs.
\end{proof}

Therefore, we can construct all \emph{maximal} macrotigs by repeatedly joining microtigs overlapping on cross-bivalent arcs, as long as possible, as in \Cref{alg:maxproto}. 

\begin{algorithm}[tb]
    \caption{Function $\mathsf{AllMaximalMacrotigs}$.}
	\label{alg:maxproto}
	
	\Function{$\mathsf{AllMaximalMacrotigs}(G)$}{
		\Input{The compressed graph $G$.}
		\Returns{All the maximal macrotigs in $G$.}
		
    	\BlankLine
    	
		$S \gets \mathsf{AllMaximalMicrotigs}(G)$ \;
		
		\While{$\exists$ $W_1 b \in S$ \And $b W_2 \in S$ with $b$ cross-bivalent arc \And non-empty $W_1, W_2$}{
		    remove $W_1 b$ and $b W_2$ from $S$ \;
		    add $W_1 b W_2$ to $S$ \;
		}
		
    	\BlankLine
    	
    	\Return $S$ \;
    }
\end{algorithm}

\begin{theorem}[Maximal macrotigs]
\label{lem:alg-extend-correct}
The maximal macrotigs of any strongly connected graph $G$ with $n$ nodes, $m$ arcs, and arbitrary degree have total length $O(n)$, and can be computed in time $O(m)$, with \Cref{alg:maxproto}.
\end{theorem}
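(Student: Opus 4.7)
The plan is to build on \Cref{lem:maximalbigenerator} together with the uniqueness results \Cref{lem:arcs-in-microtigs} and \Cref{lem:uniqueprotomni}. After the first line of \Cref{alg:maxproto}, \Cref{lem:maximalbigenerator} guarantees that $S$ consists of all maximal microtigs of $G$, has total length $O(n)$, and has been computed in time $O(m)$; all subsequent work will therefore be bounded against these starting quantities.

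For correctness, I would maintain the following loop invariant: every walk in $S$ is a macrotig, and its microtig decomposition (in the sense of \Cref{def:macrotig}) consists of maximal microtigs of $G$. This holds initially and is preserved by the merging rule, since splicing $W_1 b$ and $b W_2$ on a cross-bivalent arc $b$ yields a walk whose microtig decomposition is the concatenation of those of the two inputs, and whose remaining structure still fits \Cref{def:macrotig}. By \Cref{lem:arcs-in-microtigs}, for each cross-bivalent arc $b$ there is at most one initial element of $S$ of the form $b W_2$ and at most one of the form $W_1 b$; this uniqueness is inductively preserved by the merges, so the algorithm never has to choose among alternatives. When the while loop terminates, every internal cross-bivalent arc of every macrotig has been absorbed into a single element of $S$, so each final walk is a maximal macrotig by \Cref{lem:uniqueprotomni}; conversely, reading off the microtig decomposition of any maximal macrotig shows that the greedy merges must reconstruct it.

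For the length bound, the key observation is that each merge replaces two walks $W_1 b$ and $b W_2$ of combined length $(|W_1|+1)+(|W_2|+1)$ by the single walk $W_1 b W_2$ of length $|W_1|+|W_2|+1$, thereby decreasing the total length stored in $S$ by exactly one. Hence the total length of the final set of maximal macrotigs is at most the initial total length $O(n)$ of maximal microtigs.

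The main obstacle is executing the while loop in time proportional to the output, and thus in $O(m)$ overall. I would represent each walk in $S$ as a doubly linked list of arcs and maintain two tables $\alpha, \omega$ indexed by cross-bivalent arcs, where $\alpha[b]$ (resp.\ $\omega[b]$) stores the unique element of $S$ starting (resp.\ ending) with $b$, if any; by \Cref{lem:arcs-in-microtigs} these tables are well-defined throughout the computation. Each iteration then takes $O(1)$ time to identify a merge candidate from $\alpha$ and $\omega$, splice the two linked lists, and update the tables. Since every merge reduces $|S|$ by one and $|S|$ starts at $O(n)$, the merging phase costs $O(n) \subseteq O(m)$, which combined with the $O(m)$ cost of producing the initial microtigs yields the claimed bound.
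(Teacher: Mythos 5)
Your proof is correct, and it takes a slightly different route from the paper's terse argument. The paper's length bound rests on \Cref{lem:noetwiceinproto}: since a macrotig never repeats a non-self-bivalent arc, each maximal microtig embeds exactly once into its maximal macrotig, so the total macrotig length is dominated by the total microtig length $O(n)$. The paper then simply asserts that the correctness of \Cref{alg:maxproto} follows from the same lemma (namely, merges cannot reintroduce an already-internal cross-bivalent arc, hence the process is acyclic). You instead establish the length bound via a clean monovariant argument (each merge decreases total stored length by exactly one), which simultaneously gives termination without invoking \Cref{lem:noetwiceinproto} explicitly; and you obtain correctness by tracking a microtig-decomposition invariant, using \Cref{lem:arcs-in-microtigs} for merge-uniqueness and \Cref{lem:uniqueprotomni} for maximality at termination. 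The two routes are essentially equivalent in strength: your accounting is a tad sharper (it shows length strictly decreases rather than is merely dominated), and your explicit doubly-linked-list representation with the $\alpha,\omega$ tables makes the $O(m)$ running time bound concrete, whereas the paper leaves the implementation implicit. One point worth making explicit in your write-up is why a merge can never produce a walk that both starts and ends with the same cross-bivalent arc (which would make the $\alpha,\omega$ bookkeeping ambiguous and potentially allow a degenerate ``self-merge''): this is exactly what \Cref{lem:noetwiceinproto} rules out, so you do implicitly lean on it, even if not by name.
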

\begin{proof}
By \Cref{lem:maximalbigenerator}, $G$ has $O(n)$ maximal microtigs, of total length $O(n)$. By \Cref{lem:noetwiceinproto}, every maximal microtig is contained in a unique maximal macrotig (and it appears only once inside such a macrotig), and the length of each maximal macrotig is at most the sum of the lengths of its maximal microtigs; thus, we have that the total length of all maximal macrotigs is at most $O(n)$.

Using \Cref{alg:maximalbigenerator}, we can get all the $O(n)$ maximal microtigs of $G$ in time $O(m)$ (\Cref{lem:maximalbigenerator}). Once we have them, we can easily implement \Cref{alg:maxproto} in $O(m)$-time. The correctness of this algorithm is guaranteed by \Cref{lem:noetwiceinproto}.
\end{proof}

\section{Maximal omnitig representation and enumeration}
\label{sec:final-algorithm}

We begin by proving the first part of \Cref{thm:macrotigs-main}. \Cref{lem:alg-extend-correct} guarantees that the total length of maximal macrotigs is $O(n)$. Thus, it remains to prove the following lemma, since \Cref{lem:uniqueprotomni} shows that any macrotig is a subwalk of a maximal macrotig.

\begin{lemma}[Maximal omnitig representation]
\label{lem:omnijoinsplitproto}
Let $W$ be a maximal omnitig. The followings hold:
\begin{itemize}
	\itemsep0em
    \item[$i)$] If $W$ contains an internal bivalent node, then $W$ is of the form $U(fW'g)$, where $f$ is the first join arc of $W$ and $g \neq f$ is the last split arc of $W$, and $W'$ is a possibly empty walk. Moreover, $fW'g$ is a macrotig.
    \item[$ii)$] Otherwise, $W$ is of the form $U(b)$, where $b$ is a bivalent arc, and $b$ does not belong to any macrotig.
\end{itemize}
\end{lemma}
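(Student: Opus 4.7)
My plan is to prove $i)$ and $ii)$ separately, from a common setup. In both cases let $f$ be the first join arc and $g$ the last split arc of $W$; both exist by \Cref{lem:omnitig-biv-node-arc}. By choice, the prefix of $W$ strictly before $f$ is join-free and the suffix strictly after $g$ is split-free. A recurring ingredient is that prepending a join-free arc (or appending a split-free arc) to an omnitig yields an omnitig: no forbidden path can end at an in-degree-$1$ node (resp.\ start at an out-degree-$1$ node). Combined with the maximality of $W$, this forces those prefix/suffix portions of $W$ to be the longest join-free/split-free paths in $G$.

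For $i)$, the arc of $W$ entering the internal bivalent node is a join arc and the arc leaving it is a split arc, so $f$ sits at or before the former and $g$ at or after the latter, giving $f \neq g$ and $f$ strictly before $g$. Writing $W = W^- f W' g W^+$, the maximality observation above immediately yields $W = U(fW'g)$. To show $fW'g$ is a macrotig, I list its internal bivalent arcs $b_1,\dots,b_k$ and decompose $fW'g = Y_0 b_1 Y_1 \dots b_k Y_k$. Each piece ($Y_0 b_1$, $b_i Y_i b_{i+1}$, $b_k Y_k$) is an omnitig starting with a join arc, ending with a split arc, and containing no internal bivalent arc; by \Cref{obs:joinsplitarcs}$(i)$ it has an internal bivalent node $v^*$, and the two arcs of the piece incident to $v^*$ form a central-micro omnitig $f^* g^*$ around which the piece decomposes as a microtig.

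The more delicate step in $i)$ is showing each $b_i$ is cross-bivalent. Suppose for contradiction that $b_i$ is self-bivalent, so $U(b_i) = Q_1 b_i Q_2$ with $t(Q_1) = h(Q_2) = v$ a common bivalent node. Tracing back from $b_i$ along the unique in-arcs of in-degree-$1$ nodes, the arcs of $W$ just before $b_i$ match a suffix of $Q_1$; were it a strict suffix, the uncovered join-free prefix of $Q_1$ could be prepended to $W$, contradicting maximality, so all of $Q_1$ lies in $W$. Symmetrically all of $Q_2$ lies in $W$, and $v$ appears twice in $W$. Both occurrences are strictly internal: the first is not $t(W)$ because $f$ lies between $t(W)$ and $b_i$ in $W$, yet $f$, being a join arc, cannot appear inside the join-free $Q_1$; the second is not $h(W)$ by the symmetric argument using $g$. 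This contradicts \Cref{lem:nobivntwice}.

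For $ii)$, \Cref{lem:omnitig-biv-node-arc} guarantees a bivalent arc $b$ in $W$. The hypothesis of no internal bivalent node combined with \Cref{obs:joinsplitarcs}$(i)$ rules out $f$ strictly preceding $g$, while $g$ preceding $f$ would force a biunivocal node between them, excluded by $G$ being compressed; hence $f = g$ is bivalent, so $b = f = g$ is the unique bivalent arc of $W$ (join arcs cannot lie in the join-free $W^-$ nor split arcs in the split-free $W^+$), and $W = U(b)$. The main obstacle is then showing $b$ belongs to no macrotig; since every macrotig is built from microtigs, it suffices to show $b$ is in no microtig $W_1 f_0 g_0 W_2$. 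A bivalent arc in such a microtig can only occupy one of four slots, namely $f_0$, $g_0$, the first arc of $W_1$, or the last arc of $W_2$, since all other positions are internal to $W_1 f_0$ or $g_0 W_2$ and hence non-bivalent. I rule out each case by exhibiting an omnitig strictly containing $W = U(b)$: if $b = f_0$ then $bg_0$ is a central-micro omnitig and $U(bg_0)$ extends $W$ by at least the arc $g_0$; $b = g_0$ is symmetric; if $b$ is the first arc of $W_1$, then $bW_1' f_0 g_0$ is a left-micro omnitig and $U(bW_1' f_0 g_0)$ strictly contains $W$, the delicate point being that by uniqueness of split-free continuation the portion of the microtig immediately following $b$ must agree with the split-free path $Q_2$ of $U(b)$, after which the microtig pushes past the end of $W$ via the split arc $g_0$; the $W_2$ case is symmetric.
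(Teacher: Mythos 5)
Your proof is correct, and its skeleton matches the paper's: decompose $fW'g$ at its internal bivalent arcs, show each piece is a microtig via \Cref{obs:joinsplitarcs}$(i)$, rule out self-bivalent internal arcs via \Cref{lem:nobivntwice}, and in part $ii)$ contradict maximality by extending $U(b)$ along a microtig containing $b$. The genuine difference is in how you obtain condition (a) of \Cref{def:macrotig}. The paper proves, by induction on the length of $W'$, the stronger statement that \emph{every} omnitig from a join arc to a split arc is a macrotig: when an internal bivalent arc $b$ is self-bivalent, the two inductively-obtained macrotig halves $W'_1 b$ and $bW'_2$ are shown to share the same internal bivalent node, contradicting \Cref{lem:nobivntwice}. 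You instead avoid induction and use the maximality of $W$ directly: tracing back along the join-free $Q_1$ (forced by in-degree~$1$) and forward along the split-free $Q_2$, you show $U(b_i)$ is a subwalk of $W$, so the common bivalent endpoint of $U(b_i)$ occurs twice internally in $W$ (with internality guaranteed by the positions of $f$ and $g$), again contradicting \Cref{lem:nobivntwice}. This buys a self-contained argument that does not need to unpack the structure of the inductively built halves, at the price of only covering the $fW'g$ sitting inside a maximal omnitig rather than arbitrary omnitigs; that weaker statement is all the lemma (and its later uses) requires. In part $ii)$ you are in fact more thorough than the paper: the paper only treats $b$ as the first arc of a maximal left-micro omnitig (and appeals to symmetry), whereas your four-slot analysis also handles $b=f_0$ and $b=g_0$ explicitly; the ``delicate point'' you flag (that the split-free continuation after $b$ must coincide with $Q_2$, and the microtig must outlast it because it ends with the split arc $g_0$) is exactly the step the paper glosses with ``$Wg_M$ can have no forbidden path,'' and your justification of it is sound.
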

\begin{proof}
To prove $i)$, let $u$ be an internal bivalent node of $W$, and let $f_u$ and $g_u$ be, respectively, the join arc and the split arc of $W$ with $h(f_u) = u = t(g_u)$; both such $f_u$ and $g_u$ exist, since $u$ is an internal node of $W$. 
Therefore, since $W$ contains at least $f_u$ and $g_u$, let $f$ and $g$ be, respectively the first join arc and the last split arc of $W$. Observe that $f$ is either $f_u$ or it appears before $f_u$ in $W$; likewise, $g$ is either $g_u$ or it appears after $g_u$ in $W$. Thus, $f$ comes before $g$, and we can write $W = W^- f W' g W^+$, where $W'$ is the subwalk of $W$, possibly empty, from $h(f)$ to $t(g)$. Therefore, by the maximality of $W$, we have $W = W^- f W' g W^+ = U(f W' g)$.

To prove that the subwalk $f W' g$ of $W$ is a macrotig, we prove by induction that \emph{any} walk of the form $fW'g$, where $f$ is a join arc and $g$ is a split arc, is a macrotig. The induction is on the length of $W'$.

\begin{description}
\item[Case 1:] $W'$ contains no internal bivalent arcs. Since $fW'g$ contains a bivalent node (\Cref{obs:joinsplitarcs}), it is of the form
$fW'g = W'_1 f'g' W'_2$, with $h(f') = t(g') = u$ bivalent node.
Notice that $W'_1 f'g' W'_2$ is an microtig and thus it is a macrotig, by definition.

\item[Case 2:] $fW'g$ contains an internal bivalent arc $b$, i.e.~$fW'g = W'_1 b W'_2$, with $W'_1, W'_2$ non empty.
By induction, $W'_1 b$ and $b W'_2$ are macrotigs and both contain a bivalent node as internal node. Suppose $b$ is a self-bivalent arc, then both $W'_1b$ and $bW'_2$ would contain the same bivalent node $u$ as internal node, contradicting \Cref{lem:nobivntwice}. Thus, $b$ is a cross-bivalent arc and $W'_1 b W'_2$ is also a macrotig, by definition.
\end{description}

For $ii)$, notice that if $W$ contains no internal bivalent node then it contains a unique bivalent arc $b$, by \Cref{lem:omnitig-biv-node-arc,obs:joinsplitarcs}. Thus, by the maximality of $W$, it holds that $W = U(b)$. It remains to prove that there is no macrotig containing $b$. 

Suppose for a contradiction that there is a maximal left-micro omnitig $M$ containing $b$. By definition, $M$ is of the form $bW_Mf_Mg_M$. Notice that $Wg_M$ is an omnitig, because $M$ is an omnitig and the arcs of $W$ before $b$ are join-free, so $Wg_M$ can have no forbidden path. This contradicts the fact that $W$ is maximal. 

Symmetrically, we have that there is no maximal right-micro omnitig containing $b$. Thus, by definition, $b$ appears in no microtig, and thus in no macrotig.
\end{proof}

\begin{remark}
The number of maximal omnitigs containing an internal bivalent node (i.e., univocal extensions of a maximal macrotig subwalk) is $O(n)$, by maximality and by the fact that the total length of maximal macrotigs is $O(n)$ (\Cref{lem:alg-extend-correct}).
\end{remark}

Next, we are going to prove the second, algorithmic, part of \Cref{thm:macrotigs-main}. By \Cref{lem:alg-extend-correct} we can compute the maximal macrotigs of $G$ in time $O(m)$. We can trivially obtain in $O(m)$ time the set $\mathcal{F}$ of arcs not appearing in the maximal macrotigs. It remains to show how to obtain the subwalks of the maximal macrotigs univocally extending to maximal omnitigs. 

We first prove an auxiliary lemma needed for the proof of the \extprop (\Cref{thm:strong-ita-result}).

\begin{lemma}
\label{lem:doesnotcontainf}
Let $fW$ be an omnitig, where $f$ is a join arc. Let $P$ be a path from $t(P) = h(W)$ to a node in $W$, such that the last arc of $P$ is not an arc of $fW$. Then no internal node of $P$ is a node of $W$.
\end{lemma}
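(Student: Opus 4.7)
The plan is a proof by contradiction. Suppose some internal node of $P$ lies in $V(W)$, and let $u$ be the latest such node along the traversal of $P$. Denote by $R$ the suffix of $P$ from $u$ to the endpoint $v$ of $P$: by the choice of $u$, the only $V(W)$-nodes of $R$ are its two endpoints $u$ and $v$, and by the hypothesis on $P$ the last arc of $R$ is not in $fW$. Since $W$ is a walk that may revisit nodes, I fix representations $u = n_s$ with $s$ maximum and $v = n_t$ with $t$ minimum, where $n_0, \dots, n_\ell$ list the nodes of $W$ in order; note that $s < \ell$, because $u$ is internal to $P$ while $P$ starts at $n_\ell = h(W)$.

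I will first observe that the first arc of $R$ cannot be $e_{s+1}$. If it were, then either $|R| = 1$, whence the last arc of $R$ equals $e_{s+1} \in fW$ contradicting the hypothesis on $P$; or $|R| \geq 2$, whence the node $n_{s+1} \in V(W)$ is an internal node of $R$, contradicting the maximality of $u$ among internal $V(W)$-nodes of $P$.

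If $t \leq s$, the argument closes at once: $R$ is a forbidden path for the omnitig $fW$ with indices $(i,j) = (t+1,\, s+1)$, since it runs from $t(e_{s+1})$ to $h(e_t)$, its first arc differs from $e_{s+1}$, and its last arc, being outside $fW$, differs from $e_t$. This contradicts $fW$ being an omnitig.

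The remaining case $t > s$ is the main obstacle, since $R$ now advances ``forward'' in $W$ and is not itself a forbidden path. My plan is to close $R$ into a cycle at $u$ by concatenating it with the $W$-subwalk $e_{t+1}\dots e_\ell$ (from $v$ to $h(W)$) and then with the prefix $Q$ of $P$ from $h(W)$ back to $u$; by the maximality of $s$, the node $u$ does not appear inside $e_{t+1}\dots e_\ell$, so the concatenation returns to $u$ only at its very end. I then scan this concatenation starting at $u$ and stop at the first $V(W)$-node $n_{t^*}$ it reaches with $t^* \leq s$ that is entered via an arc different from $e_{t^*}$: the prefix up to that point is then a forbidden path for $fW$ with indices $(t^*+1,\, s+1)$, once again contradicting the omnitig property. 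The subtle technical point is to rule out the degenerate scenario in which every $V(W)$-position of index $\leq s$ visited by the scan is entered precisely via the matching $W$-arc; I plan to exclude this by exploiting that $P$ is a path (and thus uses each arc at most once) together with the extremal choices of $s$ and $t$, either by re-selecting an earlier internal $V(W)$-node of $P$ as the pivot, or by observing that the degeneracy forces $W$ to revisit $h(W)$, which in turn relocates the forbidden path at a different valid index pair.
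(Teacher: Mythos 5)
Your case $t \le s$ is correct and coincides with the paper's first case (there the chosen suffix of $P$ is itself the forbidden path). The genuine gap is the case $t > s$, which you only sketch: the step ``rule out the degenerate scenario in which every $V(W)$-position of index $\le s$ visited by the scan is entered precisely via the matching $W$-arc'' is not a technicality, it is the whole difficulty, and the rescue strategies you gesture at cannot succeed as outlined because your argument never uses the hypothesis that $f$ is a join arc --- and that hypothesis is exactly what this case needs. Concretely, drop it and everything in your setup can be realized while the conclusion fails: take nodes $a,b,c,d,z$ and arcs $f=(a,b)$, $e_1=(b,c)$, $e_2=(c,d)$, $(d,a)$, $(b,z)$, $(z,c)$. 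Then $fW$ with $W=e_1e_2$ is an omnitig (every path into $b$ ends with $f$, every non-empty path out of $c$ starts with $e_2$, so no forbidden path exists), and $P=(d,a,b,z,c)$ is a path from $h(W)$ to the node $c$ of $W$ whose last arc $(z,c)$ is not in $fW$, yet its internal node $b$ lies on $W$. Here $u=b=n_0$, $v=c=n_1$, so $t>s$; your closed walk is $b\,z\,c\,d\,a\,b$ and the scan re-enters $b$ via the matching arc $f$: the degenerate scenario occurs, no pivot can be re-selected ($b$ is the only internal $V(W)$-node of $P$), $W$ does not revisit $h(W)$, and indeed no forbidden path exists. (The example is not compressed, but neither your argument nor the statement invokes compression.) So any correct treatment of $t>s$ must bring the join-arc hypothesis into play, and your outline gives no indication of how.

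This is precisely where the paper's proof takes a different route: instead of closing the cycle back through the prefix $Q$ of $P$, it routes forward from $h(W)$ along a closed path $f'WQ'$ through a sibling join arc $f'$ of $f$ (such an $f'$ exists because $f$ is a join arc, and every closed path through $f'$ must continue with $W$ because $fW$ is an omnitig), producing a walk from $u$ to $h(f)$ that avoids both $f$ and $e_{s+1}$ entirely; this avoidance is what allows one to shortcut the walk to a genuine forbidden path while preserving the first- and last-arc conditions. Your scanned prefix, by contrast, is in general only a walk, not a path ($W$ may revisit $v$ after position $t$, and $Q$ may meet the segment $e_{t+1}\dots e_\ell$), and shortcutting a walk to a path can change its first or last arc; so even if the degeneracy were excluded you would still owe an avoidance argument of this kind before invoking the omnitig property.
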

\begin{proof}
Consider $P_W$ the longest suffix of $P$, such that no internal node of $P_W$ is a node of $W$. If $P_W = P$, the lemma trivially holds.
Let now $W = (u_0,e_1, u_1, e_2, \ldots, e_k, u_k)$.
Let $u_i = t(P_W)$ and $u_j = h(P_W)$. If $i \geq j$, then $P_W$ is a forbidden path for $fW$,a contradiction. 
Hence, assume $i < j < k$.
Let $f'WQ$ be a closed path. Consider the walk $Z = P_W e_{j+1} \ldots e_k Q $. Notice that $e_{i+1} \notin Z$ and $f \notin Z$. Thus $Z$ can transformed in a forbidden path for $fW$, from $u_i$ to $h(f)$.
\end{proof}

\begin{proof}[Proof of the \extprop  (\Cref{thm:strong-ita-result})]
As seen in \Cref{lem:weak_ita_lem}, at least one $g$ exists which satisfies the condition.
Assume $g$ is a split arc, otherwise the statement trivially holds.

First, assume that there is a $g'$ sibling split arc of $g$ and a path $P$ from $h(g)$ to $h(f)$ in $G \smallsetminus f$. We prove that there exists a forbidden path for $fWg$.
Let $P_W$ be the prefix of $P$ ending in the first occurrence of a node in $W$ (i.e., no node of $P_W$ belongs to $W$, except for $h(P_W)$). Notice that $g'P_W$ is a forbidden path for the omnitig $fWg$ (it is possible, but not necessary, that $h(P_W) = h(f)$).
    
Second, take any forbidden path $P$ for the omnitig $fWg$. We prove that there exists a $g'$ sibling split arc of $g$ and a path from $h(g)$ to $h(f)$ in $G \smallsetminus f$.
Notice that $t(P) = h(W) = t(g)$, otherwise $P$ would be a forbidden path for $fW$. As such, $P$ starts with a split arc $g' \neq g$ and, by \Cref{lem:doesnotcontainf}, $P$ does not contain $f$. Thus, the suffix of $P$ from $h(g')$ is a path in $G \smallsetminus f$ from $h(g')$ to $h(f)$.
\end{proof}

To describe the algorithm that identifies all maximal omnitigs (\Cref{alg:maximalomnitigs}), we first introduce an auxiliary procedure (\Cref{alg:isomnirext}), which uses the \extprop (\Cref{cor:italiano_result}) and \Cref{thm:strong-ita-result} to find the unique possible extension of an omnitig.

\begin{algorithm}[htb]
\SetKw{Output}{output}
	\caption{Function $\mathsf{IsOmnitigRightExtension}$}\label{alg:isomnirext}
		
	\BlankLine
	
	\Function{$\mathsf{IsOmnitigRightExtension}(G,f,g)$}{
    	\Input{%
    		The compressed graph $G$.
    		A join arc $f$ and a split arc $g$ such that there exists a walk $fWg$ where $fW$ is an omnitig.
    	}
    	\Returns{%
    		Whether $fWg$ is also an omnitig.
    	}
    	
    	\BlankLine

		$S \gets \{ g' \in E(G) \mid t(g') = t(g) \text{ and there is a path from } h(g') \text{ to } h(f) \text{ in } G \smallsetminus f \}$ \;
		
    	\BlankLine

		\Return \True if $S = \{ g \}$ and \False otherwise
	}

	\BlankLine
\end{algorithm}

\begin{corollary}
\label{cor:isomnirext}
\Cref{alg:isomnirext} is correct. Moreover, assuming that the graph has constant degree, we can preprocess it in time $O(m+n)$ time, so that \Cref{alg:isomnirext} runs in constant time.
\end{corollary}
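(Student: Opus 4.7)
The plan is to derive both claims directly from results already established in the excerpt, namely the \extprop (\Cref{thm:strong-ita-result}) and the reachability-under-arc-removal data structure of Georgiadis et al.~(\Cref{cor:italiano_result}).

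For correctness, I would observe that the set $S$ computed by \Cref{alg:isomnirext} is, by construction, exactly the set appearing in the statement of the \extprop for the omnitig $fW$: namely, the set of arcs $g'$ with $t(g') = t(g) = h(W)$ such that $h(g')$ reaches $h(f)$ in $G \smallsetminus f$. By the \extprop, $fWg$ is an omnitig if and only if $g$ is the \emph{only} arc in that set, i.e.\ if and only if $S = \{g\}$. This is precisely what \Cref{alg:isomnirext} returns, establishing correctness. Note the hypothesis of the procedure guarantees that $fW$ is itself an omnitig and that $t(g) = h(W)$, so the precondition of the \extprop is met.

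For the running time, I would first invoke \Cref{cor:italiano_result} to preprocess $G$ in $O(m+n)$ time into an $O(n)$-space data structure that, given a node $w$ and an arc $e$, answers in $O(1)$ whether there is a path from $w$ to $h(e)$ in $G \smallsetminus e$. At query time, computing $S$ requires one such reachability query for each arc $g'$ with $t(g') = t(g)$. Since $G$ has constant degree, there are only $O(1)$ such candidate arcs $g'$, so the construction of $S$ and the comparison $S = \{g\}$ together take $O(1)$ time.

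There is no real obstacle here: both the structural content (the \extprop) and the algorithmic content (the Georgiadis et al.~data structure) have already been established, and the constant-degree hypothesis exactly converts the per-node-degree factor into $O(1)$. The only care needed is to note that the query source in each reachability call is $h(g')$ while the target is the head of the removed arc $f$, matching the input/output signature of \Cref{cor:italiano_result}.
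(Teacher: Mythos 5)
Your argument is correct and is exactly the route the paper intends: the paper states this corollary without a separate proof, treating it (as you do) as an immediate consequence of the \extprop~(\Cref{thm:strong-ita-result}), which shows that $S=\{g\}$ characterizes when $fWg$ is an omnitig, together with the $O(m+n)$-preprocessing, $O(1)$-query structure of \Cref{cor:italiano_result} and the constant-degree assumption bounding the number of candidate arcs $g'$.
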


Maximal omnitigs are identified with a two-pointer scan of maximal macrotigs (\Cref{alg:maximalomnitigs}): a left pointer always on a join arc $f$ and a right pointer always on a split arc $g$, recall \Cref{fig:macronode-algorithm}. For the sake of completeness, we write \Cref{alg:maximalomnitigs} so that it also outputs the maximal omnitigs. In \Cref{subsec:constant-degree-enumeration} we explain what changes  are needed when the graph does not have constant degree.

\begin{algorithm}[htb]
\SetKw{Output}{output}
	\caption{Computing all maximal omnitigs\label{alg:maximalomnitigs}}
		
	\Input{The compressed graph $G$.}
	\Outputs{All maximal omnitigs of $G$.}	

	\BlankLine
	
	$ B \gets \{ b \text{ bivalent arc} \mid b \text{ does not occur in any } W \in \mathsf{AllMaximalMacrotigs}(G) \} $ \;
	\lForEach{$b \in B$\label{line:maxomniwithb}}{\Output $U(b)$ }

    \BlankLine
	
	\ForEach{$f^* X g^* \in \mathsf{AllMaximalMacrotigs}(G)$}{
	    
	    \BlankLine
	    
        \Comment{With the notation $X[f..g]$, we refer to the subwalk of $f^*Wg^*$ starting with the occurrence of $f$ in $f^* X$ (unique by \Cref{lem:noetwiceinproto}) and ending with the occurrence of $g$ in $X g^*$ (unique by \Cref{lem:noetwiceinproto}).}
        
        \BlankLine
	
		$f \gets f^*$, $g \gets \Nil$, $g' \gets $ first split arc in $X g^*$ \;
        
        \BlankLine
        
		\While{$g' \neq \Nil$ \label{line:extwhile}}{
		
	    	\While{%
	    	    $g' \neq \Nil$ \And $\mathsf{IsOmnitigRightExtension}(f, g')$ \label{line:firstintwhile}
    	    }{
    		    \Comment{Grow $X[f .. g]$ to the right as long as possible}
    		    $g \gets g'$ \;

            	\BlankLine

    		    $g' \gets $ next split arc in $X g^*$ after $g$ \;
    		}
    		
        	\BlankLine

		    \Comment{$X[f .. g]$ cannot be grown to the right anymore}
    		\Output $U(X[f .. g])$ \;
    
        	\BlankLine

	    	\While{%
	    	    $g' \neq \Nil$ \And \Not $\mathsf{IsOmnitigRightExtension}(f, g')$ \label{line:scndintwhile}
    	    }{
		        \Comment{Shrink $X[f .. g]$ from the left until it can be grown to the right again}
    		    $i \gets $ index of next join arc in $f^* X$ after $f$
    		}
		}
	}

	\BlankLine

\end{algorithm}

\begin{lemma}[Maximal omnitig enumeration]
\label{lem:alglineartime}
\Cref{alg:maximalomnitigs} is correct and, if the compressed graph has constant degree, it runs in time linear in the total size of the graph and of its output.
\end{lemma}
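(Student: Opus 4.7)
My plan is to prove correctness and running time separately, leveraging the structural results already established in the paper.

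For correctness, I would split the argument along the two sources of output. The first is \Cref{line:maxomniwithb}, which emits $U(b)$ for each bivalent arc $b$ not contained in any maximal macrotig. By \Cref{lem:omnijoinsplitproto}(ii), these are exactly the maximal omnitigs containing no internal bivalent node. The second source is the two-pointer scan of each maximal macrotig $f^* X g^*$. By \Cref{lem:omnijoinsplitproto}(i), every maximal omnitig $W$ containing an internal bivalent node has the form $W = U(X[f..g])$ for a subwalk $X[f..g]$ of a macrotig, starting with a join arc and ending with a split arc. Moreover, \Cref{lem:uniqueprotomni} together with \Cref{lem:noetwiceinproto} guarantee that the macrotig containing such a subwalk is unique and that the positions of $f$ and $g$ inside it are unambiguous. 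So it suffices to show that, within each macrotig, the two-pointer scan outputs exactly the subwalks $X[f..g]$ whose univocal extension is a maximal omnitig.

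To analyze the two-pointer scan, I would characterize the \emph{maximal pairs} $(f,g)$ — those such that $U(X[f..g])$ is a maximal omnitig — via three conditions: $X[f..g]$ is an omnitig; right-maximality (the next split arc $g'$ in the macrotig, if any, satisfies $\mathsf{IsOmnitigRightExtension}(f,g')=\mathrm{False}$, which is equivalent to non-extendability by \Cref{thm:strong-ita-result}); and left-maximality (the previous join arc $f''$ in the macrotig, if any, does not yield an omnitig $X[f''..g]$). The key tool is monotonicity: define $g^\bullet(f)$ as the rightmost split arc for which $X[f..g^\bullet(f)]$ is an omnitig. If $f_1 \preceq f_2$ are join arcs of the macrotig, then $X[f_2..g^\bullet(f_1)]$ is a subwalk of the omnitig $X[f_1..g^\bullet(f_1)]$ and hence an omnitig, so $g^\bullet(f_2) \succeq g^\bullet(f_1)$. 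Combined with the characterization above, the maximal pairs are precisely the $(f, g^\bullet(f))$ at which $g^\bullet$ strictly increases compared to the previous join arc (with the first join arc $f^*$ as trivial base case). The algorithm's alternation of right-growth (setting $g \leftarrow g^\bullet(f)$) and left-shrink (advancing $f$ until $g^\bullet$ strictly increases) enumerates exactly these pairs.

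For the running time, under the constant-degree assumption, $\mathsf{AllMaximalMacrotigs}(G)$ and $B$ are computed in $O(m)$ by \Cref{lem:alg-extend-correct} and a single scan over the bivalent arcs. For each $b \in B$, outputting $U(b)$ takes time $O(|U(b)|)$, linear in the output contribution. Within each macrotig $f^* X g^*$ of length $\ell$, the pointers $f$ and $g'$ are monotonically non-decreasing throughout the whole scan, and each step costs $O(1)$ thanks to the $O(1)$-time $\mathsf{IsOmnitigRightExtension}$ primitive from \Cref{cor:isomnirext}. Summing over all macrotigs gives $O(n)$ for the two-pointer work, by \Cref{lem:alg-extend-correct}. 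Finally, each emitted $U(X[f..g])$ is built in time $O(|U(X[f..g])|)$. Altogether the algorithm runs in $O(m+\text{output size})$.

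The main obstacle I expect is formalizing the two-pointer correctness: one must simultaneously verify that (a) at every output step $X[f..g]$ is indeed an omnitig (an invariant maintained through both phases), (b) every maximal pair is eventually output — including the delicate point that the left-shrink phase never skips a maximal pair, which is where monotonicity of $g^\bullet$ and the fact that \emph{strict} increases of $g^\bullet$ coincide with left-maximality come together — and (c) no non-maximal pair is ever output.
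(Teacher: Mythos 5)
Your proposal follows essentially the same route as the paper's proof: the same case split via \Cref{lem:omnijoinsplitproto} (univocal extensions of bivalent arcs outside macrotigs in \cref{line:maxomniwithb} versus univocal extensions of macrotig subwalks handled by the two-pointer scan), the same reliance on \Cref{thm:strong-ita-result} and \Cref{cor:isomnirext} for constant-time extension checks, and the same accounting of the running time via the $O(n)$ total macrotig length from \Cref{lem:alg-extend-correct} plus output-linear univocal extensions. Your explicit formalization via the monotone map $g^\bullet$ and its strict-increase points is just a more detailed rendering of the invariants the paper asserts for the two inner \emph{while} loops (right-maximality after the growth phase, left-maximality after the shrink phase), so the approaches coincide.
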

\begin{proof}
\Cref{alg:maxproto} returns every maximal macrotig in $O(m)$ time, by \Cref{lem:alg-extend-correct}.

By \Cref{lem:omnijoinsplitproto}, any maximal omnitig $W$ is either of the form $U(fW'g)$ (where $fW'g$ is a macrotig, and thus also a subwalk of a maximal macrotig, by \Cref{lem:uniqueprotomni}), or of the form $W = U(b)$, where $b$ is a bivalent arc not appearing in any macrotig. 

In the latter case, such omnitigs are outputted in Line~\ref{line:maxomniwithb}. 
In the former case, it remains to prove that the external \emph{while} cycle, in \Cref{line:extwhile}, outputs all the maximal omnitigs of the form $U(fW'g)$ where $fW'g$ is contained in a maximal macrotig $f^* X g^*$.

At the beginning of the first iteration, $W = U(X[f .. g'])$ is left-maximal since $f = f^*$.
The first internal \emph{while} cycle, in \Cref{line:firstintwhile}, ensures that $W = U(X[f .. g])$ is also right-maximal, at which point it is printed in output.
Then, the second internal \emph{while} cycle, in \Cref{line:scndintwhile}, ensures that $W = U(X[f .. g'])$ is a left-maximal omnitig, and the external cycle repeats.

To prove the running time bound, observe that each iteration of the \emph{foreach} cycle takes time linear in the total size of the maximal macrotig $X$ and of its output (by \Cref{cor:isomnirext}), and that the total size of all maximal macrotigs is linear, by \Cref{lem:alg-extend-correct}.
\end{proof}

\section{Constant degree and compression}
\label{sec:prepro}

In this section, we describe three transformations of the given graph $G$ to guarantee the assumption of compression and constant degree on every node. It is immediate to see that they and their inverses can be performed in linear time.

\subsection{Constant degree}
\label{subsec:constant-degree}

The first transformation allows us to reduce to the case in which the graph has constant out-degree (see \Cref{fig:transf-degree} for an example). 

\begin{transf}
	\label{tra:const_deg}
	Given $G$, for every node $v$ with $d^+(v) > 2$, let $e_1,e_2, \ldots, e_k$ be the arcs out-going from $v$. Replace $v$ with the path $(v_1,e'_1,v_2,e'_2,\ldots,e'_{k-2},v_{k-1})$, where $v_1,\dots,v_{k-1}$ are new nodes, and $e'_1,\dots,e'_{k-2}$ are new edges. Each arc $e_i$ with $t(e_i) = v$ in $G$ now has $t(e_i) = t(e'_i) = v_i$, except for $e_k$ which has $t(e_k)  = v_{k-1}$.
\end{transf}

\begin{figure}[h!]
	\centering
	\includegraphics{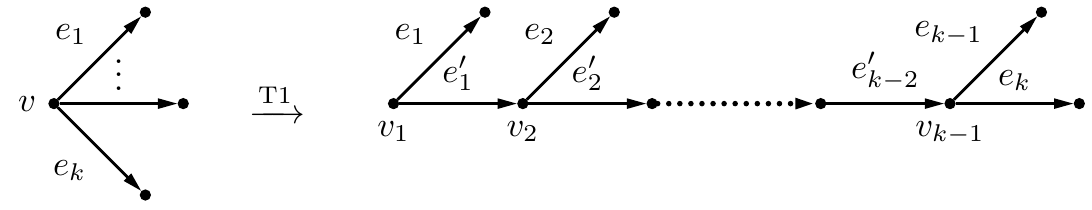}
	\caption{\small
		An example of \Cref{tra:const_deg} ($T1$) applied to the node $v$, where $e_1, \ldots, e_k \in \delta^+(v)$ are the arcs with tail equal to $v$.
	}
	\label{fig:transf-degree}
\end{figure}

By also applying the symmetric transformation, the problem on the original graph $G$ is thus reduced to a graph $G'$ with constant out- and in-degree. Notice that the number of arcs of $G'$ is still $O(m)$, where $m$ is the number of arcs of the original graph. As such, we can obtain the macrotigs of $G'$ in $O(m)$ time. The trivial strategy to obtain all maximal omnitigs of $G$ is to enumerate all maximal omnitigs of $G'$, and from these contract all the new arcs introduced by the transformation (while also removing duplicate maximal omnitigs, if necessary). However, thus may invalidate the linear-time complexity of the enumeration step, since the length of the maximal omnitigs of $G$ may be super-linear in total maximal omnitig length of $G$, see \Cref{fig:macrotigs-superlinear}. In \Cref{subsec:constant-degree-enumeration} we explain how we can easily modify the maximal omnitig enumeration step to maintain the $O(m)$ output-sensitive complexity.

\begin{figure}[h!]
	\centerline{
	\begin{subfigure}[t]{0.5\textwidth}
	\centering
	\includegraphics[scale=0.35]{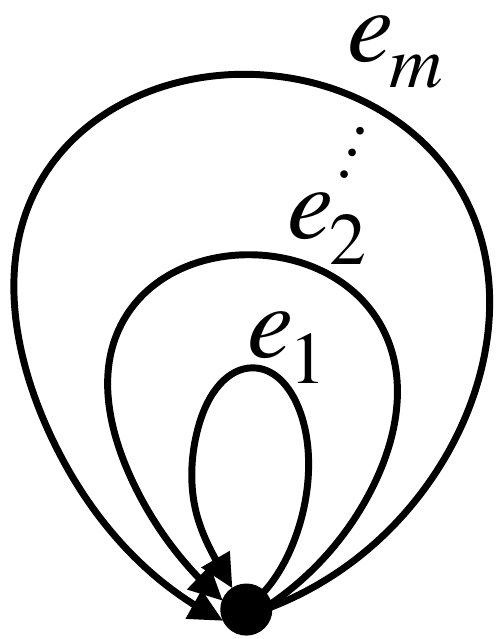}
	\end{subfigure}
	\hfill
	\begin{subfigure}[t]{0.5\textwidth}
	\centering
	\includegraphics[scale=0.35]{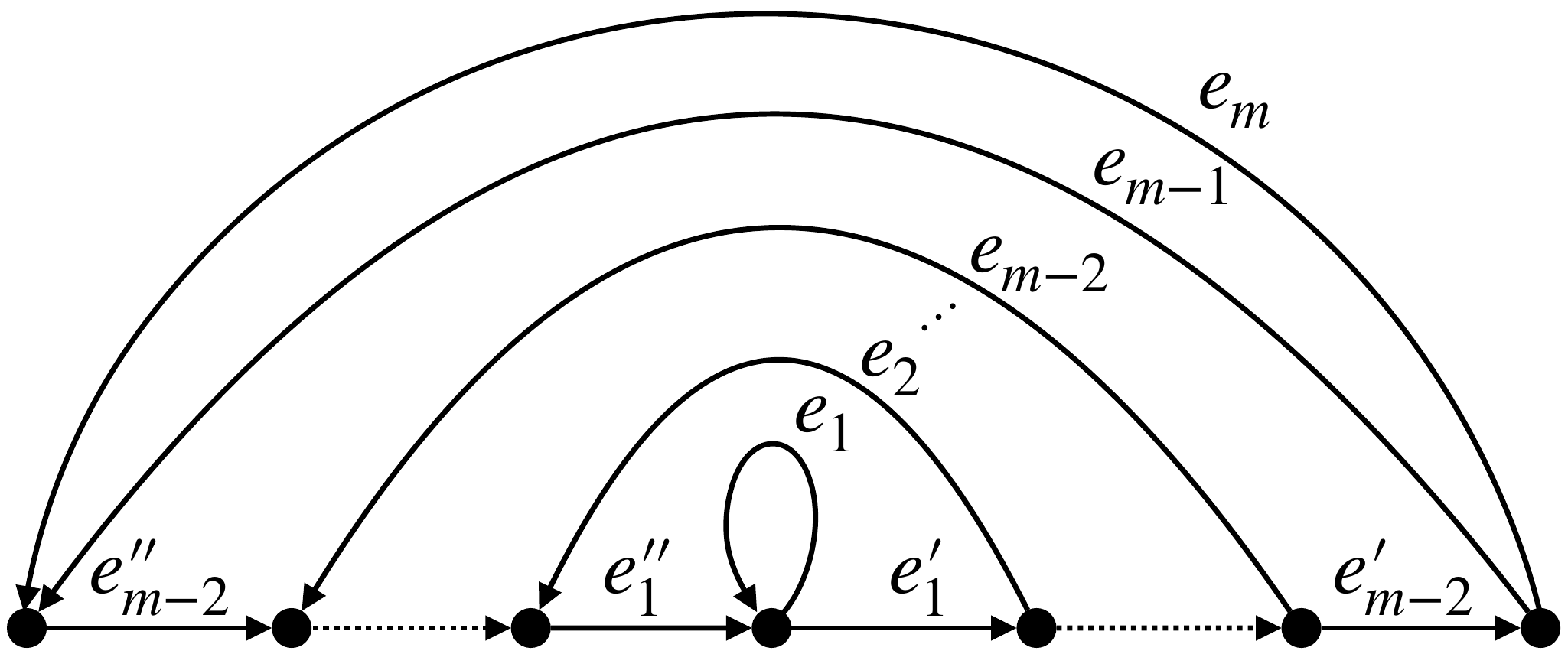}
	\end{subfigure}
    }
    \caption{\small Left: A graph $G$ made up of a single node and $m \geq 3$ self-loops $e_1,\dots,e_m$. Its $m$ maximal omnitigs are $e_1,\dots,e_m$. Right: The graph $G'$ obtained from $G$ by applying \Cref{tra:const_deg} and its symmetric transformation; the nodes of $G'$ have in-degree and out-degree at most 2. Notice that the number of arcs of $G'$ is $O(m)$. The $m$ maximal omnitigs of $G'$ are of the form $U(e_i) = e'_1\cdots e'_{i-1}e_ie''_{i-1}\cdots e''_1$ (for $i \in \{1,\dots,m\}$). Notice that their total length is $\Theta(m^2)$, thus one cannot enumerate all maximal omnitigs of $G'$ and convert these to maximal omnitigs of $G$. However, one can stop all univocal extensions of the arcs $e_i$ when reaching arcs introduced by the transformations in $G'$, see \Cref{subsec:constant-degree-enumeration}.
	\label{fig:macrotigs-superlinear}
	}
\end{figure}

To prove the correctness of \Cref{tra:const_deg}, we proceed as follows. Let $c_e(G)$ be the graph obtained from $G$ by contracting an arc $e$ (\emph{contracting $e$} means that we remove $e$ and identify its endpoints). For every walk $W$ of $G$, we denote by $c_e(W)$ the walk of $c_e(G)$, obtained from $W$ by removing every occurrence of $e$ (here we regard walks as sequences of arcs).
In the following, we regard $c_e$ as a surjective function from the family of walks of $G$ to the family of walks of $c_e(G)$.

\begin{observation}
\label{obs:contract_R-univoc}
When $e$ is a split-free or join-free arc, then $c_e$ is a bijection when restricted to the closed (arc-covering) walks, or to the open walks of $G$ whose first and last arc are different than $e$.
\end{observation}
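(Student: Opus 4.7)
The plan is to exhibit an explicit two-sided inverse $\sigma$ of $c_e$ on each restricted domain by reinserting $e$ via a local rule. First, I would reduce to the case where $e$ is split-free (the join-free case follows by applying the argument to $G^R$, since contraction commutes with arc reversal). Writing $e = (u,v)$, the split-free hypothesis states that $e$ is the unique out-arc at $u$, so in any walk of $G$ every occurrence of $e$ is either at the very start of the walk (in which case the walk begins at $u$) or is immediately preceded by an arc whose head is $u$. This is the structural fact that makes the reinsertion unambiguous.

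Next, given a walk $W' = a_1 a_2 \cdots a_\ell$ of $c_e(G)$, I would define $\sigma(W')$ as the walk of $G$ obtained by inserting an arc $e$ between $a_i$ and $a_{i+1}$ whenever $h_G(a_i) = u$ (and, for closed walks viewed cyclically as in the paper's subwalk convention, also between $a_\ell$ and $a_1$ under the same condition). This produces a legitimate walk of $G$: every out-arc $b$ of the merged node in $c_e(G)$ satisfies $t_G(b) = v$, because all out-arcs of the merged node inherit from $v$'s out-arcs in $G$ (as $u$ had only $e$). Hence each inserted triple $a_i\,e\,a_{i+1}$ matches heads to tails, with $h_G(a_i) = u = t_G(e)$ and $h_G(e) = v = t_G(a_{i+1})$. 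The identity $c_e \circ \sigma = \mathrm{id}$ is then immediate, since $c_e$ removes exactly the $e$'s that $\sigma$ inserted.

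For the reverse identity $\sigma \circ c_e = \mathrm{id}$, every internal occurrence of $e$ in $W$ is flanked by a preceding arc whose head is $u$ and is therefore reinserted by $\sigma$; for open walks whose first and last arcs differ from $e$, the hypothesis forces $t(W) \neq u$ (otherwise the walk would begin with $e$), so the endpoints transfer cleanly and no leading or trailing insertion is needed. The main subtlety I expect is the closed case: a closed walk based at $u$ and the same cyclic sequence based at $v$ both lose their leading or trailing $e$ under a naive base-pointed $c_e$, and their images coincide. Adopting the cyclic interpretation of closed walks---already implicit in the paper's subwalk convention for closed walks---resolves this ambiguity, since the wrap-around pair in $\sigma$ is then handled exactly like any internal pair, and yields the asserted bijection.
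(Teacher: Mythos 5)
Your proposal is correct, and the paper itself gives no proof of this Observation (it is presented as self-evident), so your argument supplies exactly the content the authors leave implicit. The construction you give is the natural intended one: contract $e$ by deleting it and identify $u = t(e)$ with $v = h(e)$; reinsert $e$ precisely after every arc whose original head was $u$, which is forced because split-freeness of $e$ makes it the unique out-arc at $u$. You correctly identify the one genuine subtlety, namely that for closed walks the bijection only holds modulo the choice of base point, and you correctly observe that the paper's cyclic convention for subwalks of closed walks is exactly what makes this a non-issue. Your reduction of the join-free case to the split-free case via $G^R$ is also sound, since omnitigs and walk structure are symmetric under reversal and contraction commutes with reversal. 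The only triviality you do not explicitly flag is that the claimed bijection is meant for non-empty walks (the two empty walks $\epsilon_u,\epsilon_v$ would otherwise collide), but this is outside the paper's intended scope since $c_e$ is only ever applied to omnitigs and arc-covering walks, which are non-empty; it does not affect the correctness of anything downstream.
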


\begin{lemma}
\label{lem:omni_contracted}
Let $e$ be a join-free arc of $G$. A walk $W'$ of $c_e(G)$ is an omnitig of $c_e(G)$ if and only if there exists an omnitig $W$ of $G$ such that $W' = c_e(W)$.
\end{lemma}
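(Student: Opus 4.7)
The plan is to combine the characterization in Theorem~\ref{thm:tomescu2017safe} (omnitigs are exactly subwalks of every closed arc-covering walk) with the bijection in Observation~\ref{obs:contract_R-univoc} (for closed walks, $c_e$ is bijective). First I would verify the key auxiliary fact: $c_e$ restricts to a bijection between closed arc-covering walks of $G$ and of $c_e(G)$. The forward direction is immediate since contracting $e$ preserves arc-coverage outside of $e$. For the backward direction, a closed walk $C'$ in $c_e(G)$ lifts uniquely to a walk $C$ in $G$ by inserting $e$ between consecutive lifted arcs $a,b$ exactly when $h(a)=t(e)$ and $t(b)=h(e)$ in $G$; these insertions are \emph{forced} because $e$ being join-free implies $h(e)$ has in-degree $1$, so the only way to reach $h(e)$ is via $e$. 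The lift $C$ is arc-covering in $G$: since $C'$ is arc-covering in $c_e(G)$ and $G$ is strongly connected, $C$ must traverse some arc whose tail is $h(e)$ in $G$, which forces it to enter $h(e)$ via $e$.

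For the ($\Leftarrow$) direction, suppose $W$ is an omnitig of $G$ and set $W' := c_e(W)$. For any closed arc-covering walk $C'$ of $c_e(G)$, lift it to a closed arc-covering walk $C$ of $G$ using the above rule. Since $W$ is a subwalk of $C$ by the omnitig property, $W' = c_e(W)$ appears as a subwalk of $c_e(C) = C'$ (the $e$'s get uniformly deleted). Applying Theorem~\ref{thm:tomescu2017safe} to $c_e(G)$ then gives that $W'$ is an omnitig of $c_e(G)$.

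For the ($\Rightarrow$) direction, given an omnitig $W'$ of $c_e(G)$, define $W$ as the canonical lift of $W'$ to $G$, using the same forced-insertion rule. Then $c_e(W) = W'$ by construction. For any closed arc-covering walk $C$ of $G$, let $C' := c_e(C)$; by the omnitig property $W'$ is a subwalk of $C'$, and at the corresponding location inside $C$ the $e$-insertions demanded by $C$ being a walk in $G$ coincide exactly with those of the canonical lift. Hence $W$ appears as a subwalk of $C$, and applying Theorem~\ref{thm:tomescu2017safe} yields that $W$ is an omnitig of $G$.

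The main obstacle is making the canonical lift rigorous: showing that the insertions are uniquely forced (exploiting join-freeness of $e$ to eliminate any ambiguity at $v^*$), that the lift is a valid walk in $G$, and that the subwalk/endpoint structure is preserved in both directions of the transfer. A minor edge case to dispatch separately is when $G$ or $c_e(G)$ is a closed path, where Theorem~\ref{thm:tomescu2017safe} does not directly apply; there every walk is trivially an omnitig, and the lemma holds vacuously on both sides.
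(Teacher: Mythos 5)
Your proposal is correct and follows essentially the same route as the paper's proof: both reduce the statement to Theorem~\ref{thm:tomescu2017safe} together with the bijection of Observation~\ref{obs:contract_R-univoc}, transferring the subwalk relation between a walk $W'$ of $c_e(G)$ and its shortest (canonical) lift in $G$ across every closed arc-covering walk. Your write-up merely spells out the forced-insertion argument and the closed-path edge case that the paper leaves implicit.
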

\begin{proof}
Consider the shortest walk $\underline{W}$ of $G$ such that $W' = c_e(\underline{W})$.
Notice that the first and last arc of $\underline{W}$ are different than $e$. 
Moreover, $W'$ is an omnitig of $c_e(G)$ iff $\underline{W}$ is an omnitig of $G$.
Indeed, for every circular covering $C$ of $G$ it holds that $C$ avoids $\underline{W}$
iff $c_e(C)$ avoids $W'$.
\end{proof}

\begin{corollary}
\label{cor:max_omni_contracted}
Let $e$ be a join-free arc of $G$. A walk $W'$ of $c_e(G)$ is a maximal omnitig of $c_e(G)$ if and only if there exists a maximal omnitig $W$ of $G$ such that $W' = c_e(W)$.
\end{corollary}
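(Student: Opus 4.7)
The plan is to derive the corollary from Lemma \ref{lem:omni_contracted} combined with one structural observation about the join-free arc $e$. Since $e$ is the only arc entering $h(e)$, appending $e$ to an omnitig $W$ with $h(W) = t(e)$ preserves the omnitig property: the only new forbidden-path candidates (indexed by the final position) would end at $h(e)$ with last arc different from $e$, which is impossible. Symmetrically, prepending $e$ preserves omnitigs when the walk starts at $h(e)$. Consequently, for any walk $W'$ of $c_e(G)$ I can define its \emph{saturated preimage} $\widehat{W}$: start from the shortest preimage $\underline{W}$ (unique by Observation \ref{obs:contract_R-univoc}), prepend $e$ when $\underline{W}$ starts at $h(e)$, and append $e$ when $\underline{W}$ ends at $t(e)$. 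When $W'$ is an omnitig, both $\underline{W}$ and $\widehat{W}$ are omnitigs of $G$ mapping to $W'$ under $c_e$.

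For the forward direction of the corollary, suppose $W'$ is a maximal omnitig of $c_e(G)$ and take $W = \widehat{W}$. If $W$ admits a right-extension $Wf$ still an omnitig of $G$, then $f \neq e$ by construction ($W$ already contains $e$ wherever the adjacency permits), so $c_e(Wf) = W' f$ is an omnitig of $c_e(G)$ by Lemma \ref{lem:omni_contracted} and properly extends $W'$, contradicting its maximality. The symmetric argument handles left extensions, so $W$ is a maximal omnitig of $G$ with $c_e(W) = W'$.

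For the backward direction, let $W$ be a maximal omnitig of $G$ with $c_e(W) = W'$. Lemma \ref{lem:omni_contracted} gives that $W'$ is an omnitig. If $W'$ admitted a right-extension $W' e^*$ still an omnitig, its shortest preimage $V$ in $G$ would be an omnitig by Lemma \ref{lem:omni_contracted}. A short case analysis on whether $e^*$ descends from an outgoing arc of $t(e)$ or of $h(e)$ identifies $V$ as $\underline{W} e^*$ or $\underline{W} e e^*$, respectively, since by Observation \ref{obs:contract_R-univoc} the unique shortest preimage of $W'$ is $\underline{W}$. Saturating $V$ produces an omnitig of $G$ that strictly extends $W$, contradicting its maximality.

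The main obstacle is precisely this last step: since $W$ may differ from $\underline{W}$ by $e$-appendages at either endpoint, a small case split on which of $\underline{W}, e\underline{W}, \underline{W}e, e\underline{W}e$ equals $W$ is needed to confirm that the saturated extension of $V$ strictly extends $W$, not merely $\underline{W}$. Once this is verified in all four combinations, the corollary follows, and the symmetric version for $e$ split-free holds by reversing the graph.
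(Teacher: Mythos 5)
Your key structural observation is only half true, and the false half is exactly what the construction rests on. For a join-free arc $e$, \emph{prepending} $e$ to an omnitig that starts at $h(e)$ is indeed safe: the new forbidden-path candidates would have to end at $h(e)$ with last arc different from $e$, which is impossible. But \emph{appending} $e$ to an omnitig $W$ with $h(W)=t(e)$ creates new candidates that \emph{start} at $t(e)$ (they come from the pairs whose second index is the new last arc), and these are not excluded by $e$ being join-free; the correct one-sided facts are that join-free arcs can always be prepended and split-free arcs always appended (this is precisely the univocal extension $U(\cdot)$). A concrete counterexample: take $G$ with nodes $a,v,c,b,d$ and arcs $x\colon a\to v$, $e\colon v\to c$, $g'\colon v\to b$, $z\colon b\to v$, $y_1\colon c\to d$, $w_1\colon d\to b$, $y_2\colon d\to a$. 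Here $e$ is join-free and $x$ is an omnitig ending at $t(e)$, yet $xe$ is not an omnitig, since $g'z$ is a forbidden path from $t(e)$ to $h(x)$ with first arc $\neq e$ and last arc $\neq x$. Moreover, in $c_e(G)$ neither $xg'$ nor $xy_1$ is an omnitig (forbidden paths $y_1w_1z$ and $g'z$, respectively), so the maximal omnitig $W'$ of $c_e(G)$ obtained by extending $x$ to the left ends with $x$; its shortest preimage $\underline{W}$ ends at $t(e)$, and your saturated preimage $\widehat{W}\supseteq\underline{W}e$ is \emph{not} an omnitig. So the witness you propose in the forward direction does not exist in general, and the claim ``when $W'$ is an omnitig, $\widehat{W}$ is an omnitig'' is false even for maximal $W'$.

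The same unsupported step also carries your backward direction (``saturating $V$ produces an omnitig of $G$ that strictly extends $W$''), and you yourself leave the four-way case analysis open; for instance if $W=\underline{W}e$ and $V=\underline{W}e^{\ast}$ with $t(e^{\ast})=t(e)$, the saturation of $V$ need not contain $W$ at all, so the contradiction is not established. The paper avoids constructing the maximal preimage explicitly: for one direction it takes the shortest preimage $\underline{W}$ of the maximal omnitig $W'$ (an omnitig by \Cref{lem:omni_contracted}), extends it to an \emph{arbitrary} maximal omnitig $\overline{W}$ of $G$, and uses the maximality of $W'$ to force $c_e(\overline{W})=W'$; for the other, an omnitig of $c_e(G)$ strictly containing $c_e(W)$ would lift, again via \Cref{lem:omni_contracted}, to an omnitig of $G$ properly containing $W$. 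To repair your argument you would either have to prove a correct substitute for the saturation claim (my example shows the natural one fails) or fall back on this indirect maximality argument.
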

\begin{proof}
Let $W$ be a maximal omnitig of $G$.
Then $c_e(W)$ is an omnitig of $c_e(G)$ by \Cref{lem:omni_contracted}.
Moreover, if $W'$ was an omnitig of $c_e(G)$
strictly containing $c_e(W)$, then there would exist an omnitig $\overline{W}$ of $G$ such that $W' = c_e(\overline{W})$, by \Cref{lem:omni_contracted}.
Clearly, $\overline{W}$ would contain $W$ and contradict its maximality.
Therefore, $c_e(W)$ is a maximal omnitig of $c_e(G)$.

For the converse, let $W'$ be a maximal omnitig of $c_e(G)$.
Let $\underline{W}$ be the shortest and unique minimal walk of $G$ such that $W' = c_e(\underline{W})$.
By \Cref{lem:omni_contracted}, $\underline{W}$ is an omnitig of $G$.
Let $\overline{W}$ be any maximal omnitig of $G$ containing $\underline{W}$.
We claim that $c_e(\overline{W}) = W' = c_e(\underline{W})$, which concludes the proof.
If not, then $c_e(\overline{W})$ would strictly contain $W'$ and contradict its maximality since also $c_e(\overline{W})$ would be an omnitig of $c_e(G)$ by \Cref{lem:omni_contracted}. 
\end{proof}

\begin{restatable}{lemma}{transformationpreservesomnitigs}
\label{lem:transformation_preserves_omnitigs}
Let $G$ be a graph and let $G'$ be the graph obtained by applying \Cref{tra:const_deg} to $G$. Then a walk $W$ of $G$ is a maximal omnitig of $G$ if and only if there exists a maximal omnitig $W'$ of $G'$ such that $W$ is the string obtained from $W'$ by suppressing all the arcs introduced with the transformation.
\end{restatable}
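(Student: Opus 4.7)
The plan is to reduce \Cref{tra:const_deg} to an iterated application of \Cref{cor:max_omni_contracted}. First, I would observe that the original graph $G$ is recovered from $G'$ by contracting, in any order, the $k-2$ new arcs $e'_1, \ldots, e'_{k-2}$, and that each of these arcs is join-free at the moment of its contraction. Indeed, in $G'$ the head $v_{i+1}$ of $e'_i$ has in-degree exactly one (only $e'_i$ enters it, since the original in-arcs of $v$ remain attached to $v_1$), and contracting a join-free arc identifies its head with its tail while removing only the contracted arc itself from the merged node's in-neighborhood, so it cannot turn any previously join-free arc into a join arc. This gives the join-freeness hypothesis of \Cref{cor:max_omni_contracted} at every step of the iteration.

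Second, I would apply \Cref{cor:max_omni_contracted} once per contraction and compose the resulting statements. The ``only if'' direction gives that every maximal omnitig $W$ of $G$ is the image, under the composed contractions, of some maximal omnitig $W'$ of $G'$; the ``if'' direction gives that the image of every maximal omnitig of $G'$ under the composed contractions is a maximal omnitig of $G$. Since the composition of the $k-2$ single-arc contractions is exactly the operation of suppressing every occurrence of the arcs $e'_1, \ldots, e'_{k-2}$ in a walk, this composition produces precisely the correspondence asserted in the statement of the lemma. Note that we do not need injectivity of the contraction map on maximal omnitigs, only the existence of a maximal-omnitig preimage, which is exactly what the corollary provides.

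The step I would be most careful about is the iteration itself. Concretely, I would either perform an explicit induction on the number of remaining $e'_j$'s to contract, or isolate a small preliminary observation stating that contracting a join-free arc preserves the join-freeness of every other arc (and more generally preserves the in-degree-one property of any head node not equal to the tail of the contracted arc). With that observation in hand, the induction is routine: the base case is \Cref{cor:max_omni_contracted} applied to $e'_1$, and the inductive step combines the corollary for $e'_{i+1}$ in the already-contracted graph with the induction hypothesis on $G'$ with $e'_1, \ldots, e'_i$ suppressed. The symmetric case reducing in-degree is handled identically by working in $G^R$.
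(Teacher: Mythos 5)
Your proposal is correct and follows essentially the same route as the paper: the paper's proof simply observes that $G$ is recovered from $G'$ by contracting each introduced arc and then invokes \Cref{cor:max_omni_contracted}, leaving the iteration and the join-freeness of the new arcs implicit, which are exactly the details you spell out (each $e'_i$ has head of in-degree one, and contracting it preserves this for the remaining new arcs). No gap; your version is just a more explicit write-up of the paper's argument.
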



\begin{proof}
Notice that $G$ is obtained by applying $c_e$ to each arc $e$ introduced by \Cref{tra:const_deg}, that is, to each arc of $G'$ that is not an arc of $G$. Notice that $W$ is the string obtained from $W'$ by suppressing all the arcs introduced with the transformation if and only if $W$ is obtained from $W'$ by contracting each arc $e$ introduced by \Cref{tra:const_deg}. 
Apply \Cref{cor:max_omni_contracted}.
\end{proof}

\subsection{Compression}
\label{subsec:compression}

We start by recalling the definition of compressed graph.

\begin{definition}[Compressed graph]
\label{def:compressed-graph}
    A graph $G$ is \emph{compressed} if it contains no biunivocal nodes and no biunivocal arcs.
\end{definition}

To obtain a compressed graph, we introduce two transformations. The first one removes biunivocal nodes, by replacing those paths whose internal nodes are biunivocal with a single arc from the tail of the path to its head (see \Cref{fig:transf} for an example).

\begin{transf}
	\label{tra:unitig_comp}
	Given $G$, for every longest path $P = (v_0,e_0,\dots,e_{\ell-1},v_\ell)$, $\ell \geq 2$, such that $v_1,\dots,v_{\ell-1}$ are biunivocal nodes, we remove $v_1,\dots,v_{\ell-1}$ and their incident arcs from $G$, and we add a new arc from $v_0$ to $v_{\ell}$.
\end{transf}

This transformation is widely used in the genome assembly field, and it clearly preserves the maximal omnitigs of $G$: if $P= (v_0, e_0, \ldots, e_{\ell-1}, v_\ell), \ell \geq 2$ is a path where $v_1, \ldots, v_{\ell-1}$ are biunivocal nodes, in any closed arc-covering walk of $G$, whenever $e_0$ appears it is always followed by $e_1,\dots,e_\ell$. 

The last transformation contracts the biunivocal arcs of the graph (see \Cref{fig:transf} for an example).

\begin{transf}
	\label{tra:alone_comp}
	Given $G$, we contract every biunivocal arc $e$, namely we set $t(e') = t(e)$ for every out-going arc from $h(e)$ and remove the node $h(e)$.
\end{transf}
Also this transformation preserves the maximal omnitigs of $G$ because every maximal omnitig which contains an endpoint of $e$, also contains $e$. Notice that after \Cref{tra:unitig_comp,tra:alone_comp}, the maximum in-degree and the maximum out-degree are the same as in the original graph.

In the remainder of this section we prove some lemmas stated in \Cref{sec:macronode-macrotigs}.

\begin{figure}[t!]
	\centering
	\includegraphics[scale=0.8]{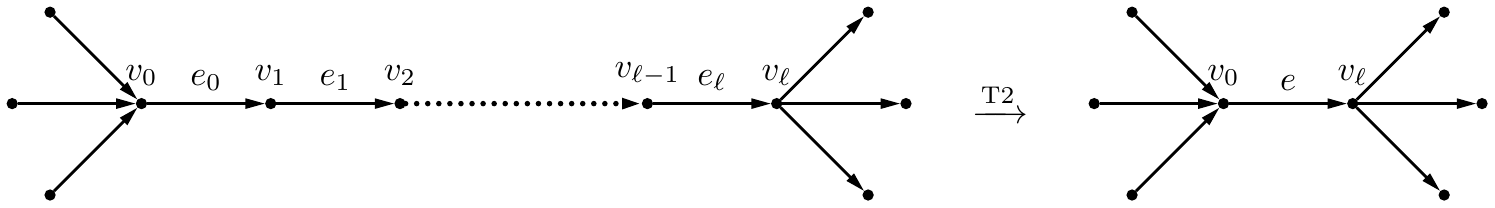}
	\includegraphics[scale=0.8]{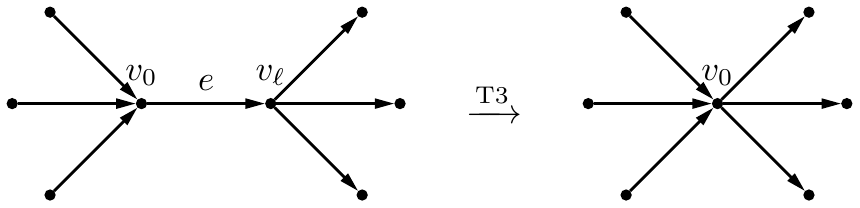}
	\caption{\small
		An example of \Cref{tra:unitig_comp} ($T2$) applied to the path $P = (v_0,e_0,\dots,e_\ell,v_\ell)$, where $v_1,\dots,v_{\ell-1}$ are biunivocal nodes and $e$ is the new arc from $v_0$ to $v_\ell$. The \Cref{tra:alone_comp} ($T3$) compresses biunivocal arcs.
	}
	\label{fig:transf}
\end{figure}

\macropartition*
\begin{proof}
For $i)$, let $u$ and $v$ be distinct bivalent nodes and suppose that there exists $x \in V(\macronode{u}) \cap V(\macronode{v})$.
W.l.o.g., assume $x$ is a join node (the case where $x$ is a split node is symmetric).
By definition, $x \in R^-(u) \cap R^-(v)$ holds. Let $P_u$ and $P_v$ be split-free paths from $x$ to $u$ and to $v$, respectively. Notice that $x$ can not be a bivalent node, since otherwise from $x$ no split-free path can start. Since the out-degree of $x$ is one, $P_u$ and $P_v$ share a prefix of length at least one, but since $u$ and $v$ are distinct bivalent nodes, $P_u$ and $P_v$ differ by at least one arc. Let $e$ be the first arc such that $e \in P_u$, but $e \notin P_v$, and let $e'$ be its sibling arc, with $e' \notin P_u$, but $e' \in P_v$. Notice that $t(e) = w$ is a join node, since it belongs to split-free paths, but it also has out-degree two, since $w = t(e) = t(e')$; hence $w$ is an internal bivalent node of split-free paths, a contradiction.

Properties $ii)$ and $iii)$ trivially follow from the definition of macronode.
\end{proof}

\omnitigbivnodearc*
\begin{proof}
Consider an omnitig $W$ composed only of split-free arcs.
Notice first that $W$ is a path.
Consider any arc $e$, with $h(e) = t(W)$ and observe that $eW$ is an omnitig, since the only out-going arcs of internal nodes of $eW$ are arcs of $eW$; thus there is no forbidden path between any two internal nodes of $eW$.
Therefore, $W$ is not a maximal omnitig.
Symmetrically, no maximal omnitig is composed only of join-free arcs.
This already implies the first claim in the statement:
any maximal omnitig $W$ contains at least one join arc $f$ and at least one split arc $g$. If $f=g$ then $W$ contains the bivalent arc $f$.
Otherwise, either $W$ contains a subwalk of the form $fW'g$
or it contains a subwalk of the form $gW'f$, where $W'$ might be an empty walk.
In the first case $W$ has an internal node which is bivalent, by \Cref{obs:joinsplitarcs}$(i)$.
In the second case $W$ contains a bivalent arc, by \Cref{obs:joinsplitarcs}$(ii)$.
\end{proof}

\noftwice*

\begin{proof}
By symmetry, we only consider the case of two sibling split arcs $g$ and $g'$. Since prefixes and suffixes of omnitigs are omnitigs, then a minimal violating omnitig would be of the form $gZg$, with $g \notin Z$. Since $G$ is strongly connected, then there exists a simple cycle $C$ of $G$ with $g' \in C$ and with $g'$ as its first arc. Notice that $g \notin C$, since $C$ is simple. Consider then the first node $u$ shared by both $C$ and $Z$, and let $e$ be the arc of $C$ with $h(e) = u$. Clearly, $e \notin Z$; in addition, $e \neq g$, since $C$ is a path. Let $C_u$ represent the prefix of $C$ ending in $u$. Therefore, $C_u$ is a forbidden path for the omnitig $gZg$, since it starts from $t(g) = t(g')$, with $g' \neq g$, and it ends in $u$ with $e \notin Z$.
\end{proof}

\nobivntwice*

\begin{proof}
Suppose for a contradiction, there exist an omnitig $W$ that contains $u$ twice as internal node. Since $u$ is an internal node of $W$, we can distinguish the case in which an omnitig contains twice a central-micro omnitig that traverses $u$, and the case in which an omnitig contains both the central-micro omnitigs that traverse $u$.
In the first case, let $fg$ be the central-micro omnitig of an omnitig $W$ that traverses $u$. Notice that $f$ is a join arc contained twice in $W$, contradicting \Cref{lem:noftwice}.
In the latter case, let $f_1g_1$ and $f_2g_2$ the two central-micro omnitigs that traverse $u$, with $f_1 \neq f_2$ and $g_1 \neq g_2$. Consider $W$ to be a minimal violating omnitig of the form $f_1g_1 \bar{W} f_2g_2$. Notice that $u \notin \bar{W}$, by minimality; hence $g_1 \bar{W} f_2$ is a forbidden path, contradicting $W$ being an omnitig.   
\end{proof}

\subsection{Maximal omnitig enumeration for non-constant degree}\label{subsec:constant-degree-enumeration}

Given the input strongly connected graph $G$ with $m$ arcs, and non-constant degree, denote by $G'$ the graph with constant in-degree and out-degree obtained by applying \Cref{tra:const_deg} and its symmetric. The trivial strategy to obtain the set of maximal omnitigs of $G$, given the set of maximal omnitigs $G'$, is to:
\begin{enumerate}
    \item Contract in the maximal omnitigs all the arcs which were introduced by \Cref{tra:const_deg}.
    \item Remove any duplicate omnitig which may occur due to this contraction (i.e., two different maximal omnitigs in $G'$ which result in the same walk in the $G$, after the contraction).
\end{enumerate}
In general, the above procedure may require more than linear time in the final output size, recall \Cref{fig:macrotigs-superlinear}.

We avoid this, as follows. Let $\mathcal{M}$ and $\mathcal{M}'$ denote the set of maximal macrotigs of $G$ and $G'$, respectively, and let $\mathcal{F}$ and $\mathcal{F}'$ denote the set of bivalent arcs not appearing in any macrotig, of $G$ and $G'$, respectively (recall \Cref{thm:macrotigs-main}). 

First, since $G'$ has $O(m)$ arcs, then also the maximal macrotigs $\mathcal{M}'$ have total length $O(m)$, and both $\mathcal{M}'$ and $\mathcal{F}'$ can be obtained in $O(m)$ time. From $\mathcal{M}'$, one can obtain $\mathcal{M}$ in time $O(m)$, by contracting the arcs introduced by the transformation. However, while contracting such arcs, we must keep track of the pair of arcs $(f,g)$ corresponding to maximal omnitigs, as follows.

We modify \Cref{alg:maximalomnitigs} to also report, for each macrotig $X'$ of $G'$ and for each maximal omnitig of the form $U(X'[f .. g])$ (in the order they were generated by the algorithm), the indexes of the arcs $f$ and $g$ in $X'$. We now contract the arcs of $X'$ by removing from $X'$ every occurrence of the arcs introduced by the transformation, and updating the indexes of $f$ and $g$ so that they still point at the first and last arc of the walk obtained from $X'[f .. g]$, after the contraction. Second, to avoid duplicates, we scan the pair of indexes of $f$ and $g$ along each macrotig, and remove any duplicated pair (if duplicates are present, they must occur consecutively, and thus they can be removed in linear time).

Second, the transformations do not introduce bivalent arcs, thus $\mathcal{F} = \mathcal{F}'$. This also implies that the arcs introduced by the transformation appear either inside macrotigs, or inside univocal extensions $U(\cdot)$. Having the set of maximal macrotigs $\mathcal{M}$ and the new arc pairs $(f,g)$ inside the maximal macrotigs in $\mathcal{M}$, it now suffices to perform the univocal extensions $U(\cdot)$ inside the original graph $G$.

\section{Acknowledgments}

We thank Sebastian Schmidt for useful comments, including the observation that the bound on the total length of all maximal macrotigs can be improved to $O(n)$ (from $O(m)$ initially), Shahbaz Khan for helpful discussions and comments, and Bastien Cazaux for discussions on the shortest superstring problem. This work was partially funded by the European Research Council (ERC) under the European Union's Horizon 2020 research and innovation programme (grant agreement No.~851093, SAFEBIO) and by the Academy of Finland (grants No.~322595, 328877).

\bibliography{bibliography}

\appendix

\section{Bioinformatics motivation}
\label{sec:bioinfo-motivation}

As mentioned in the Introduction, closed arc-covering walks were considered in \cite{TomescuMedvedev,tomescu2017safe} motivated by the genome assembly problem from Bioinformatics. We briefly review that motivation here, for the sake of completeness, and further refer the reader also to \cite{nagarajan2013sequence,DBLP:books/cu/MBCT2015}.

The genome assembly problem asks for the reconstruction of a genome string from a set $R$ of short strings (\emph{reads}) sequenced from the genome. From the read data, one usually builds a graph, and models the genome to be assembled as a certain type of walk in the graph. 

One of the most popular types of graphs is the so-called \emph{de Bruijn} graph. For a fixed integer $k$, shorter than the read length, the \emph{de Bruijn graph of order $k$} is obtained by adding a node for every $k-1$-mer (i.e.~distinct string of length $k-1$) appearing in the reads. Moreover, for every $k$-mer in the reads, one also adds an arc from the node representing its length-$(k-1)$ prefix to the node representing its length-$(k-1)$ suffix. See~\Cref{fig:db-example-1}.

\begin{figure}
    \begin{subfigure}[t]{\textwidth}
    \centering
    \includegraphics[width=0.7\textwidth]{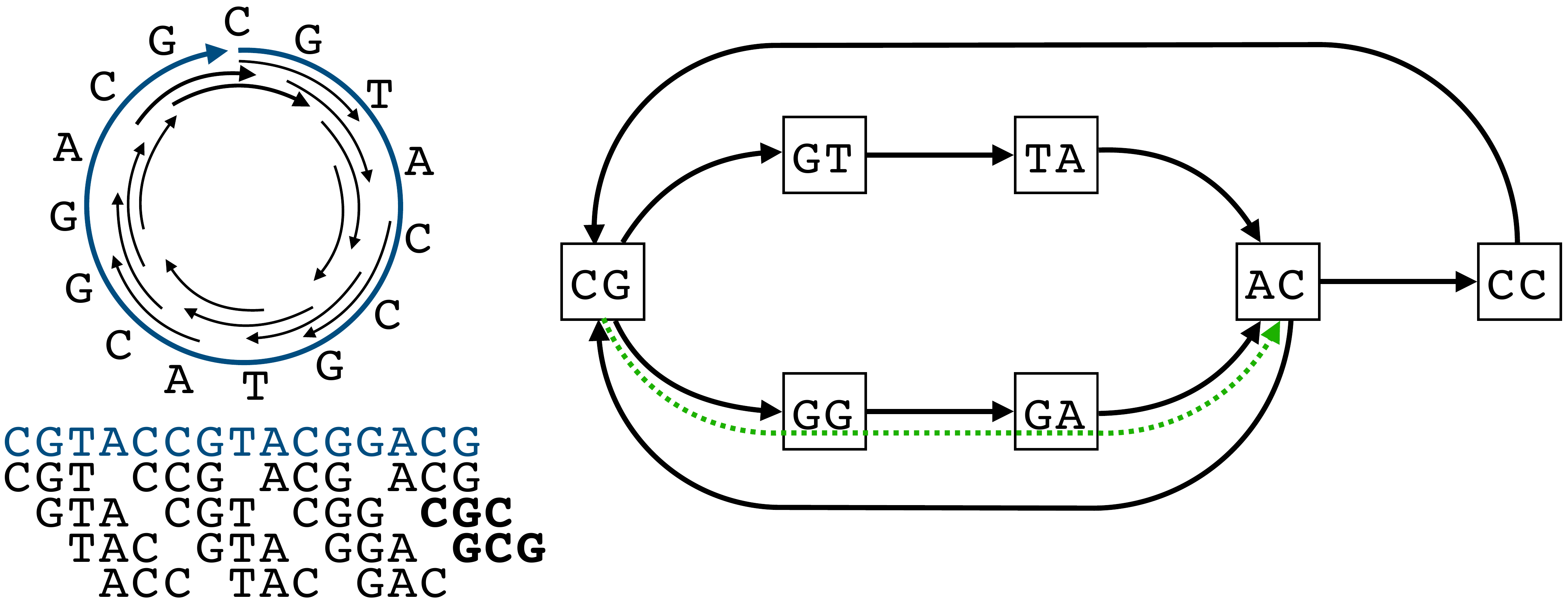}
    \caption{\small Top left: a circular string and a set of strings (\emph{reads}) of length 3 sequenced from it, drawn as black arrows. Bottom left: The same circular string now shown as linearized (in blue) and the same set of reads (in black) sequenced from it. Notice that the two reads in bold overlap the beginning and the end of the circular string. Right: the de Bruijn graph $G$ of order $k=3$ built from the set of reads. Every  distinct $(k-1)$-mer (string of length $k-1$) appearing in the reads is a node, and every $k$-mer appearing in the reads is an edge. We show as dotted-green a path appearing in all closed arc-coverings of $G$. We show the same path also as a substring of the two circular strings from~\Cref{fig:db-example-2,fig:db-example-3}.)\label{fig:db-example-1}}
    \end{subfigure}
    
	\medskip
    
    \begin{subfigure}[t]{\textwidth}
    \centering
    \includegraphics[width=0.7\textwidth]{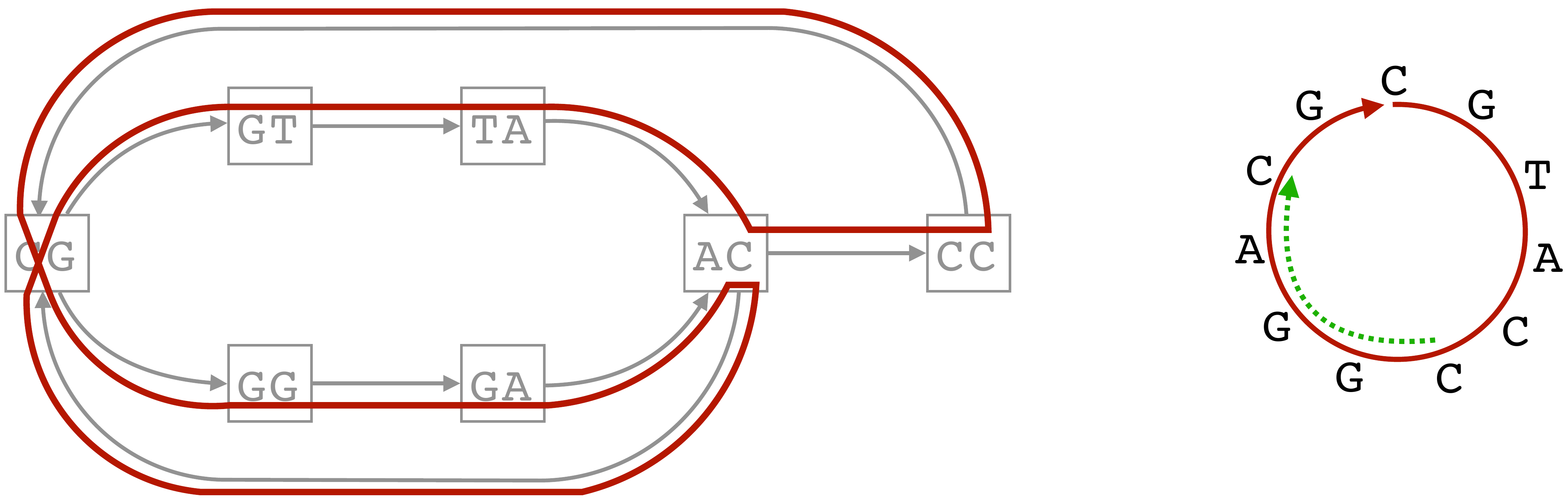}
    \caption{\small Left: In red, a closed arc-covering of $G$, which is also Eulerian. On the right, the circular string ``spelled'' by it, obtained by naturally reading the $k$-mers of its edges and merging their $(k-1)$-length overlaps. Observe that this string has the same \emph{set} of $k$-mers as the string from~\Cref{fig:db-example-1}, but is different from it.\label{fig:db-example-2}}
    \end{subfigure}
    
    \begin{subfigure}[t]{\textwidth}
    \centering
    \includegraphics[width=0.7\textwidth]{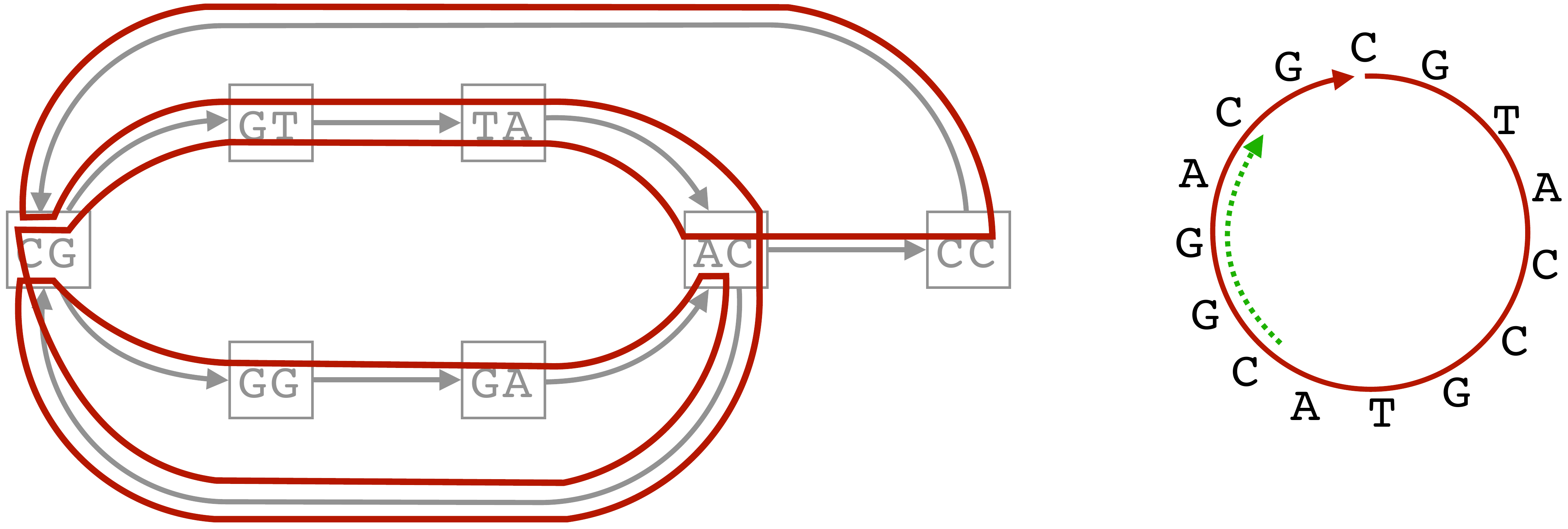}
    \caption{\small Left: In red, a closed arc-covering of $G$ spelling the same string as the circular string from~\Cref{fig:db-example-1}.\label{fig:db-example-3}}
    \end{subfigure}
    
    \caption{\small An example of a circular string, a de Bruijn graph built from a set of reads sequenced from it, two closed arc-covering walks of the graph, and a safe path appearing in all closed arc-covering walks of the graph. This safe path is biunivocal, but, as exemplified in \Cref{fig:macronodes}, other graphs can admit more complex safe walks.\label{fig:db-example}}
\end{figure}

Given such a graph, there are various ways of modeling the genome assembly solution. Assuming that the genome is circular (like in the case of most bacteria), a basic approach is to model it as a \emph{closed} Eulerian walk in the graph. Recall that, since also arcs correspond to substrings of the reads, it makes sense for the genome assembly solution to ``explain'' the arcs.\footnote{As also \cite{TomescuMedvedev,tomescu2017safe} notice, ``explaining'' arcs or nodes is mostly immaterial for de Bruijn graphs.} However, the closed Eulerian walk model is very restrictive, because of the ``exactly once'' covering requirement (in practice, the graph will not even admit a closed Eulerian walk). Another model considered in the genome assembly literature (see e.g.~\cite{nagarajan2009parametric}), and overcoming this issue, is that of a \emph{shortest} closed arc-covering walk of the graph (the Chinese Postman Problem). However, this still presents practical problems, since e.g., it collapses repeated substrings of the genome due to the minimum length requirement.

The interesting feature of both of these types of walks is that the string ``spelled'' by them (i.e., by naturally reading and merging the $k$-mers of the walk) has exactly the same \emph{set} of $k$-mers as the reads (since every $k$-mer in the reads corresponds to exactly one arc of the graph). This lead \cite{TomescuMedvedev,tomescu2017safe} to notice that closed arc-covering walks (trivially generalizing both closed Eulerian walks, and shortest closed arc-covering walk) are exactly those walks in the de Bruijn graph spelling strings with this property. Assuming that the read data is complete and error-free, then any closed arc-covering walks is a possible and valid genome assembly solution (unless also other type of data is added to the assembly problem). See~\Cref{fig:db-example}.

Looking for \emph{safe walks} with respect to closed arc-covering walks is motivated by the practical approach behind state-of-the-art genome assemblers. Such assembly programs do not report entire genome assembly solutions, because there can be an enormous number of them~\cite{kingsford2010assembly}. Instead, they report shorter strings, called \emph{contigs}, which should correspond to correct substrings of the genome. 

In most cases, and after some correction steps on the assembly graph, most genome assemblers output as contigs those strings spelling \emph{unitigs}, namely maximal biunivocal paths. Notice that unitigs appear in all closed arc-covering walks of a graph. As such,~\cite{TomescuMedvedev,tomescu2017safe} asked what are \emph{all} the safe walks (generalizing thus unitigs) for closed arc-covering walks. The answer to this question are omnitigs. The preliminary experimental results from~\cite{TomescuMedvedev,tomescu2017safe} show that under ``perfect'' conditions (complete and error-free read data), the strings spelled by omnitigs compare very favorably to unitigs, both in terms of length, and of biological content. We refer the reader to~\cite{TomescuMedvedev,tomescu2017safe} for further details.



\end{document}